\title{On adaptive algorithms for maximum matching}
\author{Falko Hegerfeld}{Humboldt-Universit\"at zu Berlin, Germany}{hegerfeld@informatik.hu-berlin.de}{}{}
\author{Stefan Kratsch}{Humboldt-Universit\"at zu Berlin, Germany}{kratsch@informatik.hu-berlin.de}{}{}
\authorrunning{F. Hegerfeld and S. Kratsch}
\keywords{Matchings, Adaptive Analysis, Parameterized Complexity}
\theoremstyle{definition}
\newtheorem{dfn}{Definition}[section]
\theoremstyle{plain}
\newtheorem{thm}[dfn]{Theorem}
\newtheorem{cor}[dfn]{Corollary}
\newtheorem{lem}[dfn]{Lemma}
\crefname{dfn}{Definition}{Definitions}
\crefname{thm}{Theorem}{Theorems}
\crefname{cor}{Corollary}{Corollaries}
\crefname{lem}{Lemma}{Lemmata}
\crefname{prop}{Proposition}{Propositions}
\crefname{rem}{Remark}{Remarks}
\crefname{algorithm}{Algorithm}{Algorithms}
\crefname{algocf}{Algorithm}{Algorithms}
\newcommand{\Oh}{\mathcal{O}} 
\newcommand{\class}{\mathcal{C}} 
\newcommand{\plex}[1]{{\mathcal{P}[#1]}} 
\newcommand{\repl}[1]{{\mathcal{R}[#1]}} 
\newcommand{\dist}{\mathrm{d}} 
\newcommand{\idx}{\mathcal{I}} 
\DeclareMathOperator{\nd}{nd} 
\DeclareMathOperator{\mw}{mw} 
\DeclareMathOperator{\md}{md} 
\newcommand{\NP}{\ensuremath{\textsf{NP}}\xspace}
\begin{document}

\maketitle

\begin{abstract}
In the fundamental \textsc{Maximum Matching} problem the task is to find a maximum cardinality set of pairwise disjoint edges in a given undirected graph. The fastest algorithm for this problem, due to Micali and Vazirani, runs in time $\Oh(\sqrt{n}m)$ and stands unbeaten since 1980. It is complemented by faster, often linear-time, algorithms for various special graph classes. Moreover, there are fast parameterized algorithms, e.g., time $\Oh(km\log n)$ relative to tree-width $k$, which outperform $\Oh(\sqrt{n}m)$ when the parameter is sufficiently small.

We show that the Micali-Vazirani algorithm, and in fact any algorithm following the phase framework of Hopcroft and Karp, is adaptive to beneficial input structure. We exhibit several graph classes for which such algorithms run in linear time $\Oh(n+m)$. More strongly, we show that they run in time $\Oh(\sqrt{k}m)$ for graphs that are $k$ vertex deletions away from any of several such classes, without explicitly computing an optimal or approximate deletion set; before, most such bounds were at least $\Omega(km)$. Thus, any phase-based matching algorithm with linear-time phases obliviously interpolates between linear time for $k=\Oh(1)$ and the worst case of $\Oh(\sqrt{n}m)$ when $k=\Theta(n)$. We complement our findings by proving that the phase framework by itself still allows $\Omega(\sqrt{n})$ phases, and hence time $\Omega(\sqrt{n}m)$, even on paths, cographs, and bipartite chain graphs.
\end{abstract}

\section{Introduction}

The objective in the fundamental \textsc{Maximum Matching} problem is to find a set of disjoint edges of maximum cardinality in a given undirected graph $G=(V,E)$. \textsc{Maximum Matching} has been heavily studied and was the first problem for which a polynomial-time algorithm has explicitly been established \cite{edmonds1965paths}. Several algorithms \cite{blum1990new, gabow1991faster, goldbergkarzanov, micalivazirani} achieve the best known running time of $\Oh(\sqrt{n}m)$ for graphs with $n$ vertices and $m$ edges, starting with the algorithm of Micali and Vazirani~\cite{micalivazirani} in 1980. Since then, the time of $\Oh(\sqrt{n}m)$ remains unbeaten.

This state-of-the-art has motivated extensive research into faster algorithms for \textsc{Maximum Matching} on special inputs: Extensive effort went into beating the worst case running time of $\Oh(\sqrt{n}m)$ for \textsc{Maximum Matching} on special graph classes, resulting in a large number of publications \cite{dragan1997greedy, fouquet1999bipartite, fouquet1997n, gardi2003efficient, glover1967maximum, hopcroftkarp, liang1993finding, madry2013navigating}, often even obtaining linear-time algorithms \cite{chang1996algorithms, dahlhaus1998matching, mertzios2018linear, steiner1996linear, yu1993n}. Similarly, there is a great variety of algorithms whose running time depends on $n$ and $m$ but also on some structural parameter of the input graph, like its tree-width, its genus, or its vertex-deletion distance to a certain graph class (summarized in Table~\ref{tab:known_results}). As an example, one can solve \textsc{Maximum Matching} in time $\Oh(k(n+m))$ when the input graph $G$ is given together with a set $S$ of $k$ vertices such that $G-S$ belongs to a class $\class$ in which \textsc{Maximum Matching} can be solved in linear time (cf.~\cite{mertzios2018linear}): It suffices to solve the problem on $G-S$ in linear time and to then apply at most $k$ augmentation steps to account for vertices in $S$; each such step can be implemented in linear time.

A caveat of this great number of different algorithms for special cases is that we may have to first find the relevant structure, e.g., a set $S$ so that $G-S$ belongs to a certain graph class $\class$, and to then decide which algorithm to apply. In some cases, finding the relevant optimal structure is \NP-hard and using approximate structure may lead to increase in running time. Moreover, except for time $\Oh(\sqrt{k}m)$ relative to vertex cover number $k$ or maximum matching size $k$, which can be seen to follow from the general analysis of Hopcroft and Karp~\cite{hopcroftkarp}, the previously known time bounds improve on $\Oh(\sqrt{n}m)$ only if $k=\Oh(\sqrt{n})$, at best.

\subparagraph{Our results.}
We approach the \textsc{Maximum Matching} problem from the perspective of \emph{adaptive analysis (of algorithms)}. Rather than developing further specialized algorithms for special classes of inputs, we prove that a single algorithm actually achieves the best time bounds relative to several graph classes and parameters; in particular, several new or improved time bounds are obtained. Moreover, that algorithm is known since 1980, namely it is the Micali-Vazirani-algorithm~\cite{micalivazirani}, and it is oblivious to the actual structure and parameter values. In fact, our analysis does not depend on overly specific aspects of that algorithm and, rather, applies to any algorithm for \textsc{Maximum Matching} that follows the ``phase framework'' established by Hopcroft and Karp~\cite{hopcroftkarp} for \textsc{Bipartite Matching} and that implements each phase in linear time, e.g., the algorithms by Blum~\cite{blum1990new} and Goldberg and Karzanov~\cite{goldbergkarzanov}. In this framework, each phase is dedicated to finding a disjoint and maximal packing of shortest augmenting paths, and it can be shown that $\Oh(\sqrt{n})$ phases always suffice (cf.~\cite{hopcroftkarp}).

  \begin{table}
    \centering
    \caption{The second column shows the running times of dedicated algorithms
    for \textsc{Maximum Matching}
    on restricted inputs, which follow as special cases of previous work (see Table~\ref{tab:known_results}). Columns three and four show our results for algorithms employing the phase framework with linear-time phases. Only vertex cover number and matching number had known time $\Oh(\sqrt{k}m)$, others had $\Omega(km)$ prior to our work. The parameters $\mw$ and $\md$ refer to modular-width and modular-depth.}
    \begin{tabular}{lccc}
      \toprule

					& Dedicated algorithm	& \multicolumn{2}{c}{Phase framework algorithms} \\
      Parameter / Graph class			& Parameter $s$ 	& Parameter $s$			& Dist.\ $k$ to parameter $s$ \\
      \cmidrule(r){1-1}
      \cmidrule(lr){2-2}
      \cmidrule(l){3-4}
      Vertex cover number				& $\Oh(\sqrt{s}m)$	& $\Oh(\sqrt{s}m)$	& n.a.\\
      Star forest					& $\Oh(m)$	& $\Oh(m)$			& $\Oh(\sqrt{k}m)$ \\
      Bounded tree-depth				& $\Oh(m)$	& $\Oh(m)$			& $\Oh(\sqrt{k}m)$ \\
      Cluster graph						& $\Oh(m)$	& $\Oh(m)$		& $\Oh(\sqrt{k}m)$ \\
      Minimum degree $n - s$				& $\Oh(sn^2\log n)$	 & $\Oh(sm)$		& $\Oh(\sqrt{k}sm)$ \\
      Independence number				& none	& $\Oh(sm)$				& $\Oh(\sqrt{k}s^2m)$ \\
      Neighborhood diversity				& $\Oh((s^2\log s)n + m)$	& $\Oh(sm)$	 & $\Oh(\sqrt{k}sm)$ \\
      Parameter $\mw$ and $\md$					& $\Oh((\mw^2\log \mw)n + m)$	& $\Oh((c \mw)^{\md}m)$& $\Oh(\sqrt{k}(c \mw)^{\md}m)$\\

      \bottomrule
    \end{tabular}
    \label{tab:positive_results}
  \end{table}

We show that algorithms following the phase framework adapt to beneficial structure in the form of inputs from special graph classes or inputs that are few vertex deletions away from such a class, without running a recognition algorithm or computing the deletion distance (i.e., they are obliviously adaptive). Concretely, we show that any such algorithm solves \textsc{Maximum Matching} in linear time on several graph classes such as cluster graphs or graphs of bounded neighborhood diversity. Moreover, for many such classes we also show that any such algorithm takes time $\Oh(\sqrt{k}m)$ on graphs that are $k$ vertex deletions away from the class, without explicitly computing such a set of deletions (or even knowing the class in question). Furthermore, this running time interpolates between the worst-case time $\Oh(\sqrt{n}m)$ for $k \in \Theta(n)$ and linear time for $k \in \Theta(1)$, hence remaining competitive even in the absence of beneficial input structure. Except for the matching number and the vertex cover number, time bounds of the form $\Oh(\sqrt{k}m)$ are new, even for dedicated algorithms. Besides that, we improve upon the algorithm by Yuster~\cite{yuster2013maximum} for the special case of minimum degree $n - s$ and we improve upon the algorithm by Kratsch and Nelles~\cite{kratschnelles} for the special case of bounded neighborhood diversity. Our positive results are summarized in Table~\ref{tab:positive_results}.

We complement our findings by exhibiting several graph classes on which the phase framework still allows the worst-case of $\Theta(\sqrt{n})$ phases, and hence $\Theta(\sqrt{n}m)$ time. We prove this for paths, trivially perfect graphs, which are a subclass of cographs, and bipartite chain graphs (and hence their superclasses). This, of course, does not contradict the existence of dedicated linear-time algorithms nor the possibility of tweaking a phase-based algorithm to avoid the obstructions. Nevertheless, these results do rule out the possibility of proving adaptiveness of arbitrary phase-based algorithms, and they showcase obstructions that need to be handled to obtain more general adaptive algorithms for \textsc{Maximum Matching}.

\subparagraph{Related work.}
Our work fits into the recent program of ``FPT in P''\footnote{An FPT-algorithm solves a given (usually \NP-hard) problem in time $f(k)n^c$ where $k$ is some problem-specific parameter and $n$ is the input size.} or efficient parameterized algorithms, initiated independently by Abboud et al.~\cite{abboud2016approximation} and Giannopoulou et al.~\cite{giannopoulou2017polynomial}. This program seeks to apply the framework of parameterized complexity to tractable problems to obtain provable running times relative to certain parameters that outperform the fastest known algorithms or (conditional) lower bounds obtained in the fine-grained analysis program (see, e.g.,~\cite{AbboudWY15, Bringmann14, PatrascuW10}). In particular, Mertzios et al.~\cite{mertzios2018linear} have suggested \textsc{Maximum Matching} as the ``drosophila'' of FPT in P, i.e., as a central subject of study, similar to the role that \textsc{Vertex Cover} plays in parameterized complexity. Already, there is a large number of publications on parameterized algorithms for \textsc{Maximum Matching}~\cite{coudertcliquewidth, ducoffepopasplitwidth, ducoffepopapruned, fomin2018fully, kratschnelles, mnn}, apart from large interest in FPT in P in general~\cite{abboud2016approximation, bentert2017parameterized, fluschnik2017can, husfeldtgraphdistances, iwatatreedepth, kellerhals2018parameterized}. There has also been interest in linear-time preprocessing, which, relative to some parameter $k$, reduces the problem to solving an instance of size $f(k)$ and leads to time bounds of the form $\Oh(n+m+g(k))$~\cite{mnnKernel}.

  \begin{table}
    \centering
    \caption{Known parameterized algorithms for \textsc{Maximum Matching}, $k$ denotes the corresponding parameter value and $\omega$ is the matrix multiplication constant.}
    \begin{tabular}[h]{ccc}
      \toprule

      Parameter 		& Running Time 							& Reference \\

      \midrule

      Matching Number 		& $\Oh(\sqrt{k} m)$ 						& \cite{blum1990new, goldbergkarzanov, hopcroftkarp, micalivazirani} \\

      Vertex Cover Number 	& $\Oh(\sqrt{k} m)$ / $\Oh(k(n+m)$ / $\Oh(n + m + k^3)$		& \cite{hopcroftkarp, micalivazirani} / \cite{mnn} / \cite{mnn}\\

      Feedback Vertex Number 	& $\Oh(k(n + m))$ / $\Oh(k n + 2^{\Oh(k)})$ 			& \cite{mnn} / \cite{mnnKernel} \\

      Feedback Edge Number 	& $\Oh(k(n + m))$ / $\Oh(n + m + k^{1.5})$			& \cite{mnn} / \cite{mnnKernel} \\

      $\Delta(G) - \delta(G)$   & $\Oh(k n^2 \log n)$ 						& \cite{yuster2013maximum} \\

      Tree-width 		& $\Oh(k^4 n\log n)$ / $\Oh(k m \log n)$ 			& \cite{fomin2018fully} / \cite{iwatatreedepth} \\

      Tree-depth 		& $\Oh(k m)$ 							& \cite{iwatatreedepth} \\

      Modular-width 		& $\Oh(k^4 n + m)$ / $\Oh((k^2 \log k) n + m)$			& \cite{coudertcliquewidth} / \cite{kratschnelles} \\

      Split-width 		& $\Oh((k \log^2 k)(n + m)\log n)$ 				& \cite{ducoffepopasplitwidth} \\

      $P_4$-sparseness 		& $\Oh(k^4(n + m))$ 						& \cite{coudertcliquewidth} \\

      Genus 			& $\Oh(f(k)n^{\omega/2})$ 					& \cite{yuster2007minors} \\

      $H$-minor-free 		& $\Oh(f(H)n^{3\omega/(\omega + 3)})$ 				& \cite{yuster2007minors} \\

      Dist.\ to Cocomparability
				& $\Oh(k (n + m))$ 						& \cite{mertzios2018linear} \\

      Distance to Chain Graph   & $\Oh(k(n + m))$ / $\Oh(n + m + k^3)$				& \cite{mnn} / \cite{mnnKernel} \\

      \bottomrule
    \end{tabular}
    \label{tab:known_results}
  \end{table}

Adaptive analysis of algorithms has been most successful in the context of sorting and searching~\cite{BarbayFN12,BarbayN13,DisserK17,Estivill-CastroW92}. We are not aware of prior (oblivious) adaptive analysis of established algorithms for the \textsc{Maximum Matching} problem but two works have designed dedicated adaptive algorithms relative to tree-depth~\cite{iwatatreedepth} and modular-width~\cite{kratschnelles}.

Bast et al.~\cite{bast2006matching} analyzed the Micali-Vazirani algorithm for random graphs, obtaining a running time of $\Oh(m \log n)$ with high probability.

\subparagraph{Organization.}
Some preliminaries on graphs and matchings are recalled in \cref{section:preliminaries}. \cref{section:hopcroftkarp} is dedicated to recalling the analysis of Hopcroft and Karp~\cite{hopcroftkarp} and defining phase-based algorithms. In \cref{section:upperbounds} we present the positive results and the lower bounds are presented in Section~\ref{section:lowerbounds}. We conclude in \cref{section:conclusion}.

\section{Preliminaries}\label{section:preliminaries}

  We mostly consider simple graphs, unless stated otherwise, and denote an edge between $v$ and $w$ as the concatenation of its endpoints, $vw$.
  Let $G = (V, E)$ denote a graph.
  A path $P$ in $G$ is denoted by listing its vertices in order, i.e., $P = v_1 v_2 \ldots v_\ell$. We use the following notation to refer to subpaths of a path $P$:
  \begin{align*}
    P_{[v_i, v_j]} = \begin{cases} v_i v_{i+1} \ldots v_j & \text{if } i \leq j, \\ v_i v_{i-1} \ldots v_j & \text{if } i > j. \end{cases}
  \end{align*}
  By \emph{disjoint} paths we will always mean \emph{vertex-disjoint} paths.
  For a set of vertices $S \subseteq V$, we define $\delta(S) = \left\{vw \in E \, : \, v \in S, w \notin S \right\}$. For two sets $X, Y$ their
  \emph{symmetric difference} is denoted by $X \triangle Y = (X \setminus Y) \cup (Y \setminus X)$.

  A set $C \subseteq V$ is a \emph{vertex cover} of $G$ if every edge of $G$ has at least one endpoint in $C$; the \emph{vertex cover number} $\tau(G)$ of $G$
  is the minimum cardinality of any vertex cover of $G$. An \emph{independent set} of $G$ is a set $S \subseteq V$ of pairwise nonadjacent vertices;
  the \emph{independence number} $\alpha(G)$ of a graph $G$ is the maximum cardinality of any independent set of $G$. The \emph{maximum degree} and \emph{minimum degree} of $G$
  are denoted by $\Delta(G)$ and $\delta(G)$ respectively.
  For a class $\class$ of graphs we define $\dist_{\class}(G) = \min_{S \subseteq V : G - S \in \class} |S|$ to be the \emph{vertex deletion distance}
  of $G$ to $\class$; a set $S$ such that $G-S\in\class$ is called a \emph{modulator}. E.g., the vertex cover number $\tau(G)$ is the vertex deletion distance of $G$
  to edgeless graphs, i.e., independent~sets.

  A \emph{matching} in a graph is a set of pairwise disjoint edges. Let $G = (V, E)$ be a graph and let $M \subseteq E$ be a matching in $G$.
  The matching $M$ is \emph{maximal} if there is no matching $M'$ in $G$ such that $M \subsetneq M'$ and $M$ is \emph{maximum} if there is no matching $M'$ in $G$
  such that $|M| < |M'|$;
  the \emph{matching number} $\nu(G)$ of $G$ is the cardinality of a maximum matching in $G$.

  A vertex $v \in V$ is called $M$-\emph{matched} if there is an edge in $M$ that contains $v$, and $v$ is called $M$-\emph{exposed} otherwise.
  We do not mention $M$ if the matching is clear from the context.
  We say that an edge $e \in E$ is \emph{blue} if $e \notin M$ and $e$ is \emph{red} if $e \in M$. An $M$-\emph{alternating path} is a path in $G$ that
  alternatingly uses red and blue edges. An $M$-\emph{augmenting path} is an $M$-alternating path that starts and ends with an $M$-exposed vertex;
  a \emph{shortest} $M$-augmenting path is an $M$-augmenting path that uses as few edges as possible.

  It is well known that matchings can be enlarged along augmenting paths and that an augmenting path always exists if the matching is not maximum. We say that the matching $M \triangle E(P)$ is obtained by \emph{augmenting} $M$ along $P$.

  \begin{lem}
    If $M$ is a matching in $G$ and $P$ is an $M$-augmenting path, then $M\triangle E(P)$ is also a matching in $G$ and has size $|M \triangle E(P)| = |M| + 1$.
  \end{lem}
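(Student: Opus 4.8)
The plan is to exploit the alternating structure of $P$ together with the fact that both its endpoints are $M$-exposed. First I would observe that, writing $P = v_1 v_2 \ldots v_\ell$, since $P$ alternates red and blue edges and $v_1$ is $M$-exposed, the first edge $v_1 v_2$ must be blue (i.e.\ not in $M$); applying the same reasoning to $v_\ell$ shows that the last edge is blue as well. Hence $P$ has an odd number of edges, say $2t+1$, comprising $t+1$ blue edges, which together are exactly $E(P)\setminus M$, and $t$ red edges, which are exactly $E(P)\cap M$. From $M \triangle E(P) = (M \setminus E(P)) \cup (E(P) \setminus M)$, a disjoint union, the size claim is then immediate: we delete the $t$ red path edges from $M$ and add the $t+1$ blue ones, so $|M \triangle E(P)| = (|M| - t) + (t+1) = |M| + 1$.

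For the claim that $M' := M \triangle E(P)$ is a matching, I would show that every vertex $v$ is incident to at most one edge of $M'$, by a case distinction on the position of $v$ relative to $P$. If $v \notin V(P)$, then no edge of $E(P)$ is incident to $v$, so the $M'$-edges at $v$ are precisely the $M$-edges at $v$, of which there is at most one. If $v = v_i$ is an internal vertex of $P$, then exactly one of its two path edges is red and one is blue; since $v_i$ lies on that red edge, which is in $M$, and $M$ is a matching, $v_i$ is matched in $M$ precisely by that edge and has no $M$-edge outside $E(P)$; passing to $M'$ removes the red path edge at $v_i$ and inserts the blue one, leaving exactly one $M'$-edge at $v_i$. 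If $v = v_1$ or $v = v_\ell$, then $v$ is $M$-exposed, so it has no $M$-edge at all, while its unique path edge is blue and hence enters $M'$; again $v$ gets exactly one incident $M'$-edge. In every case $v$ is incident to at most one edge of $M'$, so $M'$ is a matching.

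The argument is entirely elementary, so there is no genuine ``hard part''; the only point that needs a little care is keeping the case distinction on vertices of $P$ exhaustive and correctly tracking which incident edges survive and which are created by the symmetric difference — in particular, noting that a path vertex lying on a red edge of $P$ cannot simultaneously be matched by an $M$-edge off the path. Making the decomposition $E(P) = (E(P)\cap M) \cup (E(P)\setminus M)$ explicit and repeatedly invoking that $M$ is a matching settles this cleanly. A slightly slicker alternative is to note that $E(P)$ induces a subgraph of maximum degree two and that $M \triangle E(P)$ just toggles the coloring along $P$, but the direct vertex-by-vertex verification above is the most transparent and is the one I would write up.
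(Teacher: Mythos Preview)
Your proof is correct and is the standard elementary argument. The paper itself gives no proof of this lemma at all; it is stated as a well-known background fact in the preliminaries, so there is nothing to compare against beyond noting that your write-up supplies exactly the routine verification the paper omits.
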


  \begin{thm}[\cite{hopcroftkarp}]
    \label{existence_disjoint_aug_paths}
    Let $M$ and $N$ be matchings in $G=(V,E)$ with $|N| > |M|$. The sub\-graph $G' = (V, M \triangle N)$ of $G$ contains at least $|N|-|M|$ vertex-disjoint
    $M$-augmenting~paths.
  \end{thm}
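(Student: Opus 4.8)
The plan is to exploit the degree structure of the subgraph $G' = (V, M \triangle N)$. First I would observe that every vertex of $G'$ has degree at most $2$, since it is incident to at most one edge of $M$ and at most one edge of $N$, hence to at most two edges of $M \triangle N$. Therefore $G'$ decomposes into vertex-disjoint simple paths and cycles. Along any such component, two consecutive edges cannot both lie in $M$ (nor both in $N$), because $M$ and $N$ are matchings; together with the fact that edges of $M \cap N$ do not appear in $G'$ at all, this forces the edges along each component to alternate strictly between $M \setminus N$ and $N \setminus M$, i.e.\ between red and blue in the terminology of the preliminaries.

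Next I would run the counting argument. Since the edges of $M \cap N$ cancel, $|N| - |M| = |N \setminus M| - |M \setminus N|$, so $M \triangle N$ contains exactly $|N| - |M|$ more edges of $N$ than of $M$. Each cycle of $G'$ has even length and is balanced (equally many $M$- and $N$-edges); a path with an even number of edges is likewise balanced; and a path with an odd number of edges contributes $+1$ to the surplus if its two end-edges lie in $N \setminus M$ and $-1$ if they lie in $M \setminus N$. Summing over all components, the number of odd paths with an $N$-surplus minus the number with an $M$-surplus equals $|N| - |M|$, so $G'$ contains at least $|N| - |M|$ odd paths whose both end-edges lie in $N \setminus M$.

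Finally I would verify that each such $N$-surplus path $P$ is genuinely an $M$-augmenting path. It is $M$-alternating since its edges alternate red/blue as above, and both of its end-edges are blue. It remains to check that each endpoint $v$ of $P$ is $M$-exposed: $v$ has degree $1$ in $G'$ and its unique incident $G'$-edge $vw$ lies in $N \setminus M$; if $v$ were $M$-matched by some edge $vu$, then $vu \notin M \triangle N$ would force $vu \in M \cap N \subseteq N$, but then $vu, vw \in N$ with $u \neq w$ contradicts $N$ being a matching. Hence $v$ is $M$-exposed, $P$ is $M$-augmenting, and distinct such paths are different connected components of $G'$ and therefore automatically vertex-disjoint, yielding the claimed $|N| - |M|$ disjoint $M$-augmenting paths. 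The proof is essentially routine; the only point requiring a little care is the bookkeeping around $M \cap N$, ensuring that shared edges drop out of $G'$ so that ``alternating within $G'$'' really coincides with ``$M$-alternating'', which is exactly what makes the endpoint (exposedness) step work.
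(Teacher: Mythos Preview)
Your argument is correct and is exactly the standard proof of this classical fact. The paper itself does not supply a proof of this theorem; it merely states the result with a citation to Hopcroft and Karp, so there is no in-paper proof to compare against.
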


  \begin{cor}[\cite{berge1957two}]
    A matching $M$ is maximum if and only if there is no $M$-augmenting path.
  \end{cor}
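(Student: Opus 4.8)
The plan is to prove both implications by contraposition, each one reducing to exactly one of the two results stated immediately above. First I would treat the ``only if'' direction: assuming $M$ admits an $M$-augmenting path $P$, the preceding Lemma gives that $M \triangle E(P)$ is again a matching and $|M \triangle E(P)| = |M| + 1 > |M|$, so $M$ is not maximum. This is precisely the contrapositive of the implication ``$M$ maximum $\Rightarrow$ no $M$-augmenting path''.

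For the ``if'' direction I would again argue contrapositively: suppose $M$ is not maximum, so that by definition there is a matching $N$ with $|N| > |M|$. Applying \cref{existence_disjoint_aug_paths} to $M$ and $N$, the subgraph $(V, M \triangle N)$ contains at least $|N| - |M| \geq 1$ vertex-disjoint $M$-augmenting paths; in particular, at least one $M$-augmenting path exists in $G$. Hence the absence of an $M$-augmenting path forces $M$ to be maximum. Putting the two contrapositives together yields the claimed equivalence.

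I do not expect a genuine obstacle here: the corollary is a direct consequence of \cref{existence_disjoint_aug_paths} together with the augmenting-path Lemma, and there are no case distinctions or calculations to grind through. The only point worth remarking is that Berge's theorem is classically proved by a hands-on analysis of the connected components of $M \triangle N$ (which are paths and even cycles), whereas in this presentation it falls out essentially for free — the single augmenting path one needs is simply one of the $|N| - |M|$ vertex-disjoint ones already guaranteed by the stronger Hopcroft–Karp structural statement.
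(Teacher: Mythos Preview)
Your proposal is correct and matches the paper's intent: the paper gives no explicit proof but simply records the statement as a corollary of the preceding Lemma and \cref{existence_disjoint_aug_paths}, which is exactly the derivation you spell out.
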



\section{Hopcroft-Karp analysis}\label{section:hopcroftkarp}
  Many of the fastest algorithms for \textsc{Maximum Matching} make use of a framework introduced by Hopcroft and Karp~\cite{hopcroftkarp} for the special case of bipartite matching. We give an
  overview of the framework in this section, mostly following Hopcroft and Karp~\cite{hopcroftkarp}.
  The main idea is to search for shortest augmenting paths instead of arbitrary augmenting paths. Exhaustively searching for shortest augmenting paths and augmenting along them leads to \cref{generic_matching_algo}.

  \begin{center}
    \begin{algorithm}[H]
      \KwIn{Graph $G$}
      \KwOut{Maximum matching $M$}
      $M \leftarrow \emptyset$\;
      \While{$M$ is not maximum}
      {
	Find a shortest $M$-augmenting path $P$\;
	$M \leftarrow M \triangle E(P)$\;
      }
      \Return $M$\;
      \caption{Generic Matching Algorithm}
      \label{generic_matching_algo}
    \end{algorithm}
  \end{center}

  Let $P_1, P_2, \ldots, P_\ell$ be the sequence of augmenting paths in the order found during an execution of \cref{generic_matching_algo}. Hopcroft and Karp~\cite{hopcroftkarp} observed the following properties of the computed shortest augmenting paths.

  \begin{lem}[\cite{hopcroftkarp}]
    Let $M$ be a matching, let $P$ be a shortest $M$-augmenting path, and let $P'$ be a $M \triangle E(P)$-augmenting path,
    then $|P'| \geq |P| + 2|P \cap P'|$.
  \end{lem}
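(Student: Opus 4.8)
The plan is to reduce the statement to the structural result on disjoint augmenting paths, \cref{existence_disjoint_aug_paths}, by considering the matching obtained from $M$ by augmenting first along $P$ and then along $P'$. Concretely, set $M' = M \triangle E(P)$ and $N = M' \triangle E(P') = M \triangle E(P) \triangle E(P')$. By the preceding lemma on augmenting, $M'$ is a matching with $|M'| = |M| + 1$, and then (since $P'$ is $M'$-augmenting) $N$ is a matching with $|N| = |M| + 2$; in particular $|N| > |M|$.

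Next I would apply \cref{existence_disjoint_aug_paths} to the pair $M, N$: the subgraph $(V, M \triangle N)$ contains at least $|N| - |M| = 2$ vertex-disjoint $M$-augmenting paths, say $Q_1$ and $Q_2$. The key identity is $M \triangle N = M \triangle (M \triangle E(P) \triangle E(P')) = E(P) \triangle E(P')$, so $Q_1$ and $Q_2$ are paths whose edges lie entirely inside $E(P)\triangle E(P')$; being vertex-disjoint, they are in particular edge-disjoint, hence $|E(Q_1)| + |E(Q_2)| \le |E(P) \triangle E(P')| = |E(P)| + |E(P')| - 2\,|E(P) \cap E(P')| = |P| + |P'| - 2\,|P \cap P'|$, where I use $|A \triangle B| = |A| + |B| - 2|A \cap B|$ for finite sets.

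On the other hand, $P$ is a \emph{shortest} $M$-augmenting path and each $Q_i$ is an $M$-augmenting path, so $|Q_i| \ge |P|$ and therefore $|E(Q_1)| + |E(Q_2)| \ge 2|P|$. Combining the two bounds gives $2|P| \le |P| + |P'| - 2\,|P \cap P'|$, that is, $|P'| \ge |P| + 2\,|P \cap P'|$, which is the claim.

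I do not anticipate a real obstacle; the proof is short once one hits on the idea of passing through $N = M \triangle E(P) \triangle E(P')$. The only points that need a little care are (i) verifying that $N$ is indeed a matching with $|N| = |M| + 2$, so that \cref{existence_disjoint_aug_paths} yields \emph{two} disjoint paths rather than fewer, and (ii) noting that the guaranteed paths are $M$-augmenting (not $M'$- or $N$-augmenting), so that the minimality of $|P|$ among $M$-augmenting paths can legitimately be applied to $Q_1$ and $Q_2$. Both follow immediately from the preceding lemma, \cref{existence_disjoint_aug_paths}, and the identity $M \triangle N = E(P) \triangle E(P')$.
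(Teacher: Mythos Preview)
The paper does not give its own proof of this lemma; it is simply quoted from Hopcroft and Karp~\cite{hopcroftkarp}. Your argument is correct and is precisely the classical proof from that paper: pass to $N = M \triangle E(P) \triangle E(P')$, use \cref{existence_disjoint_aug_paths} to extract two vertex-disjoint $M$-augmenting paths from $M \triangle N = E(P) \triangle E(P')$, bound their total length above by $|P|+|P'|-2|P\cap P'|$ and below by $2|P|$.
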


  \begin{cor}[\cite{hopcroftkarp}]
    \label{path_length_increasing}
    The sequence $|P_1|, \ldots, |P_\ell|$ is non-decreasing.
  \end{cor}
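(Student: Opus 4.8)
The plan is to derive this directly from the preceding lemma by applying it to each consecutive pair of augmenting paths in the sequence. Introduce notation for the intermediate matchings: let $M_0 = \emptyset$ and, for $1 \le i \le \ell$, let $M_i = M_{i-1} \triangle E(P_i)$ be the matching held by \cref{generic_matching_algo} after the $i$-th augmentation. By the definition of the algorithm, $P_i$ is a \emph{shortest} $M_{i-1}$-augmenting path, and $P_{i+1}$ is an $M_i$-augmenting path (not necessarily in the image of the previous one, but an augmenting path for $M_i = M_{i-1} \triangle E(P_i)$ nonetheless).

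Now fix an index $i$ with $1 \le i \le \ell - 1$ and invoke the preceding lemma with $M := M_{i-1}$, $P := P_i$, and $P' := P_{i+1}$. Its hypotheses are met exactly: $P$ is a shortest $M$-augmenting path and $P'$ is an $(M \triangle E(P))$-augmenting path. The conclusion gives $|P_{i+1}| \ge |P_i| + 2\,|P_i \cap P_{i+1}| \ge |P_i|$, since $|P_i \cap P_{i+1}| \ge 0$. As this holds for every $i \in \{1, \dots, \ell-1\}$, the sequence $|P_1|, \dots, |P_\ell|$ is non-decreasing.

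I do not anticipate a genuine obstacle here; the only point worth stating carefully is that non-decreasingness is a statement about consecutive terms only, so chaining the single-step lemma along the run of the algorithm suffices and no global argument about the whole sequence is needed. One should, however, be explicit that the lemma applies even though $P_{i+1}$ need not share any structure with $P_i$ — the bound $|P_{i+1}| \ge |P_i|$ is what matters, and the cross-term $2|P_i \cap P_{i+1}|$ is simply discarded.
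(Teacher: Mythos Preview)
Your proposal is correct and is exactly the intended derivation: the paper states this as an immediate corollary of the preceding lemma without giving a separate proof, and your argument---applying that lemma to each consecutive pair $(P_i,P_{i+1})$ with $M=M_{i-1}$ and discarding the nonnegative term $2|P_i\cap P_{i+1}|$---is precisely how one unpacks it.
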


  \begin{cor}[\cite{hopcroftkarp}]
    \label{same_length_disjoint}
    If $|P_i| = |P_j|$ for some $i \neq j$, then $P_i$ and $P_j$ are vertex-disjoint.
  \end{cor}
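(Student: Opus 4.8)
The plan is to prove, by induction on the gap $j-i\ge 1$, the following statement for every execution $P_1,\dots,P_\ell$ of \cref{generic_matching_algo}: if $i<j$ and $|P_i|=|P_j|$, then $P_i$ and $P_j$ are vertex-disjoint (equivalently, I could fix a counterexample minimizing $j-i$ and contradict it). It is convenient to write $M_0=\emptyset$ and $M_t=M_{t-1}\triangle E(P_t)$, so that $P_t$ is a shortest $M_{t-1}$-augmenting path, and to keep in mind that whether a fixed path is an $M$-augmenting path depends only on the edges of $M$ incident to the vertices of that path. I will use the preceding lemma — if $P$ is a shortest $M$-augmenting path and $P'$ is an $M\triangle E(P)$-augmenting path then $|P'|\ge |P|+2|P\cap P'|$ — and the observation that two augmenting paths with respect to a matching and its augmentation which share a vertex must share an edge (inspect the matched/exposed status of the common vertex after augmenting along $P$: an endpoint of one path that lies on the other is forced to be both exposed and matched), so that for us $|P\cap P'|=0$ is the same as vertex-disjointness.

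For the base case $j=i+1$ the lemma applies directly with $P=P_i$ and $P'=P_{i+1}$, and since $|P_{i+1}|=|P_i|$ it yields $|P_i\cap P_{i+1}|=0$, i.e.\ $P_i$ and $P_{i+1}$ are vertex-disjoint. For the inductive step $j-i\ge 2$, \cref{path_length_increasing} gives $|P_i|=\dots=|P_j|$, and the base case already shows $P_i$ and $P_{i+1}$ are vertex-disjoint. The main idea is then to \emph{reorder} the execution by swapping $P_i$ and $P_{i+1}$: I will argue that $P_1,\dots,P_{i-1},P_{i+1},P_i,P_{i+2},\dots,P_\ell$ is again a legal execution of \cref{generic_matching_algo}. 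Granting this, in the new execution $P_i$ sits at position $i+1$ while $P_j$ still sits at position $j$, so the pair of indices $i+1<j$ has gap $j-i-1<j-i$, and the induction hypothesis (applied to the new execution) gives that $P_i$ and $P_j$ are vertex-disjoint, as wanted.

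To justify that the reordered sequence is legal I would check three things. First, since $V(P_{i+1})$ is disjoint from $V(P_i)$, the matchings $M_i$ and $M_{i-1}$ agree on all edges incident to $V(P_{i+1})$, so $P_{i+1}$ is an $M_{i-1}$-augmenting path; as its length equals the length of a shortest $M_{i-1}$-augmenting path, it is in fact a shortest one, so the new step $i$ is valid and produces $M':=M_{i-1}\triangle E(P_{i+1})$. Second, $P_i$ is an $M'$-augmenting path (again by disjointness), and — this is where the preceding lemma is invoked a second time, now with $P=P_{i+1}$ — every $M'$-augmenting path has length at least $|P_{i+1}|=|P_i|$, so $P_i$ is a shortest $M'$-augmenting path, and the new step $i+1$ is valid. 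Third, $M'\triangle E(P_i)=M_{i-1}\triangle E(P_i)\triangle E(P_{i+1})=M_{i+1}$, so steps $i+2,\dots,\ell$ are literally unchanged and remain valid, the sequence still terminating at the maximum matching $M_\ell$.

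The hard part will be precisely the second check: it is immediate that $P_i$ remains an augmenting path once we augment along $P_{i+1}$ first, but one must also know it is still a \emph{shortest} augmenting path for $M'$, and the preceding lemma is exactly the tool that lower-bounds the length of every $M'$-augmenting path by $|P_{i+1}|$. The remaining ingredients — locality of the augmenting-path property, order-independence of symmetric difference, and the ``shared vertex implies shared edge'' fact — are routine bookkeeping.
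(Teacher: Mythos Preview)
Your argument is correct: the base case follows immediately from the preceding lemma together with your ``shared vertex $\Rightarrow$ shared edge'' observation, and the inductive step via swapping $P_i$ and $P_{i+1}$ is carefully checked (in particular, you do not forget to re-verify that $P_i$ is a \emph{shortest} $M'$-augmenting path after the swap, which is indeed where the lemma is needed a second time).

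The paper itself does not spell out a proof here and simply cites Hopcroft and Karp. The standard Hopcroft--Karp argument is more direct than your induction-by-swapping and avoids the detour through reordered executions: assuming $i<j$, $|P_i|=|P_j|$, and that $P_i$ and $P_j$ share a vertex, one picks the \emph{largest} index $k$ with $i\le k<j$ such that $P_k$ and $P_j$ share a vertex. Then $P_{k+1},\dots,P_{j-1}$ are all vertex-disjoint from $P_j$, so $M_k$ and $M_{j-1}$ agree on every edge incident to $V(P_j)$, and hence $P_j$ is an $M_k=M_{k-1}\triangle E(P_k)$-augmenting path. A single application of the lemma now gives $|P_j|\ge |P_k|+2|P_k\cap P_j|$, and since $|P_k|=|P_j|$ this forces $|P_k\cap P_j|=0$, contradicting (via the same ``shared vertex $\Rightarrow$ shared edge'' fact you use) the choice of $k$. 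Both proofs rely on the identical two ingredients; the difference is that Hopcroft--Karp localise the obstruction in one step by choosing $k$ maximal, whereas you propagate it down by repeatedly commuting adjacent disjoint augmentations. Your route is slightly longer but has the pleasant side effect of making explicit that equal-length phases can be performed in any order.
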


  Following these observations, we can partition the sequence $P_1, \ldots, P_\ell$ into maximal contiguous subsequences $P_i, P_{i + 1}, \ldots, P_j$
  such that $|P_i| = |P_{i + 1}| = \cdots = |P_j| < |P_{j+1}|$, due to
  \cref{same_length_disjoint} the paths $P_i, P_{i + 1}, \ldots, P_j$ must be pairwise vertex-disjoint.
  Every such subsequence is called a \emph{phase} and corresponds to a maximal set of vertex-disjoint shortest augmenting paths due to \cref{same_length_disjoint}.
  With the terminology of phases introduced, it is useful to restate \cref{generic_matching_algo} as follows.

  \begin{center}
    \begin{algorithm}[H]
      \KwIn{Graph $G$}
      \KwOut{Maximum matching $M$}
      $M \leftarrow \emptyset$\;
      \While{$M$ is not maximum}
      {
	Find a maximal set $S$ of vertex-disjoint shortest $M$-augmenting paths\;
	Augment $M$ along all paths in $S$\;
      }
      \Return $M$\;
      \caption{Phase Framework}
      \label{phase_framework}
    \end{algorithm}
  \end{center}

  In \cref{phase_framework} each iteration of the \textbf{while}-loop corresponds to a single phase.
  If an algorithm implements \cref{phase_framework} we say that it \emph{employs the phase framework}. In the following, we will abstract from the implementation details of
  algorithms employing the phase framework and only bound the number of phases that are required in the worst case. Hopcroft and Karp~\cite{hopcroftkarp} presented an upper bound in terms of the matching number $\nu(G)$.

  \begin{thm}[\cite{hopcroftkarp}]
    \label{matching_number_bound}
    Every algorithm employing the phase framework requires at most \newline$2 \left\lceil \sqrt{\nu(G)} \right\rceil + 2$ phases.
  \end{thm}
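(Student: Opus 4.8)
The plan is to balance two opposing estimates at the threshold $t\approx\sqrt{\nu(G)}$: on the one hand, the augmenting paths found in successive phases grow strictly longer, so after $t$ phases \emph{every} remaining augmenting path has length at least $2t+1$; on the other hand, if the current matching $M$ is far from maximum then \cref{existence_disjoint_aug_paths} hands us many vertex-disjoint $M$-augmenting paths, and since each of them is long it must use many \emph{distinct} edges of $M$, which caps the deficiency $\nu(G)-|M|$. Combining the two after $s:=\lceil\sqrt{\nu(G)}\rceil$ phases shows the matching is then within $s$ of optimal, and at most $s$ more phases finish the job.

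First I would make the length growth precise. Let $\ell_i$ denote the common number of edges of the augmenting paths in phase $i$. Augmenting paths have odd length; by \cref{path_length_increasing} the lengths $\ell_i$ are non-decreasing; and by the very definition of a phase, consecutive phases use strictly different lengths. Hence $\ell_1<\ell_2<\cdots$ are odd integers, so $\ell_i\ge 2i-1$. Moreover, writing $M_t$ for the matching after phase $t$, the paths of phase $t+1$ are \emph{shortest} $M_t$-augmenting paths, so if $M_t$ is not maximum then the shortest $M_t$-augmenting path has length $\ell_{t+1}\ge 2(t+1)-1=2t+1$; consequently \emph{every} $M_t$-augmenting path has at least $2t+1$ edges.

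Now set $s:=\lceil\sqrt{\nu(G)}\rceil$. If the algorithm terminates within $s$ phases we are done, so assume it performs at least $s$ phases, let $M:=M_s$, and assume $|M|<\nu(G)$ (otherwise a maximum matching has already been found; note this forces $\nu(G)\ge 1$ and hence $s\ge 1$). Fix a maximum matching $N$. By \cref{existence_disjoint_aug_paths} the graph $(V,M\triangle N)$ contains at least $\nu(G)-|M|\ge 1$ vertex-disjoint $M$-augmenting paths; by the previous paragraph each of them has at least $2s+1$ edges, hence at least $s$ edges lying in $M$, and these edge sets are pairwise disjoint by vertex-disjointness. Therefore $|M|\ge s\,(\nu(G)-|M|)$, which rearranges to
\[
  \nu(G)-|M| \;\le\; \frac{\nu(G)}{s+1} \;\le\; \frac{\nu(G)}{s} \;\le\; \sqrt{\nu(G)} \;\le\; s .
\]
Since $\nu(G)-|M|$ is an integer this gives $\nu(G)-|M|\le s$. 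Finally, as long as the matching is not maximum there is an augmenting path (Berge's characterization), so each subsequent phase augments along a nonempty family of vertex-disjoint augmenting paths and increases $|M|$ by at least one; hence at most $\nu(G)-|M|\le s$ further phases occur. The total number of phases is at most $2s=2\lceil\sqrt{\nu(G)}\rceil\le 2\lceil\sqrt{\nu(G)}\rceil+2$.

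The step that needs the most care — rather than any deep obstacle — is the length-growth argument: one must correctly chain \cref{path_length_increasing} and \cref{same_length_disjoint} with the parity of augmenting-path lengths to get $\ell_i\ge 2i-1$, and observe that the shortest $M_t$-augmenting path has exactly the length used in phase $t+1$. One also uses the routine fact (obtained by iterating the single-path lemma over vertex-disjoint paths) that augmenting along a disjoint family of $k$ augmenting paths enlarges the matching by exactly $k$, together with Berge's characterization to guarantee such a family is nonempty whenever the matching is not maximum. The degenerate case $\nu(G)=0$ and the case "the algorithm stops in fewer than $s$ phases" are dispatched by the reductions above, and the slack $+2$ in the statement comfortably absorbs the rounding in $s=\lceil\sqrt{\nu(G)}\rceil$.
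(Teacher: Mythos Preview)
Your proof is correct. Note, however, that the paper does not actually prove \cref{matching_number_bound}; it merely cites Hopcroft and Karp and then, for the weaker \cref{classical_bound}, gives an ``instructive example'' proof based on packing long paths into the $n$ vertices. Your argument is the standard Hopcroft--Karp refinement of exactly that idea: after $s=\lceil\sqrt{\nu(G)}\rceil$ phases you count edges of $M$ used by the disjoint augmenting paths (each contributing at least $s$) rather than vertices, which is precisely what is needed to replace $n$ by $\nu(G)$. So your approach is the expected one from the cited source and aligns with the paper's proof template for \cref{classical_bound}; the length-growth step $\ell_i\ge 2i-1$ and the edge-counting inequality $|M|\ge s(\nu(G)-|M|)$ are both handled cleanly.
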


  The next bound is a simple corollary of \cref{matching_number_bound} by noticing that $\nu(G) \leq \frac{n}{2}$, but we opt
  to give an independent proof to serve as an instructive example for the proofs to come.

  \begin{thm}[folklore]
    \label{classical_bound}
    Every algorithm employing the phase framework requires at most $\Oh(\sqrt{n})$ phases.
  \end{thm}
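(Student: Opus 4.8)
The plan is to replay the classical Hopcroft--Karp counting argument, but to phrase it directly in terms of the vertex count $n$ instead of routing it through $\nu(G)$, so that the proof is self-contained (and serves as the promised warm-up for the finer bounds). Let $M_i$ denote the matching held by the algorithm after its $i$-th phase, and let $p$ be the total number of phases. If $p \le \lceil\sqrt{n}\rceil$ we are done immediately, so assume $p > \lceil\sqrt{n}\rceil$ and fix $s = \lceil\sqrt{n}\rceil$. First I would pin down how long augmenting paths have become after $s$ phases: every augmenting path has an odd number of edges (its first and last edge are non-matching, as its endpoints are exposed), so the $i$-th phase operates on augmenting paths of some odd length $\ell_i \ge 1$; by \cref{path_length_increasing} and the fact that consecutive phases are maximal runs of equal length, the $\ell_i$ are strictly increasing, hence $\ell_i \ge 2i-1$. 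Since $p > s$, phase $s+1$ exists, its paths are shortest $M_s$-augmenting paths, and $\ell_{s+1} \ge 2s+1$; thus \emph{every} $M_s$-augmenting path uses at least $2s+1$ edges and therefore at least $2s+2$ vertices.

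Next I would bound the matching still missing after $s$ phases. Let $N$ be a maximum matching, so $|N| = \nu(G)$; since $p > s$, the matching $M_s$ is not maximum and $|N| > |M_s|$. Applying \cref{existence_disjoint_aug_paths} to $M_s$ and $N$ produces at least $\nu(G) - |M_s|$ pairwise vertex-disjoint $M_s$-augmenting paths inside $(V, M_s \triangle N)$. Each of these spans at least $2s+2$ vertices by the previous paragraph, and they are vertex-disjoint, so counting vertices gives $(2s+2)\,(\nu(G) - |M_s|) \le n$, i.e.\ $\nu(G) - |M_s| \le \tfrac{n}{2s+2}$.

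Finally I would close the loop. Each of phases $s+1,\dots,p$ augments along at least one path (a maximal set of shortest augmenting paths is nonempty whenever the matching is not maximum), so $|M_p| \ge |M_s| + (p-s)$; since $|M_p| = \nu(G)$ this yields $p - s \le \nu(G) - |M_s| \le \tfrac{n}{2s+2}$, hence $p \le s + \tfrac{n}{2s+2}$. Plugging in $s = \lceil\sqrt{n}\rceil$ bounds $p$ by $\lceil\sqrt{n}\rceil + \tfrac{n}{2\lceil\sqrt{n}\rceil+2} = \Oh(\sqrt{n})$, as claimed. As for difficulty, there is no genuine obstacle here; the only step that wants care is this last count, which must simultaneously exploit that the surviving augmenting paths are \emph{long} (so few of them fit into the vertex budget $n$), that they are \emph{vertex-disjoint} (so their vertex counts add up), and the elementary observation that the number of phases yet to come is itself at most the number of available disjoint augmenting paths. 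This ``long disjoint augmenting paths $\Rightarrow$ few of them $\Rightarrow$ few remaining phases'' template is precisely what gets localized and refined in the later parameter-sensitive bounds.
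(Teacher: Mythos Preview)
Your proof is correct and follows the same classical Hopcroft--Karp counting argument as the paper: run $\lceil\sqrt{n}\rceil$ phases, observe that all remaining augmenting paths are long, deduce via \cref{existence_disjoint_aug_paths} that only $\Oh(\sqrt{n})$ augmentations (hence phases) remain. The only difference is cosmetic: you track the odd-length/strictly-increasing observation to get $\ell_i \ge 2i-1$ and count vertices, whereas the paper is content with the cruder bound that after $\lceil\sqrt{n}\rceil$ phases the remaining paths have length at least $\lceil\sqrt{n}\rceil$ and packs them by edge count; both yield the same $\Oh(\sqrt{n})$ conclusion.
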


  \begin{proof}
    Let $M$ denote the matching obtained after performing $\left\lceil \sqrt{n} \right\rceil$ phases of \cref{phase_framework}. Every further $M$-augmenting path has length at least $\left \lceil \sqrt{n} \right\rceil$ by \cref{path_length_increasing}.
    This implies that we can pack at most $\left\lceil \sqrt{n} \right\rceil$ such augmenting paths into $G$ and hence by \cref{existence_disjoint_aug_paths}, at most $\left\lceil \sqrt{n} \right\rceil$
    augmentations remain. Since we perform at least one augmentation per phase, \cref{phase_framework} must have terminated after an additional $\left\lceil \sqrt{n} \right\rceil$ phases.
    Thus, \cref{phase_framework} terminates after at most $2 \left\lceil \sqrt{n} \right\rceil$ phases.
  \end{proof}

  Several of the fastest \textsc{Maximum Matching} algorithms employ the phase framework \cite{blum1990new, goldbergkarzanov, micalivazirani}.
  Any one of these algorithms yields the following time bound for a single phase.

  \begin{thm}
    \label{linear_phase}
    There is an algorithm that given a matching $M$ computes a maximal set of vertex-disjoint shortest $M$-augmenting paths in time $\Oh(m)$. In particular, each phase
    of the phase framework can be implemented to run in time $\Oh(m)$.
  \end{thm}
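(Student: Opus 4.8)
The plan is to handle the bipartite and the general case separately, recalling the classical construction in each.

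For bipartite $G=(A\cup B,E)$ I would follow Hopcroft and Karp directly. First run a breadth-first search that starts simultaneously from all $M$-exposed vertices of $A$, traversing blue edges from odd layers to even layers and red edges from even layers back to odd layers, and stopping at the first layer that contains an $M$-exposed vertex of $B$; let $\ell$ be its index. Restricting to the edges between consecutive layers up to this last one yields a layered subgraph $H$ in which the alternating paths from layer $0$ to layer $\ell$ are precisely the shortest $M$-augmenting paths of $G$ (all of length $\ell$). To extract a maximal disjoint packing, repeatedly run a depth-first search in $H$ from the not-yet-used exposed vertices in layer $\ell$: when a search reaches an exposed vertex in layer $0$ it outputs the traversed path and deletes all of its vertices from $H$; whenever a search backtracks out of a vertex, that vertex is deleted as well. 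Each edge of $H$ is then examined a constant number of times, so the phase runs in $\Oh(n+m)=\Oh(m)$ (isolated vertices can be discarded, so $n=\Oh(m)$). Correctness — that the output is a \emph{maximal} set of vertex-disjoint shortest $M$-augmenting paths — follows from the layered structure: every deleted vertex either lies on an output path or cannot lie on any length-$\ell$ alternating $0$-to-$\ell$ path that avoids the already-output paths.

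For general, non-bipartite graphs this outline breaks, because odd cycles create blossoms: BFS layers no longer capture shortest alternating reachability and a naive layering can miss augmenting paths entirely. Here I would invoke one of the established linear-time-per-phase implementations — the algorithm of Micali and Vazirani~\cite{micalivazirani}, or the later treatments of Blum~\cite{blum1990new} and of Goldberg and Karzanov~\cite{goldbergkarzanov} — each of which, given $M$, computes a maximal set of vertex-disjoint shortest $M$-augmenting paths in time $\Oh(m)$. These maintain for every vertex its even and odd \emph{levels}, detect \emph{bridges} where two search fronts meet, and perform a double depth-first search that contracts blossoms on the fly; I would cite their correctness and running-time analyses rather than reproduce them. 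Combining the two cases gives the first sentence of the theorem, and the second sentence is then immediate, since by definition a phase of Algorithm~\ref{phase_framework} is exactly one invocation of such a procedure on the current matching.

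The main obstacle is precisely the general-graph case: unlike the bipartite BFS-plus-DFS argument, the blossom machinery (levels, tenacity, bridges, and blossom contraction inside the double DFS) is genuinely intricate, and the $\Oh(m)$ bound rests on a nontrivial amortization showing that each edge is processed only a bounded number of times across all blossom formations within a single phase. I do not plan to re-derive this; the honest approach is to black-box~\cite{micalivazirani, blum1990new, goldbergkarzanov}, which is also exactly the interface that later sections rely on when they speak of algorithms employing the phase framework with linear-time phases.
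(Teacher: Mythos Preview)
Your proposal is correct and matches the paper's own treatment: the paper gives no proof at all for this theorem but simply cites \cite{blum1990new, goldbergkarzanov, micalivazirani} in the preceding sentence and states the result as a black box. Your sketch of the bipartite Hopcroft--Karp phase is a welcome addition, but since the paper is about general graphs and uses this theorem purely as an interface, your decision to black-box the general case via the same three references is exactly what the paper does.
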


  With \cref{classical_bound} and \cref{matching_number_bound} we obtain the following time bounds.

  \begin{thm}
    \label{matching_number_runtime}
    There is an algorithm employing the phase framework that solves \textsc{Maximum Matching} in time $\Oh(\sqrt{n} m)$ and $\Oh(\sqrt{\nu(G)} m)$.
  \end{thm}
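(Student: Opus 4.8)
The plan is to obtain the bound by composing the three results just stated, with no new ideas required. First I would fix a concrete algorithm $\mathcal{A}$ that employs the phase framework (\cref{phase_framework}) and implements every phase in time $\Oh(m)$; such an algorithm exists by \cref{linear_phase} (for instance the Micali--Vazirani algorithm, or those of Blum and of Goldberg--Karzanov). The running time of $\mathcal{A}$ on an input $G$ is then at most the number of phases it performs times $\Oh(m)$, plus $\Oh(n+m)$ for the bookkeeping outside the \textbf{while}-loop, namely initializing $M \leftarrow \emptyset$, testing the loop condition, and returning the final matching; assuming the usual convention that isolated vertices are discarded up front (so $m = \Omega(n)$), this amounts to $\Oh(m)$ per phase overall.

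It then remains only to bound the number of phases, and here I would invoke the two phase-count bounds proved above, both applied to the \emph{same} execution of $\mathcal{A}$. By \cref{classical_bound} that execution consists of $\Oh(\sqrt{n})$ phases, giving total time $\Oh(\sqrt{n}\,m)$; by \cref{matching_number_bound} it consists of at most $2\lceil \sqrt{\nu(G)} \rceil + 2 = \Oh(\sqrt{\nu(G)})$ phases, giving total time $\Oh(\sqrt{\nu(G)}\,m)$. Since both statements hold for the one fixed algorithm $\mathcal{A}$, it meets each bound — and hence also their minimum — simultaneously, which is exactly what the theorem asserts. I do not foresee any genuine obstacle: the statement is a bookkeeping corollary of \cref{linear_phase}, \cref{classical_bound}, and \cref{matching_number_bound}, and the only point requiring a little care is the $m = \Omega(n)$ convention under which a single phase costs $\Oh(m)$ rather than $\Oh(n+m)$.
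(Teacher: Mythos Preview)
Your proposal is correct and matches the paper's approach exactly: the paper does not even spell out a proof, merely stating that the theorem follows from \cref{classical_bound}, \cref{matching_number_bound}, and \cref{linear_phase}. Your write-up is just a more explicit version of this immediate corollary, with the same combination of phase-count bounds and per-phase cost.
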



\section{Adaptive parameterized analysis}\label{section:upperbounds}

  In this section we will perform an adaptive analysis for algorithms employing the phase framework by analyzing the required number of phases
  in terms of various graph parameters. \cref{linear_phase} yields an improved running time if the considered parameter is small enough.

  Many of the considered parameters are NP-hard to compute, e.g., the vertex cover number. This is not an issue, however, as we only require the parameter value
  for the running time analysis and not for the execution of the algorithm. In this sense, algorithms employing the phase framework are proved to
  obliviously adapt to the studied parameters.


\subsection{Short alternating paths}

  Our main lemma relies on bounding the length of shortest augmenting paths. In the interest of simplifying later arguments, we will not only bound the length of shortest augmenting paths,
  but also of alternating paths that are not necessarily augmenting. The strategy for obtaining such upper bounds is to take a long alternating path $P$ and deduce that additional edges must exist in
  $G[V(P)]$ that enable us to find a shorter alternating path $P'$ in $G[V(P)]$ between the endpoints of $P$. Hence, the replacement path $P'$ will only visit vertices
  that are also visited by the original path $P$. The following definition formalizes this idea.

  \begin{dfn}
    Given a matching $M$ in a graph $G$ and an $M$-alternating path $P = v_1 \ldots v_\ell$. We say that an $M$-alternating path
    $P' = w_1 \ldots w_k$ \emph{replaces} $P$ if the following is true:
    \begin{itemize}
    \item $V(P') \subseteq V(P)$,
    \item $w_1 = v_1$ and $w_k = v_\ell$,
    \item $P'$ has the same parity as $P$ with respect to $M$, i.e., \newline $v_1v_2 \in M \iff w_1w_2 \in M$ and $v_{\ell - 1}v_\ell \in M \iff w_{k-1}w_k \in M$.
    \end{itemize}
    In particular, if $P'$ replaces $P$, then $P'$ is at most as long as $P$.
  \end{dfn}

  A technicality that arises from considering general alternating paths, as opposed to augmenting paths, is that an alternating path that starts and ends with a blue edge, i.e. an edge not in the matching, might have endpoints
  that are not exposed. If we want to shortcut by taking a different edge incident to such an endpoint, then this edge might be red which causes our constructions
  to fail. To avoid this issue, it suffices to consider the subpath resulting from the removal of the first and last edge.

  \begin{dfn}
    A graph $G$ is $\ell$-\emph{replaceable} if for every matching $M$ each $M$-alternating path can be replaced by an $M$-alternating path of length at most $\ell$.
    The class of $\ell$-replaceable graphs is denoted $\repl{\ell}$.
  \end{dfn}

  We will now show that algorithms employing the phase framework require only few phases
  for graphs that are close, in the sense of vertex deletion distance, to $\ell$-replaceable graphs.

  \begin{lem}
    \label{replaceable_phases}
    Every algorithm employing the phase framework requires at most $\Oh(\sqrt{k} \ell)$ phases on graphs $G$ with $\dist_{\repl{\ell}}(G) \leq k$.
    In particular, \textsc{Maximum Matching} can be solved in time $\Oh(\sqrt{k} \ell m)$ for such graphs.
  \end{lem}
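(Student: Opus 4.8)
The plan is to mimic the proof of \cref{classical_bound}, but to replace the crude bound ``every augmenting path has length $\geq \sqrt{n}$'' by a bound that is sensitive to the modulator. Fix a modulator $S$ with $|S|\leq k$ and $G-S\in\repl{\ell}$, and let $M$ be the matching produced after some number $t$ of phases, where $t$ will be chosen on the order of $\sqrt{k\ell}$. By \cref{path_length_increasing}, every $M$-augmenting path has length at least the length of the paths used in phase $t$, so it suffices to argue that after about $\sqrt{k\ell}$ phases all shortest augmenting paths have length $\Omega(\sqrt{k\ell})$ \emph{and simultaneously} that once augmenting paths are long, only few of them can remain. The first half is the new ingredient; the second half is exactly the packing argument from \cref{classical_bound} together with \cref{existence_disjoint_aug_paths}, because at most $(n/\text{length})$ disjoint paths fit, and in fact we only need that at most $\Oh(\sqrt{k\ell})$ augmentations remain once the length exceeds $c\sqrt{k\ell}$ — here one should instead count via the modulator, see below.

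The heart of the argument is the following claim: if $P$ is an $M$-augmenting path in $G$ and $P$ is ``long'', then $P$ can be shortened. Indeed, consider $P$ and look at $V(P)\cap S$; since $|S|\leq k$, the path $P$ decomposes into at most $k+1$ maximal subpaths lying entirely in $G-S$ (the pieces between consecutive visits to $S$, plus the two ends). Each such subpath $Q$ is an $M$-alternating path in $G-S\in\repl{\ell}$, hence can be replaced by an $M$-alternating path of length at most $\ell$ with the same endpoints and same parity, visiting only vertices of $V(Q)\subseteq V(P)$. Stitching these replacements back together at the vertices of $S$ (the parity condition is exactly what guarantees the stitched walk is still $M$-alternating, and one extracts a path from the walk), we obtain an $M$-alternating path $P'$ with the same endpoints as $P$, the same parity, $V(P')\subseteq V(P)$, and length at most $(k+1)\ell + k = \Oh(k\ell)$; in particular $P'$ is again $M$-augmenting. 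Hence every \emph{shortest} $M$-augmenting path already has length $\Oh(k\ell)$. The subtlety flagged in the text — an alternating subpath beginning or ending with a blue edge at a non-exposed vertex — is handled as there: split off the first/last edge of each piece so that the piece to be replaced has exposed or $S$-incident endpoints where needed, and absorb the $\Oh(k)$ stripped edges into the additive $\Oh(k)$ term.

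Now I combine the two halves. After $t=\Theta(\sqrt{k\ell})$ phases the matching $M$ has the property that, by \cref{path_length_increasing} applied across the remaining phases, if many augmenting paths remained they would be mutually vertex-disjoint (\cref{same_length_disjoint}) within the same phase; but more to the point, count augmentations directly: let $N$ be a maximum matching, so $G'=(V,M\triangle N)$ contains $|N|-|M|$ vertex-disjoint $M$-augmenting paths by \cref{existence_disjoint_aug_paths}. Each such path, being $M$-augmenting, can by the claim be replaced by one of length $\Oh(k\ell)$; but actually we want a lower bound on their length coming from the phases already performed — once $t$ phases have happened, the current shortest augmenting path length is at least $t$ (paths grow by $\geq 2$ never being shorter, \cref{path_length_increasing}), yet we just showed it is $\Oh(k\ell)$, which forces $t=\Oh(k\ell)$ as a sanity check but not the bound we want. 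The correct balancing is the one in \cref{matching_number_bound}: either the process has already stopped within $c\sqrt{k\ell}$ phases, or after $c\sqrt{k\ell}$ phases every augmenting path has length $\geq c\sqrt{k\ell}$, hence (by \cref{existence_disjoint_aug_paths} and vertex-disjointness) at most $2n/(c\sqrt{k\ell})$ augmentations remain — and because each of the remaining disjoint augmenting paths must in fact pass through $S$ (a path avoiding $S$ lives in $\repl{\ell}$ and so has a replacement of length $\leq\ell < c\sqrt{k\ell}$, contradicting minimality once $k\ell$ is large), there are at most $|S|\leq k$ of them by vertex-disjointness. Either way the total number of phases is $\Oh(\sqrt{k\ell} + k)=\Oh(\sqrt{k\ell})$ when $\ell\geq 1$, and multiplying by the $\Oh(m)$ per-phase cost of \cref{linear_phase} gives the claimed $\Oh(\sqrt{k\ell}\,m)$ running time. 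The main obstacle is the stitching step: making precise that concatenating the per-piece replacements yields a genuine $M$-alternating path (not merely a walk) with the correct parity at the junction vertices in $S$, including the boundary cases where a piece has length $0$ or $1$; this is where the parity clause of the definition of ``replaces'' and the ``strip the first/last edge'' device from the discussion preceding \cref{replaceable_phases} do the real work.
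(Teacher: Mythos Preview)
Your proposal has a genuine gap in the balancing step, compounded by what looks like a misreading of the target bound. The lemma asserts $\Oh(\sqrt{k}\,\ell)$ phases (i.e.\ $\sqrt{k}$ times $\ell$), not $\Oh(\sqrt{k\ell})$; you consistently aim for the latter. More importantly, your final packing argument only establishes that after $c\sqrt{k\ell}$ phases each remaining shortest augmenting path meets $S$ \emph{at least once}, and hence at most $|S|\leq k$ vertex-disjoint such paths survive. This yields $\Oh(\sqrt{k\ell}+k)$ phases, which you then claim equals $\Oh(\sqrt{k\ell})$ ``when $\ell\geq 1$''; that simplification is false (take $\ell=1$: you get $\Theta(k)$, whereas the lemma promises $\Oh(\sqrt{k})$). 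Even read as a bound toward $\Oh(\sqrt{k}\,\ell)$, the term $k$ is not dominated by $\sqrt{k}\,\ell$ unless $\ell\geq\sqrt{k}$.

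The missing idea is precisely the one your own ``stitching'' paragraph sets up but does not use: a shortest $M$-augmenting path of length $L$ cannot merely intersect $S$ once, it must intersect $S$ roughly $L/\ell$ times, because each maximal subpath inside $G-S$ can be replaced by one of length at most $\ell$ (and the replaced path, being augmenting with the same endpoints, is no shorter than $L$). So after running $\lceil\sqrt{k}\,\ell\rceil$ phases, every shortest augmenting path has length at least $\sqrt{k}\,\ell$ and therefore uses $\Omega(\sqrt{k})$ vertices of $S$. Now the packing argument against $|S|\leq k$ gives at most $\Oh(k/\sqrt{k})=\Oh(\sqrt{k})$ remaining disjoint augmenting paths, hence $\Oh(\sqrt{k})$ further phases, and the total is $\Oh(\sqrt{k}\,\ell)+\Oh(\sqrt{k})=\Oh(\sqrt{k}\,\ell)$. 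This is exactly the paper's argument; your detours through the global $\Oh(k\ell)$ length bound and the $2n/(c\sqrt{k\ell})$ count are unnecessary and do not lead to the claimed bound.
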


  \begin{proof}
    Let $S \subseteq V$ with $|S| \leq k$, such that $G - S \in \repl{\ell}$ and let $M$ be the matching obtained after performing $\lceil \sqrt{k} \ell \rceil$ phases of \cref{phase_framework}.
    We claim that every shortest $M$-augmenting path uses at least $\lfloor \sqrt{k} \rfloor$ vertices of $S$; every such path has a length of at least $\lceil \sqrt{k} \ell \rceil$ due to \cref{path_length_increasing}. Consider such a path $P$,
    since $G - S$ is $\ell$-replaceable, we can assume that every time $P$ enters $G - S$ it uses at most $\ell + 1$ vertices of $G - S$ before going back to $S$.
    Hence, $P$ must use at least $\lfloor \sqrt{k} \rfloor - 1$ vertices of $S$ to have a length of $\lceil \sqrt{k}\ell \rceil$ or more.

    Since $S$ is of size at most $k$ and due to the properties of replacing paths, this implies that we can pack at most $2 \lceil \sqrt{k} \rceil$ $M$-augmenting paths into $G$ as
    \begin{align*}
    2 \left\lceil \sqrt{k} \right\rceil \left(\left\lfloor \sqrt{k} \right\rfloor - 1\right) \geq 2 \sqrt{k} \left(\sqrt{k} - 2\right) = 2k - 4\sqrt{k} \geq k \quad \text{ for } k \geq 16.
    \end{align*}
    By \cref{existence_disjoint_aug_paths} at most $2 \lceil \sqrt{k} \rceil$ augmentations remain, which require at most $2 \lceil \sqrt{k} \rceil$ phases.
    In total, we need $\lceil \sqrt{k} \ell \rceil + 2 \lceil \sqrt{k} \rceil \in \Oh(\sqrt{k} \ell)$ phases. \cref{linear_phase} implies the time bound. \qedhere
  \end{proof}

  The following running time bound relative to the vertex cover number $\tau(G)$ follows directly from \cref{matching_number_runtime} by the use of
  the well-known inequality $\nu(G) \leq \tau(G)$.

  \begin{thm}
    \label{vertex_cover}
    Every algorithm employing the phase framework requires at most $\Oh(\sqrt{\tau(G)})$ phases. In particular,
    \textsc{Maximum Matching} can be solved in time $\Oh(\sqrt{\tau(G)}m)$.
  \end{thm}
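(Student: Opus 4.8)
The plan is to obtain this bound as an immediate corollary of the matching-number bound \cref{matching_number_bound} (equivalently \cref{matching_number_runtime}), using only the elementary inequality $\nu(G) \leq \tau(G)$. So the first step is to recall why $\nu(G) \leq \tau(G)$: fix a maximum matching $M$ and a minimum vertex cover $C$ of $G$; every edge of $M$ has at least one endpoint in $C$, and since the edges of $M$ are pairwise disjoint, distinct edges of $M$ contribute distinct vertices of $C$, so $\nu(G) = |M| \leq |C| = \tau(G)$.

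Given this, the main step is a one-line substitution. By \cref{matching_number_bound}, any algorithm employing the phase framework runs for at most $2\left\lceil\sqrt{\nu(G)}\right\rceil + 2$ phases, and by the inequality above this is at most $2\left\lceil\sqrt{\tau(G)}\right\rceil + 2 \in \Oh(\sqrt{\tau(G)})$ phases. For the running-time claim I would then invoke \cref{linear_phase}, which says each phase can be implemented in time $\Oh(m)$; multiplying by the $\Oh(\sqrt{\tau(G)})$ phase count yields total time $\Oh(\sqrt{\tau(G)}\,m)$.

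There is essentially no substantial obstacle here; the only points requiring a modicum of care are that the phase count from \cref{matching_number_bound} already absorbs the terminating phase in which no augmenting path is found, so no extra additive term is needed, and that the rounding in $2\lceil\sqrt{\tau(G)}\rceil+2$ is harmlessly swallowed by the $\Oh(\cdot)$. As an alternative one could reprove the phase bound directly in the style of \cref{classical_bound}: run $\lceil\sqrt{\tau(G)}\rceil$ phases, note that every remaining $M$-augmenting path then has length at least $\lceil\sqrt{\tau(G)}\rceil$ and, being an alternating path, contains roughly $\lceil\sqrt{\tau(G)}\rceil/2$ pairwise disjoint non-matching edges and hence uses that many distinct vertices of any vertex cover; since all remaining augmenting paths are vertex-disjoint (\cref{existence_disjoint_aug_paths}) at most $\Oh(\sqrt{\tau(G)})$ of them fit, so $\Oh(\sqrt{\tau(G)})$ further phases suffice. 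This route works but is fiddlier with constants, so I would present the direct deduction from \cref{matching_number_bound} as the proof.
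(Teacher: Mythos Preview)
Your proposal is correct and matches the paper's approach exactly: the paper also derives \cref{vertex_cover} directly from \cref{matching_number_runtime} via the well-known inequality $\nu(G)\leq\tau(G)$, and even mentions the same alternative route through \cref{replaceable_phases} (with independent sets being $1$-replaceable) that you sketch at the end.
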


  Alternatively, observe that if $\class$ is the class of independent sets, then $\dist_{\class}(G) = \tau(G)$ and
  hence \cref{vertex_cover} is implied by \cref{replaceable_phases} as independent sets are trivially $1$-replaceable.

  More generally, every graph without paths of length $\ell + 1$ is $\ell$-replaceable. It is known that a graph class has bounded path length if and only if
  it has bounded tree-depth (see, e.g., \cite[Chapter 6]{nevsetril2012sparsity}).
  As a further special case consider the class of \emph{star forests}, i.e., graphs where every connected component is a star and therefore must be $2$-replaceable.
  Hence, we obtain the following corollary of \cref{replaceable_phases}.

  \begin{cor}
    Let $\class$ be the class of star forests. Every algorithm employing the phase framework requires at most $\Oh(\sqrt{\dist_\class(G)})$ phases. In particular,
    \textsc{Maximum Matching} can be solved in time $\Oh(\sqrt{\dist_\class(G)} m)$.
  \end{cor}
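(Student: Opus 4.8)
The plan is to apply \cref{replaceable_phases} with the parameter $\ell$ set to a constant determined by the structure of star forests. First I would observe that a star forest contains no path on four vertices: any path in a star forest lives within a single connected component (a star $K_{1,t}$), and the longest path in a star has length two (two leaves connected through the center), i.e., it uses at most three vertices. Consequently, for any matching $M$ in a star forest, an $M$-alternating path also has at most three vertices, so in particular it is already of length at most two and is trivially replaced by itself. This shows that every star forest lies in $\repl{2}$.

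Given this, I would then simply invoke \cref{replaceable_phases} with $\ell = 2$ and $k = \dist_{\class}(G)$, where $\class$ denotes the class of star forests, noting that $\dist_{\class}(G) = \dist_{\repl{2}}(G') \le \dist_{\repl{2}}(G)$ only needs the containment $\class \subseteq \repl{2}$, which gives $\dist_{\repl{2}}(G) \le \dist_{\class}(G)$. Thus \cref{replaceable_phases} yields that every algorithm employing the phase framework requires at most $\Oh(\sqrt{\dist_{\class}(G)}\cdot 2) = \Oh(\sqrt{\dist_{\class}(G)})$ phases, and combined with the linear-time phase bound of \cref{linear_phase} (already folded into \cref{replaceable_phases}) we get the claimed running time $\Oh(\sqrt{\dist_{\class}(G)}\,m)$.

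There is essentially no hard part here; the only thing to be careful about is the definition of path length and the replacement definition for alternating paths that begin or end with a blue edge whose endpoints need not be exposed. As the discussion preceding the definition of $\ell$-replaceable notes, one passes to the subpath obtained by deleting the first and last edge before attempting a replacement; but since alternating paths in a star forest are already short (at most three vertices, hence length at most two), this subtlety does not create any real work — one can take $P' = P$ (or the trivial subpath) in every case, and the parity and endpoint conditions are immediate. Therefore the corollary follows directly, and the proof is a one-line appeal to \cref{replaceable_phases} after the observation that star forests are $2$-replaceable.
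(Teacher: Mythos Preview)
Your proposal is correct and follows the same approach as the paper: observe that star forests are $2$-replaceable (paths in a star have length at most two) and then invoke \cref{replaceable_phases} with $\ell=2$. The sentence around the inequality $\dist_{\repl{2}}(G)\le\dist_{\class}(G)$ is garbled (the stray $G'$ and the first displayed inequality make no sense), but you arrive at the right containment argument and the conclusion is sound.
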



\subsection{Independence number}

  \begin{figure}
    \centering
    \includegraphics[scale = 0.94]{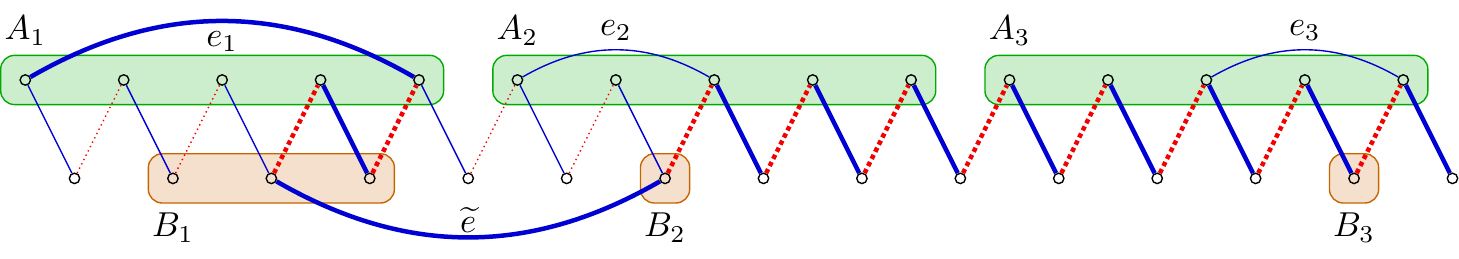}
    \caption{The construction in \cref{ind_num_replaceable} for $\alpha(G) = 2$, the red dotted edges are matched and the blue edges are unmatched. The thickened edges represent the shorter alternating path $P'$.}
    \label{fig:ind_num_proof}
  \end{figure}

  A graph with independence number $k$ contains many edges in the sense that any set of $k+1$ vertices
  must induce an edge. We will use this property to shorten long alternating paths.

  \begin{thm}
    \label{ind_num_replaceable}
    Suppose that $G$ is a graph such that $\alpha(G) \leq k$, then $G$ is $\Oh(k^2)$-replaceable.
  \end{thm}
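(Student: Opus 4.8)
The plan is to bound, for an arbitrary matching $M$ and an arbitrary $M$-alternating path, the length of a \emph{shortest} alternating replacement, and to use minimality to forbid certain reroutings. Given $M$ and an $M$-alternating path $Q$ in $G$, let $P$ be a shortest $M$-alternating path that replaces $Q$ (one exists, since $Q$ replaces itself). Put $H=G[V(P)]$; then $\alpha(H)\le\alpha(G)\le k$, and $P$ is a shortest $M$-alternating path in $H$ between its endpoints and of its parity, since any shorter such path would again replace $Q$. So it suffices to show that such a $P$ has length $\Oh(k^2)$. I write $e_1,e_2,\dots,e_t$ for the edges of $M$ that $P$ traverses, in order, and $e_s=a_sb_s$ with $a_s$ encountered before $b_s$; thus consecutive matched edges $e_s,e_{s+1}$ are joined by the unmatched path-edge $b_sa_{s+1}$, and $P$ is possibly preceded and/or followed by one further unmatched edge leading to its endpoints. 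In all four parity cases $|P|=\Theta(t)$, so it remains to bound $t$. Note that the rerouting below leaves the first and last edge of $P$ unchanged, so the endpoints — exposed or not — never cause the difficulty warned about above.

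The only shortcut I need is this. Since $a_s$ and $b_s$ are $M$-matched to each other, any edge between two $a$'s or between two $b$'s is unmatched. Suppose $b_i\sim b_j$ and $a_l\sim a_{l'}$ with $i<l\le j<l'$ and $(l,l')\ne(i+1,j+1)$. Then
\[
 \dots,\, a_i, b_i,\, b_j, a_j,\, b_{j-1}, a_{j-1},\, \dots,\, b_l, a_l,\, a_{l'}, b_{l'},\, a_{l'+1}, b_{l'+1},\, \dots
\]
(keeping the prefix of $P$ up to $b_i$ and the suffix of $P$ from $a_{l'}$ on) is again an $M$-alternating path of the same parity and with the same endpoints: the chords $b_ib_j$ and $a_la_{l'}$ are unmatched, and between them one runs along $P$ backwards from $e_j$ down to $e_l$. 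It is a simple path because it uses only matched edges with indices in $\{1,\dots,i\}\cup\{l,\dots,j\}\cup\{l',\dots,t\}$, and it uses $i+(j-l+1)+(t-l'+1)$ matched edges, which is strictly fewer than $t$ unless $(l,l')=(i+1,j+1)$. By minimality of $P$, this configuration does not occur (and likewise for its mirror image under reversing $P$).

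Now I bound $t$ with $\alpha(H)\le k$. I greedily cut $1,2,\dots,t$ into consecutive blocks $B_1,\dots,B_r$, each block being an inclusion-maximal interval whose $a$-endpoints form an independent set; then $|B_p|\le k$ for all $p$, so $t\le rk$, and for each $p\le r-1$ the first index $\beta_p+1$ of $B_{p+1}$ satisfies $a_{\beta_p+1}\sim a_{s_p}$ for some $s_p\in B_p$ (else $B_p$ would extend), where $\beta_p=\max B_p$. I then claim $b_{\beta_p}\not\sim b_{\beta_q}$ whenever $p<q\le r-1$ with $q\ge p+2$: apply the rerouting with $i=\beta_p$, $j=\beta_q$ and the boundary chord $a_{s_q}\sim a_{\beta_q+1}$, i.e.\ $l=s_q$, $l'=\beta_q+1$; one checks $\beta_p<s_q\le\beta_q<\beta_q+1$, and since $q\ge p+2$ forces $s_q\ge\beta_{q-1}+1>\beta_p+1$ we get $(l,l')\ne(i+1,j+1)$, contradicting minimality of $P$. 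Hence the block-endpoints $b_{\beta_p}$ for odd $p\le r-1$ (so that any two have index gap $\ge 2$) are pairwise nonadjacent, whence $\lceil(r-1)/2\rceil\le\alpha(H)\le k$, i.e.\ $r=\Oh(k)$. Therefore $t\le rk=\Oh(k^2)$ and $|P|=\Oh(k^2)$, so $G$ is $\Oh(k^2)$-replaceable.

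The step that needs the most care is this last one together with the rerouting lemma: $\alpha(H)\le k$ never bounds $|V(H)|$ by itself, so the whole quadratic bound must come from forbidding reroutings, and one has to set up the blocks and pick the boundary chords so that large $t$ genuinely produces a forbidden configuration with \emph{admissible} (non-trivial) indices. Verifying the rerouting is alternating and simple in each of the four parities of $P$, checking the index inequalities, and reattaching the (at most two) extra end-edges are then routine, precisely because the rerouting never touches the ends of $P$.
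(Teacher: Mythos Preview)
Your proof is correct. The rerouting lemma checks out: with $i<l\le j<l'$ the three index sets $\{1,\dots,i\}$, $\{l,\dots,j\}$, $\{l',\dots,t\}$ are disjoint, so the new walk is simple; alternation is as you say; and the matched-edge count $i+(j-l+1)+(t-l'+1)$ equals $t-\big((l-i-1)+(l'-j-1)\big)<t$ precisely when $(l,l')\ne(i+1,j+1)$. In the block argument, $q\ge p+2$ forces $s_q\ge\beta_{q-1}+1\ge\beta_{p+1}+1\ge\beta_p+2>i+1$, so the forbidden trivial case is indeed excluded, and the odd-indexed $b_{\beta_p}$ form an independent set of size $\lceil(r-1)/2\rceil\le k$, giving $t\le rk=\Oh(k^2)$.

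Your route differs from the paper's in organisation rather than in the core combinatorics. The paper argues \emph{constructively}: it fixes a long alternating path, chops off $k{+}1$ disjoint windows of $a$-vertices of size $k{+}1$ each, finds an $a$--$a$ edge in every window, selects one $b$-vertex between the endpoints of each such edge, and then finds a $b$--$b$ edge among these $k{+}1$ choices; the two chords are then spliced directly into a shorter replacement (same forward--chord--backward--chord--forward pattern you use, with the roles of $a$'s and $b$'s swapped). You instead argue \emph{extremally}: pass to a shortest replacement, greedily partition the matched edges into maximal blocks with independent $a$-endpoints, and show that far-apart block boundaries force a $b$--$b$ non-edge (else the rerouting would contradict minimality). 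Both apply $\alpha\le k$ twice and both use the identical two-chord shortcut; the paper's version is a bit more hands-on and gives an explicit constant, while your minimality framing avoids having to prespecify the window layout and makes the role of the two independence applications (one bounding block size, one bounding block count) slightly more transparent.
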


  \begin{proof}
    First, fix a matching $M$. We show how to replace alternating paths that begin and end with a blue edge, i.e., an edge not in $M$. By replacing appropriate subpaths
    of long alternating paths with other parities, the general result will follow.

    Suppose that $P=a_1 b_1 a_2 b_2 \ldots a_\ell b_\ell$ is an alternating path that is longer than $4(k+1)^2 + 1$ and that
    starts and ends with a blue edge. We can assume that $\ell$ is odd.
    Distinguishing the vertices of $P$ by their parity, we define $A = \{a_i \, : \, i \in [\ell] \}$ and $B = \{b_i \, : \, i \in [\ell] \}$.
    Furthermore, we define the sets $A_i$ and $A'_i$ for $i = 1, \ldots, k + 1$ by
    \begin{align*}
      A_i = \{a_{(i-1)(2k+2)+j} \, : \, j = 1, 2, 3, \ldots, 2k + 1\} \text{ and } A'_i = \{a_j \in A_i \, : \, j \text{ odd}\}.
    \end{align*}
    Note that $|A'_i| = k + 1 > k$ for all $i$, hence the $A'_i$ cannot be independent sets.
    Thus, there is at least one edge $e_i$ in $G[A'_i]$.
    We denote the endpoints of $e_i$ by
    \begin{align}
    \label{eq:ind_num_first}
      e_i = a_{p_i}a_{q_i}, \mbox{ where } p_i \leq q_i - 2, \text{ for all } i = 1, \ldots, k + 1.
    \end{align}

    We now consider the vertices of $B$ that lie between the endpoints of $e_i$ on $P$; we omit $b_{p_i}$ to ensure that the constructed path is shorter than $P$. Concretely, let $B_i= \{b_{p_i + 1}, b_{p_i + 2}, \ldots, b_{q_i - 1} \} \text{ for all } i = 1, \ldots, k + 1.$
    \cref{eq:ind_num_first} implies that $B_i \neq \emptyset$.
    Now, we arbitrarily choose a vertex $b_i'$ from each $B_i$ and define $B' = \{b_i' \, : \, i = 1, \ldots, k + 1\}$.
    Observe that $B'$ cannot be an independent set as $B'$ contains $k + 1$ vertices; thus, there must exist an edge $\tilde{e} = b_i'b_j'$ with $i < j$.
    We construct the path $P'$ that replaces $P$ by (see \cref{fig:ind_num_proof})
    \begin{align*}
      P' = P_{[a_1, a_{p_i}]} P_{[a_{q_i}, b_i']} P_{[b_j', b_\ell]},
    \end{align*}
    using edges $a_{p_i}a_{q_i}$ and $b_i'b_j'$; note that $P_{[a_{q_i}, b_i']}$ is a subpath of $P$ in reverse order. Note that both edges are blue because $a_{q_i}$ and $b_i'$ are already incident with red edges on $P$.
    It can be easily checked that $P'$ is an alternating path and, in particular, that it is a valid replacement for $P$. We will now show that $P'$ is strictly shorter than $P$.
    It suffices to compare the length of $P_1 = P_{[a_{p_i}, b_j']}$ and $P_2 = P_{[a_{p_i}, b_j']}' =  a_{p_i}P_{[a_{q_i}, b_i']}b_j'$
    as $P$ and $P'$ agree on the remaining parts. Let $s$ and $t$ be the indices such that $b_i' = b_s$ and $b_j' = b_t$.
    The length of $P_1$ is $2(t-p_i) + 1$. For the length of $P_2$ we obtain
    \begin{align*}
      |P_2|	& = 1 + (2(s-q_i)+1) + 1 = 2(s-q_i) + 3 < 2(s-(p_i + 1)) + 3  = 2(s-p_i) + 1 \\
		& < 2(t-p_i) + 1 = |P_1|,
    \end{align*}
    where the first inequality follows from \cref{eq:ind_num_first}.
  \end{proof}

  By combining this result with \cref{replaceable_phases} we obtain the following corollary.

  \begin{cor}
    Every algorithm employing the phase framework requires at most $\Oh(\alpha(G)^2)$ phases.
    In particular, \textsc{Maximum Matching} can be solved in time $\Oh(\alpha(G)^2 m)$.

    Let $\class_k$ denote the class of graphs with independence number at most $k$ in each connected component. Every algorithm employing the phase framework
    requires at most $\Oh(\sqrt{\dist_{\class_k}(G)} k^2)$ phases. In particular, \textsc{Maximum Matching} can be solved in time $\Oh(\sqrt{\dist_{\class_k}(G)} k^2 m)$.
  \end{cor}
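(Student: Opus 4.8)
The plan is to derive both statements from \cref{ind_num_replaceable} together with the phase-counting machinery already available, using \cref{path_length_increasing} for the first statement and \cref{replaceable_phases} for the second. For the first statement, \cref{ind_num_replaceable} can be applied to $G$ directly; for the second, the additional ingredient is the observation that $\ell$-replaceability is a per-connected-component property: the vertex set $V(P)$ of any $M$-alternating path $P$ is connected, so it lies inside a single connected component, and every replacement of $P$ uses only vertices of $V(P)$.

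For the first statement, fix a matching $M$ that arises during an execution of the phase framework and let $P$ be any shortest $M$-augmenting path. By \cref{ind_num_replaceable}, $G$ is $ck^2$-replaceable for $k=\alpha(G)$ and a fixed constant $c$, so $P$ --- being in particular an $M$-alternating path --- has a replacement $P'$ of length at most $ck^2$. Since $P'$ has the same endpoints as $P$ and the same parity with respect to $M$, and $P$ begins and ends at $M$-exposed vertices with blue edges, $P'$ is again an $M$-augmenting path; hence every shortest $M$-augmenting path that the algorithm augments along has length at most $ck^2$. By \cref{path_length_increasing} the augmenting-path lengths are non-decreasing, and by the definition of phases the common length of one phase is strictly smaller than that of the next; as these lengths are positive integers bounded by $ck^2$, there are at most $\Oh(k^2)=\Oh(\alpha(G)^2)$ phases, and \cref{linear_phase} gives time $\Oh(\alpha(G)^2 m)$.

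For the second statement, I would first establish that $\class_k \subseteq \repl{ck^2}$ with the same constant $c$. Let $H \in \class_k$, let $M$ be a matching in $H$, and let $P$ be an $M$-alternating path; then $V(P)$ is contained in a single connected component $C$ of $H$, and $\alpha(H[C]) \le k$. Setting $M_C = M \cap E(H[C])$, the path $P$ is $M_C$-alternating in $H[C]$, so \cref{ind_num_replaceable} applied to $H[C]$ provides an $M_C$-alternating replacement $P'$ with $V(P') \subseteq V(P)$ and length at most $ck^2$; because every edge inside $C$ has the same matching status under $M$ and under $M_C$, the path $P'$ is also an $M$-alternating replacement of $P$ in $H$. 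Thus $H$ is $ck^2$-replaceable, which proves $\class_k \subseteq \repl{ck^2}$. Consequently, for a modulator $S$ with $|S| = \dist_{\class_k}(G)$ and $G-S \in \class_k$ we get $G - S \in \repl{ck^2}$, hence $\dist_{\repl{ck^2}}(G) \le \dist_{\class_k}(G)$, and \cref{replaceable_phases} with $\ell = ck^2$ bounds the number of phases by $\Oh(\sqrt{\dist_{\class_k}(G)}\, ck^2) = \Oh(\sqrt{\dist_{\class_k}(G)}\, k^2)$; \cref{linear_phase} then yields time $\Oh(\sqrt{\dist_{\class_k}(G)}\, k^2 m)$.

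The step I expect to need the most care is the per-component reduction in the second statement --- specifically, checking that passing from $(H,M)$ to $(H[C], M_C)$ and back preserves the three defining conditions of a replacement (shared endpoints, shared parity with respect to the matching, and $V(P') \subseteq V(P)$); this rests on the fact that $M$ and $M_C$ induce the same red/blue coloring on $E(H[C])$. Beyond that, everything is routine: the degenerate cases of \cref{replaceable_phases} (its key inequality is only invoked for $\dist_{\class_k}(G) \ge 16$, and $\dist_{\class_k}(G)=0$ simply falls under the bound proved for the first statement) are absorbed into the $\Oh$-notation, and no combinatorial work beyond \cref{ind_num_replaceable} is required.
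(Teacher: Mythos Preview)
Your proposal is correct and follows essentially the same route as the paper, which derives the corollary in one line by ``combining this result with \cref{replaceable_phases}.'' Your write-up simply fills in the two details the paper leaves implicit: the per-connected-component argument showing $\class_k \subseteq \repl{ck^2}$, and a direct phase bound for the first statement (strictly increasing phase lengths bounded by $ck^2$) in place of invoking \cref{replaceable_phases} with a trivial modulator; both are the natural elaborations and require no new ideas.
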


  A better analysis in terms of the independence number can be achieved by not using replaceability, but in exchange we lose the square root dependence
  on the size of the modulator. 

  \begin{lem}
    \label{ind_num_maximal}
    Every maximal matching $M$ covers at least $n - \alpha(G)$ vertices.
  \end{lem}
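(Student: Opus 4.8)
The statement is that every maximal matching $M$ covers at least $n - \alpha(G)$ vertices. The plan is to argue directly about the set of $M$-exposed vertices. Let $U \subseteq V$ denote the set of vertices left exposed by $M$; we want to show $|U| \leq \alpha(G)$, since then $M$ covers $n - |U| \geq n - \alpha(G)$ vertices.

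First I would observe that $U$ must be an independent set. Indeed, if two exposed vertices $u, u' \in U$ were adjacent, then $uu'$ is an edge with both endpoints unmatched, so $M \cup \{uu'\}$ would be a strictly larger matching, contradicting maximality of $M$. Hence $U$ is independent, which immediately gives $|U| \leq \alpha(G)$ by definition of the independence number. Combining this with the fact that the matched vertices are exactly $V \setminus U$ and number $n - |U|$, we conclude that $M$ covers at least $n - \alpha(G)$ vertices, as claimed.

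This argument is short and contains no real obstacle; the only point requiring minor care is the correct bookkeeping that the exposed vertices form precisely the complement of the covered set, so that a bound on $|U|$ translates directly into the desired lower bound on the number of covered vertices. I expect the author's proof to be essentially this one-line observation.
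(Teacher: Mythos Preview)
Your proof is correct and essentially identical to the paper's: the paper argues by contrapositive that if $\alpha(G)+1$ vertices were exposed they would contain an edge, contradicting maximality of $M$, which is exactly your observation that the exposed set is independent.
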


  \begin{proof}
    If $\alpha(G) + 1$ vertices were exposed, they would not be an independent set and hence have an edge between them. Thus, $M$ would not be maximal.
  \end{proof}

  \begin{cor}
    Every algorithm employing the phase framework requires at most $\Oh(\alpha(G))$ phases.
    In particular, \textsc{Maximum Matching} can be solved in time $\Oh(\alpha(G) m)$.
  \end{cor}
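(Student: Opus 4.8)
The plan is to reuse the same phase-counting template as in the proof of \cref{classical_bound}, but to replace the bookkeeping ``after $\lceil\sqrt n\rceil$ phases at most $\lceil\sqrt n\rceil$ augmentations remain'' by the much stronger ``after a single phase at most $\alpha(G)/2$ augmentations remain'', which is exactly what \cref{ind_num_maximal} buys us. The one fact that makes this work is that after the very first phase of \cref{phase_framework} the current matching is already \emph{maximal}: starting from $M=\emptyset$, the shortest $M$-augmenting paths are precisely the single edges of $G$, so a maximal set of vertex-disjoint such paths is nothing but a maximal matching. (If $\emptyset$ is already maximum, i.e.\ $G$ is edgeless, the algorithm performs $0$ phases and there is nothing to show.)

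Concretely, I would let $M$ be the matching obtained after the first phase. By the observation above $M$ is maximal, so \cref{ind_num_maximal} gives that $M$ covers at least $n-\alpha(G)$ vertices; equivalently $|M|\ge (n-\alpha(G))/2$ and at most $\alpha(G)$ vertices are $M$-exposed. Since $\nu(G)\le n/2$, this yields $\nu(G)-|M|\le \alpha(G)/2$. Applying \cref{existence_disjoint_aug_paths} to $M$ and a maximum matching $N$, the number of augmentations that the algorithm still has to perform is $\nu(G)-|M|\le\alpha(G)/2$, and since each phase performs at least one augmentation while $M$ is not maximum, at most $\lceil\alpha(G)/2\rceil$ further phases occur. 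In total the algorithm runs for at most $1+\lceil\alpha(G)/2\rceil\in\Oh(\alpha(G))$ phases, and \cref{linear_phase} converts this into the running time $\Oh(\alpha(G)m)$.

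I do not expect a genuine obstacle here: once \cref{ind_num_maximal} is available, the argument is essentially the arithmetic already carried out in \cref{classical_bound} and \cref{replaceable_phases}. The only point that needs to be stated with care is the claim that the matching is maximal after the first phase (so that \cref{ind_num_maximal} is applicable from phase two onward), together with the trivial edgeless-graph edge case; both are immediate from the definition of a phase applied to the empty matching.
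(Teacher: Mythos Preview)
Your proposal is correct and follows essentially the same approach as the paper: after one phase the matching is maximal, so \cref{ind_num_maximal} bounds the exposed vertices by $\alpha(G)$ and hence at most $\alpha(G)/2$ augmentations remain. The paper's proof is terser and leaves implicit the justification that the first phase yields a maximal matching, which you spell out carefully; otherwise the arguments coincide.
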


  \begin{proof}
    After the first phase, we know that at most $\alpha(G)$ vertices are exposed by \cref{ind_num_maximal}.
    Hence, $\alpha(G)/2$ further augmentations suffice.
  \end{proof}


\subsection{\texorpdfstring{$s$-plexes}{s-plexes}}

  An $s$-\emph{plex}, $s \geq 1$, is an $n$-vertex graph $G$ with minimum degree $\delta(G) \geq n-s$. They were introduced by Seidman and Foster~\cite{seidman1978graph}
  as a generalization of cliques, which are $1$-plexes. Problems related to $s$-plexes have been studied in parameterized complexity before \cite{guo2009more, van2012approximation}.
  Let $\plex{s}$ denote the class of graphs that are disjoint unions of $s$-plexes. Given a vertex $v$ and a set $W \subseteq V \setminus \{v\}$ of size at
  least $s$ in an $s$-plex, we know that there is an edge $vw$ for some $w \in W$. Thus, $s$-plexes allow for better control than a
  small independence number as we can guarantee the existence of an edge incident to some specific vertex instead of just getting some edge in
  a large set of vertices. 

  \begin{thm}
    \label{s_plex_replaceable}
    If $G$ is a $k$-plex, i.e., if $\delta(G) \geq n - k$, then $G$ is $\Oh(k)$-replaceable.
  \end{thm}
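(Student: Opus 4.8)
The plan is to follow the same replacement strategy as in the proof of \cref{ind_num_replaceable} but to exploit the stronger structural guarantee available in a $k$-plex: rather than merely finding \emph{some} edge inside a large vertex set, we can find an edge incident to a \emph{prescribed} vertex, as long as the other endpoint ranges over a set of at least $k$ vertices. Concretely, I would again reduce to the case of an $M$-alternating path $P = a_1 b_1 a_2 b_2 \ldots a_\ell b_\ell$ that starts and ends with a blue edge, since subpaths of other parities can be handled by the same technicality already invoked in \cref{ind_num_replaceable}. I would assume for contradiction that $|P|$ exceeds some explicit linear-in-$k$ threshold (something like $4k+c$ for a small constant $c$) and derive a strictly shorter replacement path with the same endpoints living inside $V(P)$.

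The key step is the following shortcut. Consider the vertex $a_1$ (the first $A$-vertex). Among the later $A$-vertices of $P$ at odd positions, pick a block $W$ of at least $k$ of them; since $\delta(G) \geq n-k$, the vertex $a_1$ must have a neighbour $a_q \in W$. The edge $a_1 a_q$ is blue because $a_q$ is already incident to a red edge on $P$ (the matched edge of the alternating path), and $a_1$ is $M$-exposed in the augmenting-path picture — or in the general alternating case we have already reduced so that $a_1$'s incident edge $a_1b_1$ is blue, so taking $a_1 a_q$ keeps parity. This lets us shortcut from $a_1$ directly to $a_q$, jumping over roughly $2q-2$ vertices at the cost of one edge, which already removes a linear-in-$k$ number of vertices from one end. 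A symmetric shortcut is applied at the far end $b_\ell$: among $B$-vertices at even positions in a block of size at least $k$ near the end, $b_\ell$ has a neighbour $b_p$, and the edge $b_p b_\ell$ is blue for the analogous reason. Combining the two shortcuts, $P$ is replaced by $P_{[a_q, b_p]}$ together with the two new blue edges $a_1 a_q$ and $b_p b_\ell$; verifying that the result is an $M$-alternating path of the correct parity with $V(P') \subseteq V(P)$ and $|P'| < |P|$ is then a short parity bookkeeping exercise, exactly in the style of the length comparison at the end of the proof of \cref{ind_num_replaceable}.

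The main obstacle — really the only delicate point — is making sure the degree condition is applied to $G$ itself rather than to the induced subgraph $G[V(P)]$: $\delta(G)\ge n-k$ does \emph{not} imply $\delta(G[V(P)])\ge |V(P)|-k$, so I must phrase the neighbour-finding step as ``$a_1$ has at most $k-1$ non-neighbours in all of $G$, hence at most $k-1$ non-neighbours among the $k$ candidates in $W$, hence at least one neighbour in $W$,'' which is valid precisely because $W \subseteq V \setminus \{a_1\}$ and $|W| \geq k$. A secondary bit of care is needed to handle general alternating paths whose endpoints are not exposed: as the paper notes before \cref{replaceable_phases}, it suffices to strip the first and last edge and replace the interior, so I would state the bound for the exposed-endpoint (augmenting-like) case and add the constant $+2$ for the stripped edges, landing at an explicit replacement length such as $4k+O(1) \in \Oh(k)$. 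Finally I would remark, as the text already foreshadows, that combining this with \cref{replaceable_phases} gives the row ``Minimum degree $n-s$'' of Table~\ref{tab:positive_results}.
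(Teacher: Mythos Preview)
Your core idea---use the $k$-plex guarantee to find a shortcut edge from a fixed vertex near one end of $P$ into a set $W$ of $k$ candidates---is exactly the paper's idea, but the execution has a real parity gap that you wave off as ``short bookkeeping.'' With $P=a_1b_1a_2b_2\cdots a_\ell b_\ell$ starting blue, every $a_ib_i$ is blue and every $b_ia_{i+1}$ is red. Hence after the blue shortcut $a_1a_q$, continuing \emph{forward} along $P$ gives the blue edge $a_qb_q$: two blue edges in a row, not alternating. Going backward from $a_q$ does give the red edge $b_{q-1}a_q$, but then you are heading toward $a_1$; combining this with your second shortcut $b_pb_\ell$ produces a short alternating replacement only when $p<q$ and $q-p=O(k)$, which your stated placements (``near the end'' for $b_p$, and $a_q$ apparently near the start so as to ``remove a linear-in-$k$ number of vertices from one end'') do not arrange. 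In short, with $A$-vertices as the target set neither the one-shortcut nor the two-shortcut reading of your plan yields an alternating replacement.

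The paper's proof avoids all of this by using a \emph{single} shortcut and placing $W$ at the far end with the correct parity: it takes $W=\{v_j,v_{j-2},\ldots,v_{j-2(k-1)}\}$ where $j$ is chosen so that $v_jv_{j+1}$ is red (or $v_j$ is the exposed last vertex). Then $v_sv_{s+1}$ is red for every $v_s\in W$; in your $a/b$ notation these are $B$-vertices, not $A$-vertices. The replacement is simply $P'=P_{[v_1,v_i]}\,P_{[v_s,v_\ell]}$ via the blue edge $v_iv_s$, of length at most $2k+2$ (tighter than your anticipated $4k+O(1)$), and the choice of $i\in\{1,2,3\}$ and $j\in\{\ell-2,\ell-1,\ell\}$ already handles all four endpoint parities without the strip-and-reattach trick. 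Your remark about applying the degree bound in $G$ rather than in $G[V(P)]$ is correct and is indeed the only delicate point once $W$ is placed with the right parity.
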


  \begin{proof}
    Let $M$ be a matching in $G$ and suppose that $P = v_1 \ldots v_\ell$ is an $M$-alternating path in $G$ of length
    $\ell - 1 \geq 2k + 5$. Let $i$ be the smallest integer such that $v_i$ is $M$-exposed or $v_{i - 1} v_i$ is red, i.e., $v_{i - 1} v_i\in M$.
    Let $j$ be the largest integer such that $v_j$ is $M$-exposed or $v_j v_{j+1}$ is red.
    We have $i \in \{1, 2, 3\}$ and $j \in \{\ell - 2, \ell - 1, \ell\}$.

    Consider the vertices $W = \{v_j, v_{j-2}, v_{j-4}, \ldots, v_{j-2(k-1)}\}$ and observe that $v_i \notin W$ as $3 < \ell - 2k$.
    The set $W$ contains $k$ vertices and since $G$ is a $k$-plex there must be some $v_s \in W$ such that $v_iv_s \in E$.
    By choice of $i$ and $s$ the edges $v_i v_{i+1}$ and $v_{s-1} v_s$ must be blue. Similarly, by choice of $i$, the edge $v_iv_s$ is blue. Hence, as seen in \cref{fig:s_plex_proof}, we can replace $P$ by the shorter $M$-alternating path
    $P' = P_{[v_1, v_i]} P_{[v_s, v_\ell]}$ with length $|P'| \leq 2 + 2(k-1) + 2 = 2k + 2$.
  \end{proof}
  
  \begin{figure}
    \centering
    \includegraphics[scale = 1]{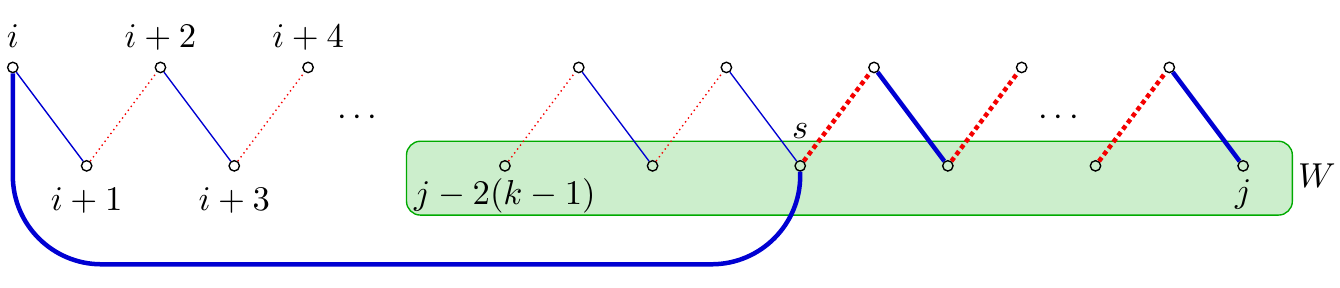}
    \caption{Replacing alternating paths in a $k$-plex.}
    \label{fig:s_plex_proof}
  \end{figure}

  By \cref{replaceable_phases}, we obtain the following corollary.

  \begin{cor}
    Every algorithm employing the phase framework requires at most $\Oh(n - \delta(G))$ phases.
    In particular, \textsc{Maximum Matching} can be solved in time $\Oh((n - \delta(G))m)$.

    Every algorithm employing the phase framework requires at most $\Oh(\sqrt{\dist_{\plex{s}}(G)} s)$ phases.
    Hence, \textsc{Maximum Matching} can be solved in time $\Oh(\sqrt{\dist_{\plex{s}}(G)} sm)$.
    In particular, \textsc{Maximum Matching} can be solved in time $\Oh(\sqrt{k}m)$ on graphs that
    have distance at most $k$ to cluster graphs.
  \end{cor}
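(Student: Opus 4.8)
The plan is to derive everything from \cref{s_plex_replaceable} and \cref{replaceable_phases}, after first upgrading \cref{s_plex_replaceable} from a single $s$-plex to the whole class $\plex{s}$ of disjoint unions of $s$-plexes. The key observation for this upgrade is that an $M$-alternating path is a connected subgraph and hence lies entirely within one connected component of the host graph; moreover a path that replaces it uses only vertices of that same component. Consequently, if every connected component of a graph $H$ lies in $\repl{\ell}$, then $H \in \repl{\ell}$ as well. Applying \cref{s_plex_replaceable} componentwise therefore gives $\plex{s} \subseteq \repl{cs}$ for some absolute constant $c$.

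For the minimum-degree statement I would set $k = n - \delta(G)$, so that $G$ is itself a $k$-plex and thus $G \in \repl{ck}$ by \cref{s_plex_replaceable}; in particular $\dist_{\repl{ck}}(G) = 0$. To feed this into \cref{replaceable_phases} I would first note that $\repl{\ell}$ is closed under vertex deletion: an $M$-alternating path in an induced subgraph is still an $M$-alternating path in the supergraph, and the replacing path avoids any deleted vertex since it stays inside the original path. Hence deleting any single vertex witnesses $\dist_{\repl{ck}}(G) \le 1$, and \cref{replaceable_phases} with $\ell = ck$ yields $\Oh(\sqrt{1}\cdot ck) = \Oh(n-\delta(G))$ phases; \cref{linear_phase} then gives the time bound $\Oh((n-\delta(G))m)$. (Alternatively, one can argue directly: membership in $\repl{ck}$ caps the length of every shortest $M$-augmenting path by $ck$, and since consecutive phases have strictly increasing augmenting-path length, the number of phases is at most $\Oh(k)$ — this is essentially the degenerate, $k=0$, instance of \cref{replaceable_phases}.)

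For the $\plex{s}$ statement, given $\dist_{\plex{s}}(G) \le k$ I would fix a modulator $S$ with $|S| \le k$ and $G - S \in \plex{s} \subseteq \repl{cs}$, so that $\dist_{\repl{cs}}(G) \le k$; \cref{replaceable_phases} with $\ell = cs$ then gives $\Oh(\sqrt{k}\cdot s)$ phases and, via \cref{linear_phase}, time $\Oh(\sqrt{k}\,sm)$. Finally, a cluster graph is exactly a disjoint union of cliques, i.e.\ a member of $\plex{1}$; specialising $s = 1$ gives $\Oh(\sqrt{k})$ phases and time $\Oh(\sqrt{k}m)$ on graphs at vertex-deletion distance $k$ from cluster graphs.

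I do not expect a genuinely hard step here, since the substance is already in \cref{s_plex_replaceable} and \cref{replaceable_phases} and the rest is bookkeeping. The one point that warrants care — and where I would spell out the argument — is the passage from a single $s$-plex to the union class $\plex{s}$, together with the induced-subgraph closure of $\repl{\ell}$; both hinge on the simple but essential fact that alternating paths, and their replacements, are confined to a single connected component.
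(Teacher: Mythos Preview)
Your proposal is correct and matches the paper's intended argument; the paper itself simply writes ``By \cref{replaceable_phases}, we obtain the following corollary'' without spelling out the componentwise passage from a single $s$-plex to $\plex{s}$ or the degenerate $k=0$ case, both of which you handle appropriately.
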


  
\subsection{Neighborhood diversity}
\label{sec_nbhd_div}
  In this section we consider the parameter neighborhood diversity, which was introduced by Lampis~\cite{lampis2012algorithmic} and has seen further investigation in
  parameterized complexity research \cite{fialadistancelabeling, ganian2012using, ganian2013expanding, knop2017simplified}.

  \begin{dfn}[\cite{lampis2012algorithmic}]
    Two vertices $v$ and $w$ have the same type if $N(v) \setminus \{w\} = N(w) \setminus \{v\}$.
    This defines an equivalence relation on the vertices of $G$. The \emph{neighborhood diversity} of $G$,
    denoted by $\nd(G)$, is the number of equivalence classes of this equivalence relation.
  \end{dfn}

  \begin{thm}
    \label{neighborhood_diversity_replaceable}
    If $G$ is a graph such that $\nd(G) \leq k$, then $G$ is $\Oh(k)$-replaceable.
  \end{thm}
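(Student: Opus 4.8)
The plan is to follow the template of the proof of \cref{s_plex_replaceable}: I will show that any \emph{long} $M$-alternating path $P = v_1 \ldots v_\ell$ (of length more than roughly $2k$) can be replaced by a strictly shorter $M$-alternating path on a subset of $V(P)$, and then iterate. Since the ``replaces'' relation is transitive and lengths strictly decrease, iterating yields a replacement of length $\Oh(k)$, i.e., $G$ is $\Oh(k)$-replaceable.

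The shortcut exploits neighborhood diversity as follows. If two vertices $v_p, v_q$ on $P$ with $p < q$ have the same type, then from $v_q v_{q+1} \in E$ and $v_{q+1} \neq v_p$ we get $v_{q+1} \in N(v_q) \setminus \{v_p\} = N(v_p) \setminus \{v_q\}$, hence $v_p v_{q+1} \in E$. The candidate replacement is $P' = P_{[v_1, v_p]}\, P_{[v_{q+1}, v_\ell]}$ using the chord $v_p v_{q+1}$; it visits only vertices of $P$, has endpoints $v_1, v_\ell$, and has length $(\ell - 1) - (q - p)$, so it is strictly shorter once $p < q$. The counting is by pigeonhole: among at most $k$ types, any $k+1$ vertices of $P$ lying at positions of one fixed parity must contain two of the same type, so a suitable pair $v_p, v_q$ exists as soon as $P$ is long enough.

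What needs care — and is the main obstacle — is ensuring $P'$ is a valid \emph{alternating} replacement: correct edge colours at the two splice points and matching parity at the endpoints. As in \cref{s_plex_replaceable}, I would first trim $P$ by taking the smallest $i \in \{1,2,3\}$ with $v_i$ exposed or $v_{i-1}v_i \in M$, and the largest $j \in \{\ell-2, \ell-1, \ell\}$ with $v_j$ exposed or $v_j v_{j+1} \in M$ (these exist because an alternating path has no two consecutive edges of equal colour). Then $P_{[v_i, v_j]}$ starts and ends with a non-matching edge, so $j - i$ is odd, and I restrict the pigeonhole to the $(j-i+1)/2$ positions in $\{i, i+2, \ldots, j-1\}$; for such $p$ one has $v_p v_t \notin M$ for all $t > p$, so the chord $v_p v_{q+1}$ is non-matching and the path can continue at $v_{q+1}$ along a matching edge. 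The boundary cases $p = 1$ and $q + 1 = \ell$ are exactly where the trimming pays off, since then $v_1$, respectively $v_\ell$, is forced to be exposed, so that $P'$ keeps the endpoint parity. Finally, since $i - 1 \le 2$ and $\ell - j \le 2$ always hold, any path of length more than $2k+3$ has $j - i \ge 2k+1$, hence $(j-i+1)/2 > k$, so the required same-type pair exists; iterating brings the length down to at most $2k+3$, which is $\Oh(k)$.
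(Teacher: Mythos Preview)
Your proof is correct and uses the same core idea as the paper: pigeonhole on neighborhood-diversity types along the path, then shortcut via a chord between same-type vertices. The paper samples vertices at spacing~$3$ (positions $v_2,v_5,\ldots,v_{6k+2}$), needs \emph{three} vertices of one type, and then splits into cases depending on whether at least two of the edges $u_rv_r,u_sv_s,u_tv_t$ are red or at least two are blue; by instead sampling at spacing~$2$ within the trimmed subpath $P_{[v_i,v_j]}$ you force all candidate chords $v_pv_{q+1}$ to be blue a priori, so two same-type vertices already suffice and the case split disappears. This buys you a cleaner argument and a better constant (replacement length $\le 2k+\Oh(1)$ instead of the paper's $6k+\Oh(1)$), but otherwise the route is the same.
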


  \begin{proof}
    Again, we will only consider alternating paths starting and ending with a blue edge.
    Let $M$ be a matching in $G$ and suppose that $P = v_1 \ldots v_\ell$ is an $M$-alternating path
    starting and ending with a blue edge and of length $\ell - 1 \geq 6k + 3$.
    Consider the set $V' = \{v_2, v_5, \ldots, v_{6k + 2} \}$, which contains $2k+1$ vertices.
    By the pigeonhole principle, there must be three vertices $v_r, v_s, v_t \in V'$, with $r < s < t$,
    that are of the same type. Define $u_i = v_{i-1}$ and $w_i = v_{i+1}$ for $i \in \{r,s,t\}$.

    Now, consider the three edges $u_r v_r, u_s v_s$ and $u_t v_t$. We distinguish between two cases and
    will construct a shorter alternating path $P'$ that replaces $P$ in either case.

    \begin{enumerate}
    \item At least two of the edges $u_r v_r, u_s v_s, u_t v_t$ are red.

    Without loss of generality assume that $u_r v_r$ and $u_s v_s$ are red, then
    \begin{align*}
      P' = P_{[v_1, v_r]} P_{[w_s, v_\ell]}
    \end{align*}
    is shorter than $P$ and replaces $P$. The edge $v_r w_s$ must exist because $v_r$ and $v_s$ have the same type,
    i.e., $w_s \in N(v_s) \setminus \{v_r\} = N(v_r) \setminus \{v_s\}$ as clearly $v_r \neq w_s$ and $v_s \neq w_s$.

    \item At least two of the edges $u_r v_r, u_s v_s, u_t v_t$ are blue.

    Without loss of generality assume that $u_r v_r$ and $u_s v_s$ are blue, then
    \begin{align*}
      P' = P_{[v_1, u_r]} P_{[v_s, v_\ell]}
    \end{align*}
    is shorter than $P$ and replaces $P$. The edge $u_r v_s$ must exist because $v_r$ and $v_s$ have the same type.
    \end{enumerate}
    Hence, any $M$-alternating path can be replaced by a path of length at most $\Oh(k)$.
  \end{proof}

  Combining the replaceability result, \cref{neighborhood_diversity_replaceable}, with the main lemma, \cref{replaceable_phases},
  yields the following corollary.

  \begin{cor}
    Every algorithm employing the phase framework requires at most $\Oh(\nd(G))$ phases.
    In particular, \textsc{Maximum Matching} can be solved in time $\Oh(\nd(G) m)$.

    Let $\class_k$ denote the class of graphs such that each connected component has neighborhood diversity at most $k$. Every algorithm
    employing the phase framework requires at most $\Oh(\sqrt{d_{\class_k}(G)} k)$ phases. In particular, \textsc{Maximum Matching} can be solved
    in time $\Oh(\sqrt{d_{\class_k}(G)} k m)$.
  \end{cor}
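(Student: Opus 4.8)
The plan is to obtain both statements by feeding Theorem~\ref{neighborhood_diversity_replaceable} into the replaceability framework. For the first part I would first record the general fact that $\ell$-replaceability bounds the number of phases: if $G \in \repl{\ell}$ then, for the matching $M$ present at any point during an execution of \cref{phase_framework}, every $M$-augmenting path can be replaced by an $M$-alternating path of length at most $\ell$, and since a replacement has the same endpoints and the same parity with respect to $M$, that replacement is itself $M$-augmenting. Hence the shortest $M$-augmenting path always has length at most $\ell$. Augmenting paths have odd length, and by \cref{path_length_increasing} together with the partition of the path sequence into phases, the common augmenting-path length of a phase is odd and strictly larger than that of the previous phase; therefore at most $\Oh(\ell)$ phases occur. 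Applying this with $\ell \in \Oh(k)$ coming from Theorem~\ref{neighborhood_diversity_replaceable} for $k = \nd(G)$ gives $\Oh(\nd(G))$ phases, and \cref{linear_phase} turns this into running time $\Oh(\nd(G)\,m)$.

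For the second part I would show $\class_k \subseteq \repl{\Oh(k)}$. Let $G \in \class_k$, fix a matching $M$, and let $P$ be an $M$-alternating path. Then $P$ lies entirely inside a single connected component $C$ of $G$, and $P$ is an $(M \cap E(C))$-alternating path in the graph $C$; exposedness of vertices of $C$, parity, and edge colours all agree whether computed with respect to $M$ or to $M \cap E(C)$, since every $M$-edge meeting $V(C)$ lies in $E(C)$. Because $\nd(C) \le k$, Theorem~\ref{neighborhood_diversity_replaceable} applied to $C$ yields an $(M \cap E(C))$-alternating path $P'$ of length $\Oh(k)$ replacing $P$ in $C$; as $V(P') \subseteq V(P) \subseteq V(C)$ and the relevant data coincide, $P'$ also replaces $P$ in $G$. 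So each component being $\Oh(k)$-replaceable makes $G$ itself $\Oh(k)$-replaceable, i.e.\ $\class_k \subseteq \repl{\Oh(k)}$.

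From $\class_k \subseteq \repl{\ell}$ with $\ell \in \Oh(k)$ it follows that $\dist_{\repl{\ell}}(G) \le \dist_{\class_k}(G)$ for every graph $G$: any modulator $S$ with $G - S \in \class_k$ also witnesses $G - S \in \repl{\ell}$. Feeding this into \cref{replaceable_phases} gives $\Oh(\sqrt{\dist_{\class_k}(G)}\cdot k)$ phases, and \cref{linear_phase} gives the claimed time bound $\Oh(\sqrt{\dist_{\class_k}(G)}\,k\,m)$.

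The only genuinely delicate point — and the step I would verify most carefully — is the component-wise reasoning in the second part: neighborhood diversity of a disconnected graph is \emph{not} bounded by the maximum over its components (a disjoint union of $t$ triangles already has $\nd = 3t$), so Theorem~\ref{neighborhood_diversity_replaceable} cannot be invoked for $G$ as a whole, and one really needs both that alternating paths never leave a component and that a disjoint union of $\ell$-replaceable graphs is $\ell$-replaceable. Everything else is a routine composition of Theorem~\ref{neighborhood_diversity_replaceable}, \cref{replaceable_phases}, and \cref{linear_phase}, exactly in the style of the corollaries following Theorems~\ref{ind_num_replaceable} and~\ref{s_plex_replaceable}.
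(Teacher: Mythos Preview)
Your proposal is correct and follows essentially the same route as the paper, which simply states that the corollary is obtained by combining Theorem~\ref{neighborhood_diversity_replaceable} with Lemma~\ref{replaceable_phases}. Your only deviation is that for the first part you argue directly that $\ell$-replaceability caps the shortest augmenting-path length and hence the number of phases, rather than invoking Lemma~\ref{replaceable_phases} with $k\le 1$; this is a cosmetic difference, and your explicit component-wise justification that $\class_k\subseteq\repl{\Oh(k)}$ is exactly the detail the paper leaves implicit.
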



\subsection{Modular decomposition}
\label{sec_mod_dec}
  The concept of \emph{modular decompositions} was introduced by Gallai~\cite{gallai1967transitiv} to aid in the recognition of comparability graphs. Several linear time algorithms have been given to
  compute the modular decomposition of a graph \cite{cournier1994new, mcconnell1999modular, tedder2008simpler}. Several articles have studied problems parameterized by a width measure related
  to the modular decomposition, called \emph{modular-width} \cite{coudertcliquewidth, Fomin2018, kratschnelles}. On a high level, the modular decomposition recursively partitions the vertex set
  of a graph into parts that have simple interactions between each other. We will now give the formal definitions.

  Let $G = (V, E)$ be a graph. A vertex set $M \subseteq V$ is a \emph{module} if for all $v, w \in M$ it holds that $N(v) \cap \overline{M} = N(w) \cap \overline{M}$. The modules $\emptyset$, $V$, and singletons are called \emph{trivial}. A module $M$ is \emph{strong} if for every other module $M'$ of $G$ we have that $M \cap M' = \emptyset$, $M \subseteq M'$, or $M' \subseteq M$; a graph that only admits trivial modules is called \emph{prime}. Every non-singleton graph can be uniquely partitioned into a set of maximal strong modules $\mathcal{P} = \{M_1, \ldots, M_\ell\}$, with $\ell \geq 2$, called \emph{modular partition}. Two modules $M$ and $M'$ are said to be \emph{adjacent} if there exist $u \in M, v \in M'$ with $uv \in E$. In this case every vertex of $M$ is adjacent to every vertex of $M'$. By recursively partitioning the graphs $G[M_i]$ in this way, until every module is a single vertex, one obtains the \emph{modular decomposition} of $G$.

  The modular decomposition of $G$ can be represented as a rooted tree, where the root corresponds to $G$ and child nodes correspond to the graphs induced by the modules of their parents' modular partition. There are three possibilities for internal nodes of the modular decomposition: the quotient graph is a prime graph, clique, or independent set. Correspondingly, we call the node a \emph{prime node}, \emph{series node}, or \emph{parallel node}. The \emph{modular-width} of $G$, denoted $\mw(G)$, is the largest number of children of a prime node in the modular decomposition, but at least 2. The \emph{modular-depth} of $G$, denoted $\md(G)$, is the depth of the modular decomposition tree.

  Inspired by Coudert et al.~\cite{coudertcliquewidth}, we analyze the length of shortest augmenting paths with respect to the modular-width. Coudert et al.\ obtain a length of $\Theta(\mw(G))$ by replacing each module with a matching. The phase framework does not perform this replacement step, which causes us to be unable to control the augmenting path length with only the modular-width. As seen in \cref{lb_trivially_perfect}, there is a family of cographs, i.e., graphs with modular-width 2, that has unbounded shortest augmenting path length. It is still possible to bound the shortest augmenting path length in terms of an exponential function depending on the modular-width and the modular-depth. Note that this is only interesting due to the exclusion of parallel and series nodes in the definition of modular-width. Otherwise, this bound would be implied by a simple bound on the number of vertices.

  We will distinguish between two types of edges on alternating paths and bound them separately. Let $\mathcal{P} = \{M_1, \ldots, M_k\}$ be the modular partition of $G$. We say that an edge $e = vw \in E(G)$ is \emph{internal} if there is some $i$ so that $v,w \in M_i$, if such $i$ does not exist then $e$ is \emph{external}.

  \begin{lem}
    \label{module_border}
    Let $N$ be a matching in $G$ and let $\mathcal{P} = \{M_1, \ldots, M_k\}$ be its modular partition. Every $N$-alternating path can be replaced by an $N$-alternating path that uses at most 8 edges of $\delta(M_i)$ for all $i$.
  \end{lem}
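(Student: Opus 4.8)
The plan is to fix one module at a time and repeatedly shorten the alternating path by a local surgery that deletes one of its excursions into that module whenever there are too many, arguing that these surgeries compose without interfering across modules.

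First I would set up notation for a fixed module $M = M_i$ and $X = N(M) \setminus M$. Because $M$ is a module and any two adjacent modules are completely joined, $\delta(M)$ is precisely the complete bipartite graph between $M$ and $X$; in particular every pair $ux$ with $u \in M$, $x \in X$ is an edge, and $ux \in N$ iff $x$ is the $N$-partner of $u$. I would decompose $P$ into its maximal subpaths inside $M$ (call them \emph{visits}), separated by maximal subpaths disjoint from $M$ (\emph{detours}); a visit is \emph{internal} if it contains neither $v_1$ nor $v_\ell$, and then it has a unique entry edge and a unique exit edge in $\delta(M)$, each joining its first, resp.\ last, vertex to a vertex of $X$. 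The path uses $2t$ edges of $\delta(M)$ where $t$ is its number of internal visits, plus one more for each of $v_1, v_\ell$ that lies in $M$; so it suffices to replace $P$ by a path with $t \le 3$ for every module, which gives $|E(P') \cap \delta(M_i)| \le 2\cdot 3 + 2 = 8$.

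The core are two surgery moves, each of which deletes one contiguous interior block of $P$ and splices in a single edge of $\delta(M)$ — an operation that is automatically \emph{order-preserving}, meaning the new vertex sequence is that of $P$ with a contiguous block erased. \emph{Blue-entry move:} if two internal visits $S_p, S_q$ with $p < q$ in path order both have a blue entry edge $x_p u_p$, $x_q u_q$, then follow $P$ up to $x_p$, add the edge $x_p u_q$, and resume $P$ from $u_q$; since the entry edge of $S_q$ is not in $N$, the vertex $u_q$ is matched inside $S_q$ or to $y_q$, so $x_p u_q \notin N$, i.e.\ it is blue, which is exactly the colour needed to keep the path alternating at $x_p$ and $u_q$; this deletes all of $S_p$, so $t$ strictly drops. \emph{Red-entry move:} if two internal visits $S_p, S_q$ with $p < q$ have red entry edges and there is at least one visit strictly between them, follow $P$ up to $u_p$, add the edge $u_p x_q$ (blue, because the $N$-partner of $u_p$ is $x_p \ne x_q$), and resume $P$ from $x_q$; this deletes every visit strictly between $S_p$ and $S_q$ and shrinks $S_p$ to the single vertex $u_p$, so again $t$ strictly drops. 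In both cases the endpoints of the path and their incident colours are untouched and the path does not get longer, so the result replaces $P$.

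Then comes the case analysis: if $P$ has $t \ge 4$ internal visits of $M$, look at their entry colours — if at least three are red, the first and last red-entry internal visits have a visit between them and the red-entry move applies; otherwise at least two are blue and the blue-entry move applies. Iterating brings $t$ down to at most $3$. Finally I would run this for $M_1, \ldots, M_k$ in turn: because every move is order-preserving, the membership sequence of any already-processed $M_j$ along the path can only lose maximal blocks, so its number of internal visits never grows again; hence the final path replaces $P$ (replacement is transitive) and has at most $3$ internal visits — and thus at most $8$ boundary edges — of every $M_i$ at once. The hard part is exactly this simultaneity: a naive shortcut for $M_i$ could create new crossings of some $M_j$, and the fix is to insist that every move is ``delete a contiguous interior block, add one edge'', since erasing positions from a binary string can only merge or delete blocks, never split one. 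A secondary subtlety, and the reason the bound is $8$ rather than smaller, is that a red entry edge forces the spliced edge at that end to be the matched edge, so the red-entry move can only shrink (not delete) a visit and hence needs a third visit in between.
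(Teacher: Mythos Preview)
Your proof is correct and follows essentially the same approach as the paper: both exploit the complete bipartite structure of $\delta(M)$ to splice in a single crossing edge that deletes a contiguous interior segment of $P$, and both rely on the order-preserving nature of such splices to guarantee that processing one module never increases the boundary crossings of another. The paper organizes the count as at most two red and two blue \emph{entering} edges (and, by reversal, the same for \emph{leaving} edges), reaching $2+2+2+2=8$ via one replacement per colour and direction, whereas you count \emph{internal visits} and iterate two surgery moves until $t\le 3$, reaching $2\cdot 3+2=8$; this is a difference in bookkeeping rather than in substance, and your explicit ``contiguous-deletion'' justification for non-interference is arguably cleaner than the paper's brief remark.
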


  \begin{proof}
    Let $P = v_1 \ldots v_\ell$ be an $N$-alternating path and let $M = M_i$ for some $i$. We say that an edge $e = v_{i-1}v_i$ of $P$ \emph{enters} $M$ if $v_{i-1} \notin M$ and $v_{i} \in M$.

    First, we show that $P$ can have at most 2 red edges entering $M$. If
    \begin{align*}
      v_{i_1 - 1}v_{i_1}, \enskip \ldots, \enskip v_{i_k - 1}v_{i_k}, \quad i_1 < i_2 < \cdots < i_k, \quad \text{where } k \geq 3,
    \end{align*}
    are the red edges of $P$ that enter $M$, then $P$ can be replaced by
    \begin{align*}
      P' = P_{\left[v_1, v_{i_1}\right]} P_{\left[v_{i_k - 1}, v_\ell\right]},
    \end{align*}
    since $v_{i_1} v_{i_k - 1}$ exists and must be blue. Hence, at most 2 red edges entering $M$ remain.
    This can introduce additional blue edges that enter $M$, these will be bounded in the next step.

    In a similar manner, we show that $P$ can be replaced by a path $P'$ that has at most two blue edges entering $M$. Let
    \begin{align*}
      v_{i_1 - 1}v_{i_1}, \enskip \ldots, \enskip v_{i_k - 1}v_{i_k}, \quad i_1 < i_2 < \cdots < i_k, \quad \text{where } k \geq 3,
    \end{align*}
    be the blue edges of $P$ that enter $M$, then $P$ can be replaced by
    \begin{align*}
      P' = P_{\left[v_1, v_{i_1 - 1}\right]} P_{\left[v_{i_k}, v_{i_\ell}\right]}, \enskip \text{ or } P'' = P_{\left[v_1, v_{i_1 - 1}\right]}P_{\left[v_{i_k-1}, v_\ell\right]}.
    \end{align*}
    The edges $v_{i_1 - 1} v_{i_k}$ and $v_{i_1 - 1} v_{i_{k - 1}}$ exist due to $M$ being a module and at least one of them must be blue. If $v_{i_1 - 1} v_{i_k}$ is red, then we use $P''$ as replacement path.
    Hence, we have reduced the number of blue edges entering $M$ to at most 2.

    By considering $P$ in the reverse direction, this also bounds the number of edges of $P$ \emph{leaving} $M$, i.e., $e = v_{i-1} v_i$ with $v_{i-1} \in M$ and $v_i \notin M$. Observe that the replacements for entering edges could not have increased the number of leaving edges. In conclusion, we can replace $P$ in such a way that the resulting path uses at most 8 edges of $\delta(M)$.

    Notice that performing these replacements for $M_i$ does not increase the number of entering or leaving edges for any other $M_j$ with $j \neq i$. Hence, by iterating through all $M_i$ and performing the replacements
    we obtain the bound of 8 for each $\delta(M_i)$.
  \end{proof}

  \begin{lem}
    \label{series_node_external}
    Let $G$ be a series node and $N$ be a matching in $G$. Every $N$-alternating path in $G$ can be replaced so that it contains at most 4 blue external edges and at most 6 red external edges.
  \end{lem}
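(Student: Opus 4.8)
The plan is to take an arbitrary $N$-alternating path $P=v_1\ldots v_\ell$ in the series node $G$ and repeatedly apply local ``shortcut'' replacements, each of which strictly decreases the number of external edges of one fixed colour, until none applies. Recall that in a series node every two vertices lying in distinct modules of the modular partition are adjacent, so we may use as a shortcut \emph{any} edge between two path vertices that lie in different modules, provided its colour fits the alternation. Every replacement will have the shape $P'=P_{[v_1,x]}\,(xy)\,P_{[y,v_\ell]}$ with $x$ strictly before $y$ on $P$ and both strictly interior; such a $P'$ is a simple $N$-alternating path, keeps $v_1,v_\ell$ together with the first and last edges of $P$, hence replaces $P$, and the only edge it adds outside $P$ is the bridge $xy$. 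Since ``replaces'' is transitive, composing all these steps yields one valid replacement.

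First I would set the two boundary edges aside: the first and the last edge of $P$ never lie strictly inside a removed segment and are never a bridge, so they survive all replacements unchanged; together they account for at most $2$ red and at most $2$ blue external edges. It therefore suffices to bound the number of \emph{interior} external edges of each colour that remain once no shortcut applies.

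Next, reduce the interior red external edges. List them in path order as $e_1,\dots,e_r$ with $e_t=a_tb_t\in N$, $a_t$ before $b_t$; since consecutive $P$-edges differ in colour, these edges lie pairwise at distance at least two. For $i<j$ the candidate bridge is $b_ia_j$: both $b_i$ and $a_j$ are interior and matched (via $e_i$, resp.\ $e_j$) to vertices other than one another, so if $b_ia_j\in E(G)$ it is blue, and the parity check at $b_i$ and $a_j$ (where $e_i,e_j$ are red) shows $P_{[v_1,b_i]}\,(b_ia_j)\,P_{[a_j,v_\ell]}$ is a valid replacement; when $j\ge i+2$ it deletes the interior red external edge $e_{i+1}$ and adds only the blue bridge, so the interior red external count strictly drops. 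If no such shortcut exists then $b_i,a_j$ share a module for all $i<j$ with $j\ge i+2$; writing $X$ for the module of $b_1$, the pairs $(1,3),(1,4),(1,5),(2,4),(3,5)$ then force $a_3,a_4,a_5,b_2,b_3\in X$, so $e_3=a_3b_3$ would be internal, a contradiction. Hence this is impossible once $r\ge 5$, and after exhausting the shortcuts at most $4$ interior red external edges remain. Symmetrically, listing the interior blue external edges as $f_1,\dots,f_s$ with $f_t=c_td_t\notin N$, the candidate bridge $c_id_j$ for $i<j$ is forced to be blue (each of $c_i,d_j$, being interior, carries a red $P$-edge on its boundary side) and lies outside $N$, and $P_{[v_1,c_i]}\,(c_id_j)\,P_{[d_j,v_\ell]}$ then deletes both $f_i$ and $f_j$; the analogous pigeonhole over $(1,2),(1,3),(2,3)$ (module of $c_1$ forced to contain $c_2,d_2,d_3$, contradicting $f_2$ external) rules out ``no shortcut'' once $s\ge 3$, leaving at most $2$ interior blue external edges. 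Since each red-reduction bridge is itself a blue external edge, I would perform all red reductions before all blue reductions; the blue reductions only delete (never create) red external edges and never touch the boundary edges, so the bounds combine to at most $6$ red and at most $4$ blue external edges.

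The hard part is making the bridge choice simultaneously satisfy three demands: its two endpoints must lie in distinct modules (an existence condition discharged by pigeonhole over several external edges), the induced edge must have exactly the colour dictated by the alternation at both its endpoints, and the removed segment must actually contain an external edge of the colour being reduced. For red external edges the colour constraint rules out every bridge except $b_ia_j$, which keeps both $e_i$ and $e_j$, and this is what forces windows of size five rather than three and hence the asymmetric bounds $6$ versus $4$. Getting the ordering of the two reduction phases right, so that the blue bridges created while fixing the red edges are cleaned up afterwards, is the remaining subtlety; everything else is routine parity bookkeeping at the bridge endpoints.
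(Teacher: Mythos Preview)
Your proposal is correct and follows essentially the same approach as the paper's proof: set aside the first and last edge of $P$, reduce the interior blue external edges to at most~$2$ via shortcuts $c_id_j$ (for any $i<j$) and the interior red external edges to at most~$4$ via shortcuts $b_ia_j$ (for $j\ge i+2$), performing the red reductions before the blue ones so the blue bridges created along the way are cleaned up. The only difference is presentational---you make the pigeonhole step explicit (the paper merely asserts that a suitable pair exists once there are~$3$, respectively~$5$, interior external edges of the given colour), and you discuss the interior/boundary distinction a bit more carefully.
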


  \begin{proof}
    Let $P$ be an $N$-alternating path and let $\mathcal{P} = \{M_1, \ldots, M_k\}$ be the modular partition of $G$.
    Note that all $M_i$ are pairwise adjacent since $G$ is a series node.
    Fixing an orientation of $P$, an external edge $e = vw$ has type $(s,t)$ if $v \in M_s$ and $w \in M_t$.
    To ensure that the edges we use as shortcuts have the correct parity we discard the first and last edge of $P$,
    hence our upper bounds must be increased by 2.

    Let $b_1, \ldots, b_\ell$ be the remaining blue external edges on $P$ in order.
    If $i < j$ and $b_i = v_i w_i$ is of type $(s,t)$ and $b_j = v_j w_j$ of type $(s',t')$ and $s \neq t'$, then $P_{[v_i, w_j]}$ can be replaced by the
    direct edge $v_i w_j$. The edge $v_i w_j$ exists because $v_i$ and $w_j$ lie in different, but adjacent, modules. We also know that $v_i w_j$ has to be
    blue as $b_i$ is preceeded by a red edge and $b_j$ is succeeded by a red edge. Observe that this replacement removes the blue external edges
    $b_i, b_{i+1}, \ldots, b_j$ on $P$ and introduces one new blue external edge. As long as $\ell \geq 3$, such a situation must occur by the pigeonhole principle
    and we can decrease $\ell$ by at least one.

    Let $r_1, \ldots, r_\ell$ be the remaining red external edges on $P$ in order.
    If $i < j - 1$ and $r_i = v_i w_i$ is of type $(s,t)$ and $r_j = v_j w_j$ of type $(s',t')$ and $t \neq s'$, then $P_{[w_i, v_j]}$ can be replaced by the
    direct edge $w_i v_j$. The edge $v_j w_i$ exists and is blue similar to the previous case. In contrast to the previous case, this replacement does not
    remove the edges $r_i$ and $r_j$, but removes at least one red external edge between $r_i$ and $r_j$. As long as $\ell \geq 5$, such a situation occurs and
    we can decrease $\ell$ by at least one.

    When applying the replacements for the red external edges we introduce further blue external edges;
    to obtain simultaneous bounds on both, we first perform the replacements for the red external edges and afterwards for the blue external edges.
  \end{proof}

  \begin{thm}
    There exists $c \geq 1$ such that every graph $G$ is $(c \mw(G))^{\md(G)}$-replaceable.
  \end{thm}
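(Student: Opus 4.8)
The plan is to prove the statement by induction on the modular-depth $\md(G)$, fixing once and for all a constant $c$ (one can check that $c = 11$ suffices) large enough to absorb the absolute constants produced by \cref{module_border} and \cref{series_node_external}. For the base case $\md(G) \le 1$, every child of the root of the modular decomposition is a single vertex, so $G$ is either edgeless (parallel root, trivially $1$-replaceable), a clique (series root, where \cref{series_node_external} already replaces any alternating path so that it uses at most $4 + 6 = 10$ external — hence all — edges), or a prime graph on at most $\mw(G)$ vertices (so every path has length at most $\mw(G) - 1$). In each case the length bound $(c\,\mw(G))^{\md(G)}$ holds.

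For the inductive step, fix a matching $N$, an $N$-alternating path $P$, and the modular partition $\mathcal{P} = \{M_1, \dots, M_k\}$ of $G$; recall that for every child $G[M_i]$ we have $\mw(G[M_i]) \le \mw(G)$ and $\md(G[M_i]) \le \md(G) - 1$. I first reduce the number of external edges of $P$ according to the type of the root. If the root is a parallel node there are no external edges at all, so $P$ lies inside a single $G[M_i]$ and the inductive hypothesis applied there finishes the case at once. If the root is a series node, \cref{series_node_external} replaces $P$ by a path with at most $10$ external edges. If the root is a prime node, then $k \le \mw(G)$, and \cref{module_border} replaces $P$ by a path using at most $8$ edges of each $\delta(M_i)$; since every external edge lies in exactly two of the sets $\delta(M_i)$, this bounds the number of external edges of $P$ by $4k \le 4\mw(G)$ (this is the one place where we genuinely need $k \le \mw(G)$, which is why the series case must be handled separately and the parallel case is free).

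In the series and prime cases the external edges (at most $10$, resp.\ at most $4\mw(G)$) cut $P$ into at most $11$, resp.\ at most $4\mw(G) + 1$, maximal internal subpaths, each of which lies inside a single $G[M_i]$ and, restricted there, is an $N \cap E(G[M_i])$-alternating path. I apply the inductive hypothesis to each such subpath, replacing it by one of length at most $(c\,\mw(G))^{\md(G) - 1}$. Reassembling the untouched external edges with the replaced subpaths yields a path $P'$ with $V(P') \subseteq V(P)$ and the same endpoints as $P$; since each replacement preserves the parity of its subpath with respect to $N$, the colours still alternate across every junction with an external edge, and the first and last edge of $P'$ have the same colours as those of $P$, so $P'$ replaces $P$. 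Because the recursive replacements use only vertices and edges inside the respective $M_i$, they create no new external edges, so $|P'| \le 10 + 11(c\,\mw(G))^{\md(G)-1}$ in the series case and $|P'| \le 4\mw(G) + (4\mw(G)+1)(c\,\mw(G))^{\md(G)-1}$ in the prime case; using $\mw(G) \ge 2$, a short computation shows each of these is at most $(c\,\mw(G))^{\md(G)}$ for $c$ large enough, and together with the parallel case and the base case this closes the induction.

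The main obstacle is not any isolated step but the order and consistency of the bookkeeping: the external-edge reductions of \cref{module_border} and \cref{series_node_external} must be performed \emph{before} the recursive replacements so that the recursion runs on a path with few external edges, and one must verify that the recursive replacements neither reintroduce external edges nor disturb the global alternation pattern (which is exactly why \cref{replaceable_phases} is stated in terms of parity-preserving replacement rather than merely of augmenting paths). The remaining subtlety is purely arithmetic: one must exhibit a single constant $c$ that simultaneously validates the base case and all three inductive cases, for which it is enough that $c\,\mw(G)$ dominates the constants $10$ and $11$ and the linear term $4\mw(G)+1$, all of which hold once $c$ is a suitable absolute constant.
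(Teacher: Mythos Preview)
Your proof is correct and follows essentially the same inductive scheme as the paper: induct on $\md(G)$, split on the type of the root of the modular decomposition, invoke \cref{series_node_external} for series nodes and \cref{module_border} for prime nodes, and recurse into the modules. Your base case is in fact more careful than the paper's (which only treats a prime root at depth one), and your counting in the prime case---bounding the total number of external edges by $4k$ via double-counting over the $\delta(M_i)$---is slightly tighter than the paper's per-module estimate, which is why you can take $c = 11$ where the paper uses $c = 21$.
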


  \begin{proof}
    We choose $c = 21$, but we did not make an effort to optimize this constant.

    We prove this statement by induction on the modular-depth $\md(G)$. If $\md(G) = 1$, then $G$ must be a prime graph and hence $n \leq \mw(G)$, so $G$ is trivially $\mw(G)$-replaceable.

    For $\md(G) \geq 2$, let $\mathcal{P} = \{M_1, \ldots, M_k\}$ be the modular partition of $G$. The graphs $G[M_i]$, $1 \leq i \leq k,$ have modular-depth at most $\md(G) - 1$.

    If $G$ is a parallel node, then each module corresponds to a connected component and hence $G$ is trivially $(c \mw(G))^{\md(G) - 1}$-replaceable by induction.

    If $G$ is a series node, $N$ a matching and $P$ an $N$-alternating path, then we invoke \cref{series_node_external} on $P$. After this replacement $P$ has at most 10 external edges and induces at most $11$ subpaths inside the modules $M_1, \ldots, M_k$. We can bound the length of these subpaths by induction and obtain a replacement path of length at most
    \begin{align*}
      10 + 11(c \mw(G))^{\md(G) - 1} \leq 21 (c \mw(G))^{\md(G) - 1} \leq (c \mw(G))^{\md(G)}.
    \end{align*}

    If $G$ is a prime node, $N$ a matching and $P$ an $N$-alternating path, then $k \leq \mw(G)$. Due to \cref{module_border}, we can replace $P$ in such a way that it induces at most 8 subpaths inside a module $M_i$, for each module $M_i$. By induction, these subpaths can be replaced so that each of them has length at most $(c \mw(G))^{\md(G) - 1}$. Every edge of an alternating path either belongs to some $\delta(M_i)$ or is inside a module $M_i$. In total, we can bound the length of the replacement path by
    \begin{align*}
      k (8 + 5 (c \mw(G))^{\md(G) - 1}) & \leq 8 \mw(G) + \mw(G) (c \mw(G))^{\md(G) - 1} \\
					& \leq 9 \mw(G) (c \mw(G))^{\md(G) - 1} \\
					& \leq (c \mw(G))^{\md(G)}. \qedhere
    \end{align*}
  \end{proof}

  By \cref{replaceable_phases}, we obtain the following time bounds.

  \begin{cor}
    Let $\class_{\mw, \md}$ be the class of graphs that have modular-width at most $\mw$ and modular-depth at most $\md$.
    There exists a constant $c \geq 1$ such that every algorithm employing the phase framework requires at most $\Oh(\sqrt{\dist_{\class_{\mw,\md}}(G)}(c\mw)^{\md})$ phases.
    In particular, \textsc{Maximum Matching} can be solved in time $\Oh(\sqrt{\dist_{\class_{\mw,\md}}(G)}(c\mw)^{\md} m)$.
  \end{cor}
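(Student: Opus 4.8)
The plan is to obtain this as a direct corollary of the preceding theorem together with \cref{replaceable_phases}, so that essentially no new work is required beyond one monotonicity observation. Let $c \geq 1$ be the constant furnished by the theorem above, so that every graph $G'$ is $(c\,\mw(G'))^{\md(G')}$-replaceable, and set $\ell := (c\mw)^{\md}$.

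First I would establish the class inclusion $\class_{\mw,\md} \subseteq \repl{\ell}$. Take any $G' \in \class_{\mw,\md}$, so that $\mw(G') \leq \mw$ and $\md(G') \leq \md$. Since $\mw(G') \geq 2$ by the definition of modular-width and $c \geq 1$, the base $c\,\mw(G')$ is at least $2 > 1$; hence the quantity $(c\,\mw(G'))^{\md(G')}$ is monotone non-decreasing in both $\mw(G')$ and $\md(G')$, which gives $(c\,\mw(G'))^{\md(G')} \leq (c\mw)^{\md} = \ell$. Because an $M$-alternating path of length at most $\ell'$ is in particular one of length at most $\ell$ whenever $\ell' \le \ell$, a graph that is $\ell'$-replaceable is also $\ell$-replaceable; thus $G' \in \repl{\ell}$, proving the inclusion.

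From the inclusion it follows immediately that $\dist_{\repl{\ell}}(G) \leq \dist_{\class_{\mw,\md}}(G)$ for every graph $G$, since any modulator $S$ with $G - S \in \class_{\mw,\md}$ also satisfies $G - S \in \repl{\ell}$. Writing $k := \dist_{\class_{\mw,\md}}(G)$, I would then apply \cref{replaceable_phases} with this $\ell$ and $k$ to bound the number of phases of any algorithm employing the phase framework by $\Oh(\sqrt{k}\,\ell) = \Oh(\sqrt{\dist_{\class_{\mw,\md}}(G)}\,(c\mw)^{\md})$, and \cref{linear_phase} converts this phase bound into the claimed running time $\Oh(\sqrt{\dist_{\class_{\mw,\md}}(G)}\,(c\mw)^{\md} m)$. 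There is no real obstacle here; the only point requiring a moment of care is the monotonicity step, which relies on the convention $\mw(G) \geq 2$ to guarantee that the exponential $(c\,\mw(G'))^{\md(G')}$ is increasing in the exponent $\md(G')$ and not merely in the base.
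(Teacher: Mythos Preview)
Your proposal is correct and takes exactly the same approach as the paper, which simply states ``By \cref{replaceable_phases}, we obtain the following time bounds'' without further argument. Your added monotonicity check for the inclusion $\class_{\mw,\md} \subseteq \repl{\ell}$ is a reasonable elaboration of what the paper leaves implicit.
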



\section{Lower bounds on the number of phases}\label{section:lowerbounds}

In this section, we show that several restrictive graph classes do not admit results such as those obtained in the previous section. To this end, we show that the assumptions made about algorithms that follow the phase framework still allow a worst case of $\Omega(\sqrt{n})$ phases. In other words, further assumptions about the behavior of such an algorithm are necessary to avoid these lower bounds and to again get adaptive running times.

\subsection{Paths and forests}

  \begin{figure}[h]
    \includegraphics[width = \textwidth]{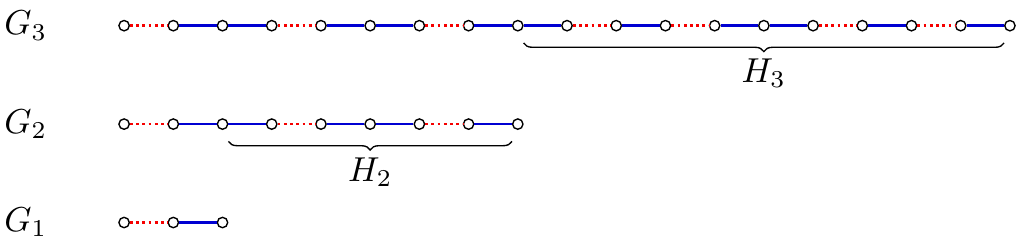}
    \caption{Lower bound construction for paths.}
    \label{fig:lower_bound_paths}
  \end{figure}

  \begin{lem}
    \label{lower_bound_paths}
    An algorithm employing the phase framework may choose a sequence of augmentations resulting in at least $\Omega(\sqrt{n})$ phases on paths.
  \end{lem}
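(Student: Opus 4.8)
The plan is to construct, on a path $P_n$ with $n$ vertices, an initial empty matching and then an adversarial sequence of maximal disjoint packings of shortest augmenting paths whose lengths grow slowly, so that $\Omega(\sqrt{n})$ phases are forced. The key observation is that on a path the augmenting paths available in a given phase are exactly the subpaths between consecutive exposed vertices whose endpoints have the right parity, and an adversary controls which maximal disjoint set of \emph{shortest} such subpaths to augment along. So I would design the exposed-vertex pattern after each phase so that the shortest augmenting path length is forced to increase by a bounded amount (ideally by $2$) each phase, while only a bounded number of augmentations happen per phase near the "short" end, leaving a long tail of the path still unmatched and arranged to admit only longer augmenting paths.

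Concretely, I would think of the path as a concatenation of blocks. After phase $i$ the matched region should look like a union of short alternating segments of "local length" roughly $2i-1$ (i.e., augmenting paths of length $2i-1$ edges that were used in phase $i$), glued so that the only augmenting paths remaining all have length $\ge 2i+1$. The figure referenced (\cref{fig:lower_bound_paths}) presumably shows exactly this telescoping pattern: in phase $1$ augment along single edges leaving a controlled set of exposed vertices, in phase $2$ the shortest augmenting paths are length-$3$ paths, and so on; one augments along a maximal disjoint set of these, but the construction arranges that this maximal set is "small" (a constant, or $O(1)$ many) so that exposed vertices persist and the next phase is forced to use still-longer paths. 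Since an augmenting path used in phase $i$ consumes $\Theta(i)$ vertices and we perform $\Theta(\sqrt n)$ phases with a bounded number of augmentations each, the total number of vertices consumed is $\sum_{i=1}^{\Theta(\sqrt n)} \Theta(i) = \Theta(n)$, which matches the budget — this is the arithmetic that pins down the $\sqrt n$ scaling.

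The steps in order: (1) fix $n$ and define the path together with the target "state" (set of matched edges / exposed vertices) that should hold at the end of each phase $i$, for $i=1,\dots,t$ with $t=\Theta(\sqrt n)$; (2) verify that in state $i-1$ the shortest $M$-augmenting paths have length exactly $2i-1$ (use \cref{path_length_increasing} to know lengths are non-decreasing, and a direct inspection of the path to rule out anything shorter); (3) exhibit an explicit maximal set of vertex-disjoint shortest augmenting paths in state $i-1$ whose augmentation yields exactly state $i$ — here "maximal" must be checked against \emph{all} disjoint shortest augmenting paths, not just the chosen ones, so the state must be engineered so that every shortest augmenting path intersects the chosen ones; (4) count vertices to confirm the construction fits in $n$ vertices for $t=\Theta(\sqrt n)$; (5) conclude that the algorithm has run $t=\Omega(\sqrt n)$ phases. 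The main obstacle is step (3): ensuring the adversary's chosen packing is genuinely \emph{maximal} while still being small enough to leave a usable exposed "tail" — one has to arrange the exposed vertices so that they cannot be paired up into a short augmenting path disjoint from the chosen ones, which typically means separating exposed vertices of the "wrong relative parity" by already-matched blocks, or placing them so the only connecting alternating subpath is long. Getting the parities and block lengths to telescope correctly, so that the shortest-path length increases by exactly a constant each phase and the whole pattern closes up within the vertex budget, is the delicate bookkeeping that the formal proof must carry out.
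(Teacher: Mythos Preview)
Your plan is correct and matches the paper's approach closely: the paper also builds the path as a concatenation of blocks, with block $H_i$ contributing a single shortest augmenting path of length $2i-1$ to phase~$i$, and the same $\sum_{i=1}^{\sqrt n}\Theta(i)=\Theta(n)$ arithmetic governs the vertex budget. The one concrete device the paper supplies that your outline leaves open is exactly the maximality obstacle you flag in step~(3): each block $H_i$ is formed by gluing \emph{two} copies of $P_{2i}$ at a shared exposed midpoint, so that the two length-$(2i-1)$ augmenting paths in $H_i$ intersect in that midpoint and hence augmenting along just one of them (the ``left'' one, adjacent to the already-processed prefix) is already a maximal packing of shortest paths in that phase; the second copy is there purely to guarantee that $H_i$ still contains an augmenting path even after the concatenation with $G_{i-1}$ may have matched its left endpoint.
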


  \begin{proof}
    We will concatenate increasingly long paths where we take every second edge into the matching in such a way that the matching on the newly added path
    is maximal but not maximum, so that every one of the concatenated paths requires a separate phase.
    More formally, on $P_{2i} = v_1 v_2 \ldots v_{2i}$ we use the matching $\{v_{2k}v_{2k+1} \, : \, k \in [i-1]\}$. This matching is one edge away from
    the maximum and the only augmenting path has length $2i - 1$.

    Let $H_i$ be obtained by concatenating two copies of $P_{2i}$ through identification of two endpoints (obtaining a path on $4i-1$ vertices). The matching on $H_i$ can be augmented once by an augmenting path of length $2i-1$. There are two augmenting paths on $H_i$, we
    always choose to augment along the copy of $P_{2i}$ that is attached to the previously constructed path, i.e., the left copy in \cref{fig:lower_bound_paths}. Using two copies of
    $P_{2i}$ in $H_i$ ensures that every $H_i$ requires at least one augmentation. This is not the case if we simply concatenate $P_2, P_4, \ldots, P_{2k}$.

    We can now define our desired paths. Let $G_1 = H_1 = P_3$ and in this exceptional case we assume that the left edge of $P_3$ is matched.
    Furthermore, let $G_{i+1}$ be obtained from $G_i$ by concatenating $H_{i+1}$ at the right.
    The number of vertices of $G_{\left\lceil \sqrt{n} \right\rceil}$ can be bounded by
    \begin{align*}
      \sum_{i = 1}^{\left\lceil \sqrt{n} \right\rceil} 2|V(P_{2i})| \leq 4 (\left\lceil \sqrt{n} \right\rceil)^2 \in \Oh(n).
    \end{align*}

    Now, we argue that our choice of augmentations leads to $\Omega(\sqrt{n})$ phases for $G_{\left\lceil \sqrt{n} \right\rceil}$. In the first phase we choose to find the maximal matching
    that we associated with each $H_i$. Now, observe that $G_{\left\lceil \sqrt{n} \right\rceil}$ contains augmenting paths of lengths $3, 5, \ldots, 2\left\lceil \sqrt{n} \right\rceil - 1$. In phase $i$, $i \geq 2$ we augment along the left copy of $P_{2i}$ in $H_i$. As this does not affect the augmenting paths in the other $H_i$, we need $\left\lceil \sqrt{n} \right\rceil$ phases in total.
  \end{proof}

Using the parameter values on paths, we obtain the following parameterized lower bounds.

  \begin{cor}
    An algorithm employing the phase framework may choose a sequence of augmentations resulting in $\Omega(\sqrt{\alpha(G)})$, $\Omega(\sqrt{\tau(G)})$, $\Omega(\sqrt{\dist_{\plex{1}}(G)})$ or $\Omega(\sqrt{\nd(G)})$ phases, where $\nd(G)$ is the neighborhood diversity of $G$.
  \end{cor}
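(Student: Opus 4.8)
The plan is to take the path $G_{\lceil\sqrt{n}\rceil}$ constructed in \cref{lower_bound_paths} and simply read off the values of the four parameters $\alpha$, $\tau$, $\dist_{\plex{1}}$ and $\nd$ on it. Since \cref{lower_bound_paths} already exhibits an algorithm employing the phase framework that uses $\Omega(\sqrt{n})$ phases on this graph, it suffices to argue that each of these four parameters is $\Theta(n)$ on a path with $n$ vertices; then $\Omega(\sqrt{n}) = \Omega(\sqrt{\text{parameter}})$ and the claim follows immediately.

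First I would record the elementary facts about a path $P_n$ on $n$ vertices. Its maximum matching has size $\lfloor n/2\rfloor$, so by König's theorem (or directly, since a path is bipartite and has a perfect-ish matching together with a vertex cover of matching size) the vertex cover number is $\tau(P_n) = \lfloor n/2\rfloor = \Theta(n)$, and the independence number is $\alpha(P_n) = \lceil n/2\rceil = \Theta(n)$. For $\plex{1}$, recall that $\plex{1}$ is exactly the class of disjoint unions of cliques, i.e.\ cluster graphs; deleting a set $S$ from a path leaves a disjoint union of subpaths, and such a graph is a cluster graph iff every remaining component has at most two vertices, which forces $|S| \geq \lfloor (n-1)/3 \rfloor = \Theta(n)$, while deleting every third vertex suffices, so $\dist_{\plex{1}}(P_n) = \Theta(n)$. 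Finally, for neighborhood diversity: two vertices of a path have the same type only if they have identical closed- or open-neighborhoods, which on a path of length at least, say, $5$ can happen for only boundedly many vertices per type; a short case check shows each type class of $P_n$ has size $\Oh(1)$, hence $\nd(P_n) = \Theta(n)$. (It is enough to note $\nd(P_n) \geq \alpha(P_n)/\Oh(1)$ since each type class is an independent set of bounded size, or simply to observe that $P_n$ has $n-3$ internal vertices of pairwise distinct types.)

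Combining these estimates with \cref{lower_bound_paths}: for the graph $G = G_{\lceil\sqrt{n}\rceil}$ on $\Theta(n)$ vertices we have $\alpha(G), \tau(G), \dist_{\plex{1}}(G), \nd(G) \in \Theta(n)$, and there is a choice of augmentations yielding $\Omega(\sqrt{n})$ phases, hence $\Omega(\sqrt{\alpha(G)})$, $\Omega(\sqrt{\tau(G)})$, $\Omega(\sqrt{\dist_{\plex{1}}(G)})$ and $\Omega(\sqrt{\nd(G)})$ phases respectively. I do not expect any genuine obstacle here — the only mild subtlety is the neighborhood-diversity bound, where one must be slightly careful that boundary vertices of the path do not collapse the type count, but this is a finite case analysis and poses no real difficulty.
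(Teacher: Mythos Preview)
Your approach is correct and is exactly what the paper intends: it states the corollary directly after \cref{lower_bound_paths} with the one-line justification ``Using the parameter values on paths, we obtain the following parameterized lower bounds,'' and gives no further argument. Your computations of $\alpha$, $\tau$, $\dist_{\plex{1}}$ and $\nd$ on $P_n$ are all fine (in fact for $\nd$ one can simply note that distinct internal vertices of a path have distinct open neighborhoods, so $\nd(P_n)\geq n-2$), and the conclusion follows immediately.
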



\subsection{Cographs}
  \label{lb_trivially_perfect}

  Mertzios et al.~\cite{mnn} devised a \textsc{Maximum Matching} algorithm parameterized by vertex deletion distance to cocomparability graphs and there are several \textsc{Maximum Matching}
  algorithms parameterized by modular-width \cite{coudertcliquewidth, kratschnelles}. In the interest of showing that these results cannot be replicated or improved by our approach, we give a
  lower bound result for cographs, i.e., the graphs of modular-width 2 and a subclass of cocomparability graphs.

  \begin{figure}
    \centering
    \begin{subfigure}[b]{.45\textwidth}
      \centering
      \includegraphics[scale = 1.2]{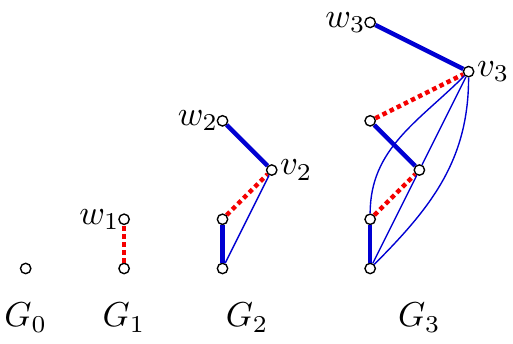}
      \caption{Increasingly long augmenting paths.}
    \end{subfigure}
    \begin{subfigure}[b]{.45\textwidth}
      \centering
      \includegraphics[scale = 1.2]{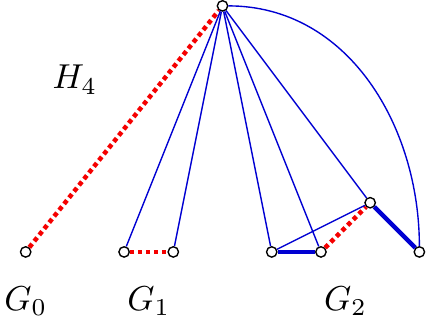}
      \caption{Construction of $H_4$.}
    \end{subfigure}
    \caption{Cographs with increasingly long augmenting paths, the red dotted edges are matched and the blue edges are unmatched.
      The thickened edges denote the chosen augmenting paths.}
    \label{fig:lower_bound_cographs}
  \end{figure}

  \begin{dfn}
    A graph is a \emph{cograph} if it can be constructed from the following operations:
    \begin{itemize}
    \item $K_1$ is a cograph,
    \item the disjoint union $G \cup H$ of two cographs $G$ and $H$ is a cograph,
    \item the join $G \times H$ of two cographs $G$ and $H$ is a cograph, where $V(G \times H) = V(G) \cup V(H)$ and $E(G \times H) = E(G) \cup E(H) \cup \{vw \, : \, v \in V(G), w \in V(H)\}.$
    \end{itemize}
  \end{dfn}

  \begin{thm}
    \label{lower_bound_cographs}
    There is a family of cographs such that an algorithm employing the phase framework may choose a sequence of augmentations resulting in $\Omega(\sqrt{n})$ phases
    on this family.
  \end{thm}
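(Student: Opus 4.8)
The plan is to follow the template of \cref{lower_bound_paths}, but to replace the long paths $P_{2i}$ by connected cographs -- in fact trivially perfect graphs -- that carry a maximal but non-maximum matching whose \emph{unique} augmenting path has length $\Theta(i)$. Since a disjoint union of cographs is again a cograph, the blocks can then be combined by a plain disjoint union instead of the delicate concatenation the path case required.

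Concretely, for $j \geq 2$ let $H_j$ be the graph on the $2j$ vertices $p_1,\dots,p_j,\ell_1,\dots,\ell_j$ in which $\{p_1,\dots,p_j\}$ induces a clique, each $\ell_i$ is adjacent to exactly $p_1,\dots,p_i$, and the $\ell_i$ are pairwise nonadjacent. Then $p_1$ is universal and $H_j-p_1-\ell_1$ is a copy of $H_{j-1}$, so $H_j=\{p_1\}\times(\{\ell_1\}\cup H_{j-1})$ is a cograph (indeed trivially perfect), and it has the perfect matching $\{\ell_i p_i : 1\le i\le j\}$. On $H_j$ I would use the matching $M_j=\{\ell_{i+1}p_i : 1\le i\le j-1\}$, which leaves exactly $\ell_1$ and $p_j$ exposed; since the only neighbour of $\ell_1$ is $p_1$ and $\ell_1p_j\notin E(H_j)$, the matching $M_j$ is maximal, and as $|M_j|=j-1<j=\nu(H_j)$ it is not maximum. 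The structural heart of the argument is that the only $M_j$-augmenting path is $P^{(j)}=\ell_1\,p_1\,\ell_2\,p_2\,\cdots\,\ell_j\,p_j$, of length $2j-1$: starting from $\ell_1$ one is forced onto its unique neighbour $p_1$ and then along the matched edge onto $\ell_2$; from $\ell_i$ the only available neighbour not yet on the path is $p_i$ (the earlier $p_1,\dots,p_{i-1}$ are already used), so the path is forced to continue $\ell_i\,p_i\,\ell_{i+1}$ and can only terminate at the remaining exposed vertex $p_j$. In particular the shortest $M_j$-augmenting path has length exactly $2j-1$, and $M_j\triangle E(P^{(j)})$ is the perfect matching of $H_j$.

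Now set $G_t=H_2\cup H_3\cup\cdots\cup H_t$ (disjoint union), a cograph on $n=\sum_{j=2}^{t}2j=\Theta(t^2)$ vertices. A phase-framework algorithm may, in its first phase, choose the maximal matching $\bigcup_{j=2}^{t}M_j$: with respect to the empty matching the shortest augmenting paths are exactly the single edges, a maximal vertex-disjoint family of them is exactly a maximal matching, and $\bigcup_j M_j$ is maximal because each $M_j$ is maximal in its component. After this phase the augmenting paths of $G_t$ are precisely the $t-1$ pairwise vertex-disjoint paths $P^{(2)},\dots,P^{(t)}$, of lengths $3,5,\dots,2t-1$. Hence for $j=2,\dots,t$ the unique shortest augmenting path in phase $j$ has length $2j-1$ and lies in $H_j$; augmenting along it turns $M_j$ into the perfect matching of $H_j$ and leaves every other block untouched. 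Thus $G_t$ forces $t=\Theta(\sqrt n)$ phases, and substituting the parameter values of $G_t$ then yields the accompanying parameterized lower bounds (modular-width, vertex-deletion distance to cocomparability graphs, etc.).

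I expect the main obstacle to be designing and verifying the block $H_j$. Every connected cograph is a join, and join edges readily create short augmenting paths -- a single matched edge inside one side of a join together with two exposed vertices on the other side already yields a length-$3$ augmenting path. The point of using a trivially perfect graph in which the two exposed vertices sit asymmetrically and "deep" (one of them, $\ell_1$, of degree $1$) is precisely that the only alternating route between them is long; the routine-but-fiddly part is the forcing argument ruling out every shorter augmenting path, together with the observation that augmenting once makes the block's matching maximum so that no block is revisited in a later phase.
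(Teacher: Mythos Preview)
Your proposal is correct and is essentially the paper's own construction: your blocks $H_j$ are isomorphic to the paper's recursively defined $G_j=(G_{j-1}\cup K_1)\times K_1$, with the same matching and the same forced augmenting path $\ell_1 p_1 \ell_2 p_2\cdots \ell_j p_j$ of length $2j-1$. The only difference is that you combine the blocks by a plain disjoint union, whereas the paper additionally joins the union with a single vertex (matched to a spare $K_1$); this join makes the paper's example connected but otherwise adds nothing, and in fact forces the paper to argue separately that the join creates no new augmenting paths --- a step your version simply avoids.
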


  \begin{proof}
    We will construct appropriate cographs using the cograph operations. Let $G_0 = K_1$ and $G_1 = K_1 \times K_1 = P_2$ and for $i \geq 1$ define $G_{i+1} = (G_i \cup K_1) \times K_1$
    as auxiliary graphs. Given $n$, we construct
    \begin{align*}
      H_n = \left( \bigcup_{i = 0}^{\left\lceil \sqrt{n} \right\rceil} G_i \right) \times G_0.
    \end{align*}

    Observe that $G_i$ has at most $2i + 1$ vertices, therefore $H_n$ has $\Oh(n)$ vertices. We will now describe a sequence of augmentations in $H_n$ that
    requires $\Omega(\sqrt{n})$ phases.

    In each $G_i$, $i \geq 2$, there is exactly one vertex of degree one, which we call $w_i$.
    For $G_1$, we fix an arbitrary vertex that is called $w_1$. Furthermore, the vertex that is joined last in the construction of $G_i$, $i \geq 2$, is referred to as $v_i$, see \cref{fig:lower_bound_cographs}.
    First, we describe which maximal matchings to associate with the graphs $G_i$. In $G_0$ there is no edge to choose, in $G_1$ we choose the only possible edge and in $G_2$
    we take the edge $v_2w_1$. Inductively, for $G_{i + 1}$, $i \geq 2,$ we use the maximal matching of $G_i$ and add the edge $v_{i+1}w_i$. With $H_n$ we associate the union of these matchings
    and add the edge between the two copies of $G_0$ to the matching. In this way we obtain the matching that is supposed to be found in the first phase.

    We now argue that $G_i$, $i \geq 2$, has exactly one shortest augmenting path of length $2i - 1$. This is true for $i = 2$; the other cases follow by induction.
    Let $P_i$ be the shortest augmenting path in $G_i$ starting at $w_i$, then $P_{i + 1} = w_{i+1} v_{i+1} P_i$ is an augmenting path of length $2i + 1$ in $G_{i + 1}$.
    There are no further augmenting paths as there are only two exposed vertices in $G_{i + 1}$ and one of them is $w_{i + 1}$ with degree one; starting at
    $w_{i + 1}$, we must first take the edge $w_{i + 1}v_{i + 1}$ and then the matched edge $v_{i + 1}w_i$, hence we must take the path $P_{i + 1}$ by induction.

    Let $v$ denote the vertex in $G_0$ that is joined last in the construction of $H_n$. The construction of $H_n$ does not create any additional augmenting paths as every maximal alternating path that passes through $v$ must have one matched endpoint, namely the vertex in the other copy of $G_0$. In phase $i$ we augment the shortest augmenting path of length $2i - 1$ in the copy of $G_i$. By repeating the previous argument, these augmentations
    cannot introduce any new augmenting paths in the later phases. Hence, we require $\left \lceil \sqrt{n} \right \rceil$ phases for this sequence of augmentations, thereby
    proving the claimed lower bound.
  \end{proof}

  The graphs in the previous proof are also $C_4$-free, therefore these graphs are not only cographs but also \emph{trivially perfect graphs}.

  
\subsection{Lower bound for bipartite chain graphs}
\label{sec_chain}
  Similar to \cref{lb_trivially_perfect}, Mertzios et al.\ gave a \textsc{Maximum Matching} algorithm for bipartite graphs parameterized by the vertex deletion distance to
  chain graphs \cite{mnn}. We show that such a result does not hold for arbitrary algorithms that follow the phase framework.

  \begin{dfn}[\cite{yannakakis1982complexity}]
    A graph $G$ is a \emph{bipartite chain graph} if $G$ is bipartite with bipartition $V = A \cup B$ and there is an ordering of the vertices of $A = \{a_1, \ldots, a_k\}$
    and an ordering of the vertices of $B = \{b_1, \ldots, b_{\ell} \}$ such that $N(a_i) \subseteq N(a_{i+1})$ for all $i \in [k-1]$ and $N(b_{i + 1}) \subseteq N(b_i)$ for all
    $i \in [\ell-1]$.
  \end{dfn}

  \begin{figure}
    \centering
    \includegraphics[scale = 1.3]{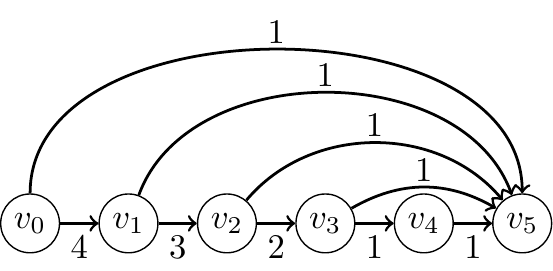}
    \caption{The directed graph $D_5$.}
    \label{fig:lower_bound_chains_intuition}
  \end{figure}

  The rough idea of the lower bound construction for chain graphs is to encode the directed graph $D_k$, depicted in \cref{fig:lower_bound_chains_intuition}, as a chain graph, where $D_k$ is given by
  \begin{align*}
    V(D_k) & = \{v_i \, : \, i = 0, \ldots, k \} \\
    A(D_k) & = \{v_i v_{i + 1} \, : \, i = 0, \ldots, k - 2\} \cup \{ v_i v_k \, : \, i = 0, \ldots, k - 1 \}
  \end{align*}
  with multiplicities, i.e., how many parallel copies there are of each edge,
  \begin{align*}
    w(v_i v_{i+1}) = k - 1 - i \text{ and } w(v_i v_k) = 1.
  \end{align*}
  Any $v_0, v_k$-path $P$ in $D_k$ of length $\ell$ will correspond to an augmenting path of length $2 \ell + 1$ in the chain graph. Notice that there is exactly one $v_0, v_k$-path
  in $D_k$ for each length $\ell = 1, \ldots, k$ and by making use of the multiplicities these are all edge-disjoint. The consequence in the corresponding chain graph is that each phase will
  only find one augmenting path and hence we must perform at least $k$ phases. The construction of the chain graph will ensure that $k \in \Theta(\sqrt{n})$ by replacing each $v_i$ with multiple vertices
  and we choose a specific initial matching, i.e., in phase 1, to model the behavior of $D_k$. Hence we obtain the desired lower bound for chain graphs. We will not make the relation between $D_k$ and the chain graph precise, this paragraph only serves as intuition.

  \begin{figure}
    \centering
    \includegraphics[scale = 1.3]{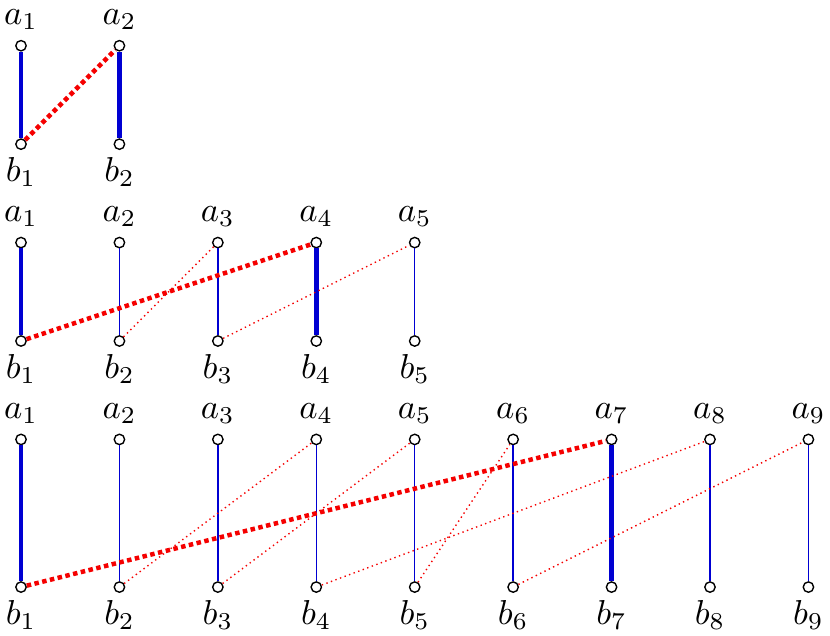}
    \caption{The graphs $G_2$, $G_3$ and $G_4$ with the initial matching $M_0$.}
    \label{fig:lower_bound_chains}
  \end{figure}

  \begin{figure}
    \centering
    \includegraphics[scale = 1.3]{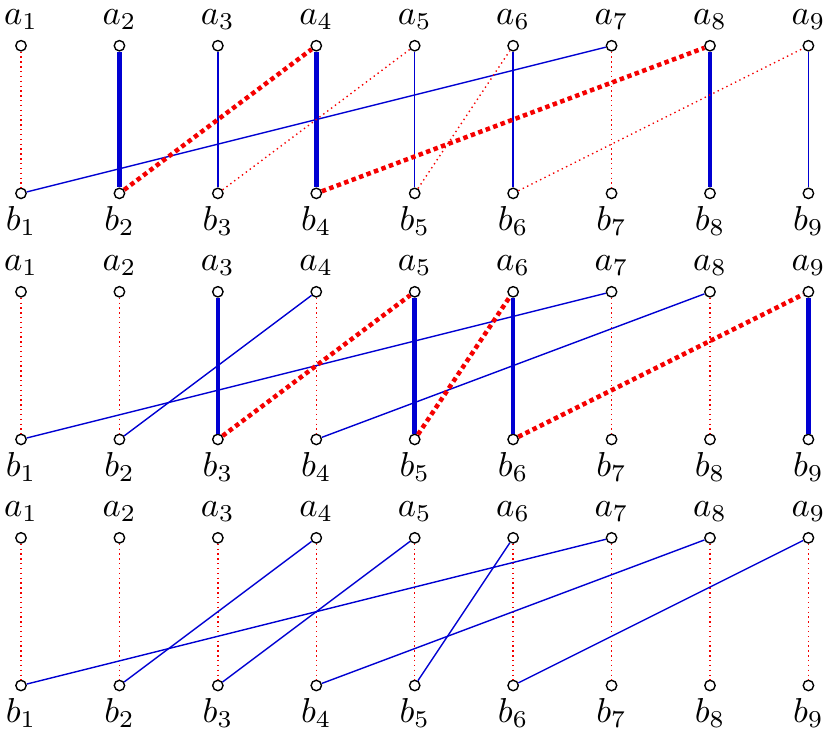}
    \caption{The graph $G_4$ with the matchings $M_1$, $M_2$ and $M_3$.}
    \label{fig:lower_bound_chains_2}
  \end{figure}

  \begin{thm}
    \label{lower_bound_chains}
    There is a family of bipartite chain graphs such that an algorithm employing the phase framework may choose a sequence of augmentations resulting in at least $\Theta(\sqrt{n})$ phases
    on this family.
  \end{thm}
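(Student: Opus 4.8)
The plan is to make precise the intuition sketched just before the theorem statement: encode the directed multigraph $D_k$ as a bipartite chain graph $G_k$ together with an initial matching $M_0$, and then exhibit an adversarial sequence of augmentations that forces one phase per length $\ell = 1, \dots, k$. First I would give the explicit construction of $G_k$. The natural way to realize $D_k$ as a bipartite chain graph is to split each vertex $v_i$ of $D_k$ into a small gadget of vertices on both sides of the bipartition, and to represent each parallel copy of an arc $v_iv_{i+1}$ (there are $k-1-i$ of them) and each arc $v_iv_k$ (one copy) by a pair of matched/unmatched edges so that an augmenting path of length $2\ell+1$ in $G_k$ corresponds exactly to a directed $v_0,v_k$-walk of length $\ell$ in $D_k$. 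The vertex counts must be arranged so that $|V(G_k)| = \Theta(k^2)$: since $D_k$ has $\sum_{i=0}^{k-2}(k-1-i) + k = \Theta(k^2)$ arcs counted with multiplicity, and each arc contributes a constant number of vertices to $G_k$, this is automatic, giving $k = \Theta(\sqrt{n})$ as required. I would then verify that $G_k$ is indeed a bipartite chain graph by exhibiting the nested neighborhood orderings on $A$ and on $B$ explicitly — this is where the precise layout of the gadgets matters, and I would use Figures~\ref{fig:lower_bound_chains} and~\ref{fig:lower_bound_chains_2} as the guide for small cases $k=2,3,4$.

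Next I would describe the matchings $M_0, M_1, \dots, M_{k-1}$ and argue the phase structure. The initial matching $M_0$ is chosen so that it is maximal within each gadget but globally non-maximum, and so that the set of $M_0$-augmenting paths in $G_k$ is in bijection with the $v_0,v_k$-paths of $D_k$; by the structure of $D_k$ there is exactly one such path of each length $\ell \in \{1, \dots, k\}$, and because of the multiplicities these paths are pairwise edge-disjoint (though they share the gadget vertices encoding the $v_i$'s, so they are not vertex-disjoint). The key claim, mirroring the cograph and path lower bounds, is: in phase $j$, the unique shortest augmenting path has length $2j+1$ (or $2j-1$, depending on the indexing convention I settle on), the algorithm is forced to augment along it, and augmenting along it destroys only that path while leaving the longer augmenting paths (corresponding to longer $v_0,v_k$-walks) intact, possibly after rerouting through the now-shifted matched edges in the shared gadgets. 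I would prove the "only one shortest augmenting path per phase" claim by an induction on $j$ analogous to the induction in the proof of \cref{lower_bound_cographs}: the chain-graph ordering forces any augmenting path to traverse the gadgets in increasing index order, so its length is determined by how many intermediate $v_i$-gadgets it passes through, and the multiplicity bookkeeping guarantees that after $j-1$ augmentations exactly the copies needed for the length-$(2j+1)$ path remain unmatched in the right configuration.

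The main obstacle, as I see it, is the bookkeeping of the matching as it evolves: unlike the path and cograph constructions, here the augmenting paths genuinely overlap on the vertices encoding $v_1, \dots, v_{k-1}$, so augmenting along the length-$(2j+1)$ path changes which edges inside those shared gadgets are matched, and I must check that this change does not prematurely create a shorter augmenting path or prematurely destroy a longer one. This is exactly what the multiplicities $w(v_iv_{i+1}) = k-1-i$ are for — there are enough parallel copies that each gadget can "absorb" the $j$-th augmentation without affecting availability for later phases — and the bulk of the proof is verifying this invariant carefully. I would formalize it as a loop invariant: after phase $j$, the residual graph restricted to the unmatched arc-copies is isomorphic to the chain-graph encoding of $D_k$ with the edges used by the first $j$ directed paths removed, whose unique shortest $v_0,v_k$-path then has length $j+1$. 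Once this invariant is established, the bound of at least $k = \Theta(\sqrt{n})$ phases, and hence time $\Theta(\sqrt{n}m)$ with \cref{linear_phase}, follows immediately. I would close by remarking, as in the cograph subsection, that this rules out extending the adaptive analysis to distance-to-chain-graph for arbitrary phase-framework algorithms.
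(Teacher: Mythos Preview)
Your high-level plan matches the paper's, but you have misidentified the main obstacle, and this stems from a misconception about how the multiplicities in $D_k$ are realized. You write that the augmenting paths ``genuinely overlap on the vertices encoding $v_1,\dots,v_{k-1}$'' and that the bulk of the proof is a loop invariant managing this overlap. In the paper's construction this does not happen: the chain graph $G_k$ is simply the graph on $A=\{a_1,\dots,a_n\}$, $B=\{b_1,\dots,b_n\}$ with $a_ib_j\in E$ iff $i\geq j$, where $n=\binom{k+1}{2}-1$, and the index set $[n]$ is partitioned into intervals $\idx_0,\dots,\idx_k$ with $|\idx_j|=k-j$ for $j<k$. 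Each ``copy'' of an arc of $D_k$ is realized by a distinct pair of vertices, so the $k-1$ augmenting paths $P_1,\dots,P_{k-1}$ (of lengths $3,5,\dots,2k-1$) are pairwise \emph{vertex-disjoint}. Consequently every $P_\ell$ is already an $M_0$-augmenting path, the matchings $M_\ell=M_{\ell-1}\triangle E(P_\ell)$ are trivially well-defined, and no rerouting or absorption argument is needed at all.

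What actually requires work---and what your proposal mentions only in passing---is showing that after augmenting along $P_1,\dots,P_\ell$ there is \emph{no} $M_\ell$-augmenting path of length at most $2\ell+1$, so that each $P_\ell$ really occupies its own phase. The paper handles this by observing that the last red edge of any $M_\ell$-augmenting path must be one of the edges $b_{i^*_j}a_{n-k+1+j}$ with $j\geq\ell$, and that reaching $\idx_j$ from $\idx_0$ via red edges of $M_\ell$ requires at least $\ell$ red steps (one per interval crossed), forcing length at least $2\ell+3$. Your inductive sketch (``the chain-graph ordering forces traversal in increasing index order'') is pointing in the right direction for this part, but you should recognize that this is the crux, not the overlap bookkeeping.
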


  \begin{proof}
    Fix $k \in \mathbb{N}$ and set $n = \sum_{i = 2}^k i = \frac{k(k+1)}{2} - 1$. We define the graph $G_k$ with $2n$ vertices by
    \begin{align*}
      V(G_k) & = A \cup B, \text{ where } A = \{a_i \, : \, i \in [n]\} \text{ and } B = \{b_i \, : \, i \in [n]\}, \\
      E(G_k) & = \{a_ib_j \, : \, i \geq j\}.
    \end{align*}
    As $N(a_i) = \{b_1, b_2, \ldots, b_i\}$ and $N(b_i) = \{a_i, a_{i+1}, \ldots, a_n\}$ for all $i \in [n]$, the graph $G_k$ must be a bipartite chain graph, because
    the inclusions $N(a_i) \subsetneq N(a_{i+1})$ and $N(b_{i+1}) \subsetneq N(b_i)$ hold for all $i \in [n-1]$.

    We partition the index set $[n]$ into $k + 1$ intervals as follows
    \begin{align*}
      \idx_j = \left[1 + \sum_{i = 0}^{j-1} (k - i), \sum_{i = 0}^j (k - i)\right] \cap \mathbb{N} \text{ for } j = 0, \ldots, k - 1
    \end{align*}
    and
    \begin{align*}
      \idx_k = \left[n - k + 1, n\right] \cap \mathbb{N}.
    \end{align*}
    Observe that $|\idx_j| = k - j$ for $j = 0, \ldots, k - 1$ and $|\idx_k| = k$. Furthermore, we define $i^*_j = \min \idx_j = 1 + \sum_{i = 0}^{j - 1} (k-i)$ for $j = 0, \ldots, k - 1$.
    The vertex set $\{a_i \, : \, i \in \idx_j\} \cup \{b_i \, : \, i \in \idx_j\}$ roughly corresponds to the vertex $v_j$ of the graph $D_k$.
    We can now define the initial matching
    \begin{align*}
      M_0 = \left\{ b_{i^*_j} a_{n-k+1+j} \, : \, 0 \leq j < k \right\} \cup \bigcup_{j = 0}^{k-2} \left\{ b_s a_{s+k-1-j} \, : \, s \in \idx_j \setminus \{i^*_j\} \right\}.
    \end{align*}
    The graphs $G_2, G_3$ and $G_4$ are depicted in \cref{fig:lower_bound_chains} with their initial matching.
    The $M_0$-exposed vertices are $\{a_i \, : \, i \in \idx_0 \} \cup \{b_i \, : \, i \in \idx_k \}$, hence $M_0$ must be maximal as $i < j$ for all $i \in \idx_0$ and $j \in \idx_k$.

    As $G_k$ is bipartite, we can assume that every augmenting path starts somewhere in $A$, ends somewhere in $B$ and is oriented from its endpoint in $A$ to its endpoint in $B$.

    We will now define $k - 1$ vertex-disjoint $M_0$-augmenting paths $P_\ell$, $\ell = 1, \ldots, k-1$. The path $P_\ell$ starts at vertex $a_\ell$ and every time we are at some $a_i$, $i \in [n]$, we take the blue edge $a_ib_i$ and every time we are at some $b_i$, $i \in [n]$, we take the incident red edge. Since no edge of the form $a_ib_i$ is red, this construction must result in a path. We have partitioned $[n]$ in such a way that if we take a red edge from a $b_i$ with $i \in \idx_j$, then we get to an $a_{i'}$ with $i' \in \idx_{j'}$ and $j' > j$. More precisely, if $i \neq i^*_j$ then $j' = j + 1$ and otherwise $j' = k$. Using this observation, we see that $P_\ell$ is an $M_0$-augmenting path of length $2\ell + 1$ that starts at $a_\ell$ and ends in $b_{n-k+\ell}$, because $P_\ell$ takes $2\ell - 1$ edges to reach $b_{i^*_{\ell - 1}}$ and then two further edges to reach $b_{n-k+\ell}$.

    Now, we can define further matchings $M_\ell = M_{\ell-1} \triangle E(P_\ell)$ for $\ell = 1, \ldots, k - 1$. Due to the $P_\ell$ being vertex-disjoint we see that every $P_\ell$ is also an $M_{\ell-1}$-augmenting path, thus the matchings $M_\ell$ are well-defined. The resulting sequence of matchings for $G_4$ is depicted in \cref{fig:lower_bound_chains_2} and they are explicitly given by
    \begin{align*}
      M_\ell 	= & \bigcup_{j = 0}^{k-2} \left\{ b_s a_{s+k-1-j} \, : \, s \in \idx_j \setminus \left[i^*_j, i^*_j + \max(0, \ell - 1 - j)\right] \right\}\\
		  & \cup \left\{ b_{i^*_j} a_{n-k+1+j} \, : \, \ell \leq j < k \right\} .
    \end{align*}

    We claim that $M_0, M_1, \ldots, M_{k-1}$ can be computed by the phases of an algorithm employing the phase framework. For this we must show that there is no $M_\ell$-augmenting path of length at most $2\ell + 1$ for all $\ell = 0, \ldots, k-1$. To see this, notice that the last red edge taken by any $M_\ell$-augmenting path is one of the edges in $\left\{ b_{i^*_j} a_{n-k+1+j} \, : \, \ell \leq j < k \right\}$. Hence, we first must go to some $a_i$ with $i \geq i^*_\ell$. For $j < \ell$ going from an $a_s$ with $s \in \idx_j$ to an $a_t$ with $t \in \bigcup_{i = j + 1}^k \idx_i$ requires at least one red edge and using exactly one red edge we can only get to an $a_t$ with $t \in \idx_{j + 1}$. Hence, getting from $\idx_0$ to $\idx_j$, $\ell \leq j < k$, requires at least $\ell$ red edges in any $M_\ell$-augmenting path and one further red edge to get to $\idx_k$, thus any $M_\ell$-augmenting path has length at least $2(\ell + 1) + 1 = 2\ell + 3 > 2\ell + 1$.

    In conclusion, this sequence of matchings leads to $k$ phases and due to $|V(G_k)| \in \Theta(k^2)$ we obtain the desired lower bound.
  \end{proof}


\section{Conclusion}\label{section:conclusion}

We have conducted an adaptive analysis that applies to all algorithms for \textsc{Maximum Matching} that follow the phase framework of Hopcroft and Karp~\cite{hopcroftkarp}, such as the algorithms due to Micali and Vazirani~\cite{micalivazirani}, Blum~\cite{blum1990new}, and Goldberg and Karzanov~\cite{goldbergkarzanov}. The main take-away message of our paper is that these algorithms not only obtain the best known time $\Oh(\sqrt{n}m)$ for solving \textsc{Maximum Matching} but that they are also (obliviously) adaptive to beneficial structure. That is, they run in linear time on several graph classes and they run in time $\Oh(\sqrt{k}m)$ for graphs that are $k$ vertex deletions away from any of several classes; before, most bounds were $\Omega(km)$. Arguably, such adaptive algorithms are the best possible result for dealing with unknown beneficial structure because they are never worse than the general bound, in this case taking $\Omega(\sqrt{n}m)$ when $k=\Theta(n)$, and smoothly interpolate to linear time on well-structured instances. Moreover, in the present case, they unify several special cases and remove the need to find exact or approximate beneficial structure.

We complemented our findings by proving that the phase framework alone still allows taking $\Omega(\sqrt{n})$ phases, and, hence, total time $\Omega(\sqrt{n}m)$, even on restrictive classes like paths, trivially perfect graphs, and bipartite chain graphs (and their superclasses), despite the existence of (dedicated) linear-time algorithms. Of course, all of these cases are easy to handle but it raises the question whether there are simple further properties to demand of a phase-based algorithm so that it is provably adaptive to larger classes such as cocomparability or bounded treewidth graphs? In the same vein, it would be interesting whether time $\Oh(\sqrt{k}m)$ is possible relative to feedback vertex number $k$, i.e., relative to deletion distance to a forest.

More generally, with the large interest in ``FPT in P'' (or efficient parameterized algorithms), it seems interesting what other fundamental problems admit adaptive algorithms that interpolate between, say, linear time and the best general bound. Are there other cases where a proven algorithmic paradigm, like the path packing phases of Hopcroft and Karp~\cite{hopcroftkarp}, also obliviously yields the best known running times relative to beneficial input structure?

\newpage
\bibliography{paper_lipics_final_full}

\begin{thebibliography}{10}

\bibitem{abboud2016approximation}
Amir Abboud, Virginia~Vassilevska Williams, and Joshua Wang.
\newblock Approximation and fixed parameter subquadratic algorithms for radius
  and diameter in sparse graphs.
\newblock In {\em Proceedings of the twenty-seventh annual ACM-SIAM symposium
  on Discrete Algorithms}, pages 377--391. SIAM, 2016.

\bibitem{AbboudWY15}
Amir Abboud, Virginia~Vassilevska Williams, and Huacheng Yu.
\newblock Matching triangles and basing hardness on an extremely popular
  conjecture.
\newblock In Rocco~A. Servedio and Ronitt Rubinfeld, editors, {\em Proceedings
  of the Forty-Seventh Annual {ACM} on Symposium on Theory of Computing, {STOC}
  2015, Portland, OR, USA, June 14-17, 2015}, pages 41--50. {ACM}, 2015.
\newblock URL: \url{https://doi.org/10.1145/2746539.2746594}, \href
  {http://dx.doi.org/10.1145/2746539.2746594}
  {\path{doi:10.1145/2746539.2746594}}.

\bibitem{BarbayFN12}
J{\'{e}}r{\'{e}}my Barbay, Johannes Fischer, and Gonzalo Navarro.
\newblock Lrm-trees: Compressed indices, adaptive sorting, and compressed
  permutations.
\newblock {\em Theor. Comput. Sci.}, 459:26--41, 2012.
\newblock URL: \url{https://doi.org/10.1016/j.tcs.2012.08.010}, \href
  {http://dx.doi.org/10.1016/j.tcs.2012.08.010}
  {\path{doi:10.1016/j.tcs.2012.08.010}}.

\bibitem{BarbayN13}
J{\'{e}}r{\'{e}}my Barbay and Gonzalo Navarro.
\newblock On compressing permutations and adaptive sorting.
\newblock {\em Theor. Comput. Sci.}, 513:109--123, 2013.
\newblock URL: \url{https://doi.org/10.1016/j.tcs.2013.10.019}, \href
  {http://dx.doi.org/10.1016/j.tcs.2013.10.019}
  {\path{doi:10.1016/j.tcs.2013.10.019}}.

\bibitem{bast2006matching}
Holger Bast, Kurt Mehlhorn, Guido Schafer, and Hisao Tamaki.
\newblock Matching algorithms are fast in sparse random graphs.
\newblock {\em Theory of Computing Systems}, 39(1):3--14, 2006.

\bibitem{bentert2017parameterized}
Matthias Bentert, Till Fluschnik, Andr{\'e} Nichterlein, and Rolf Niedermeier.
\newblock Parameterized aspects of triangle enumeration.
\newblock In {\em International Symposium on Fundamentals of Computation
  Theory}, pages 96--110. Springer, 2017.

\bibitem{berge1957two}
Claude Berge.
\newblock Two theorems in graph theory.
\newblock {\em Proceedings of the National Academy of Sciences},
  43(9):842--844, 1957.

\bibitem{blum1990new}
Norbert Blum.
\newblock A new approach to maximum matching in general graphs.
\newblock In {\em International Colloquium on Automata, Languages, and
  Programming}, pages 586--597. Springer, 1990.

\bibitem{Bringmann14}
Karl Bringmann.
\newblock Why walking the dog takes time: Frechet distance has no strongly
  subquadratic algorithms unless {SETH} fails.
\newblock In {\em 55th {IEEE} Annual Symposium on Foundations of Computer
  Science, {FOCS} 2014, Philadelphia, PA, USA, October 18-21, 2014}, pages
  661--670. {IEEE} Computer Society, 2014.
\newblock URL: \url{https://doi.org/10.1109/FOCS.2014.76}, \href
  {http://dx.doi.org/10.1109/FOCS.2014.76} {\path{doi:10.1109/FOCS.2014.76}}.

\bibitem{chang1996algorithms}
Maw-Shang Chang.
\newblock Algorithms for maximum matching and minimum fill-in on chordal
  bipartite graphs.
\newblock In {\em International Symposium on Algorithms and Computation}, pages
  146--155. Springer, 1996.

\bibitem{coudertcliquewidth}
David Coudert, Guillaume Ducoffe, and Alexandru Popa.
\newblock Fully polynomial {FPT} algorithms for some classes of bounded
  clique-width graphs.
\newblock In {\em Proceedings of the Twenty-Ninth Annual ACM-SIAM Symposium on
  Discrete Algorithms}, pages 2765--2784. Society for Industrial and Applied
  Mathematics, 2018.

\bibitem{cournier1994new}
Alain Cournier and Michel Habib.
\newblock A new linear algorithm for modular decomposition.
\newblock In {\em Colloquium on Trees in Algebra and Programming}, pages
  68--84. Springer, 1994.

\bibitem{dahlhaus1998matching}
Elias Dahlhaus and Marek Karpinski.
\newblock Matching and multidimensional matching in chordal and strongly
  chordal graphs.
\newblock {\em Discrete Applied Mathematics}, 84(1-3):79--91, 1998.

\bibitem{DisserK17}
Yann Disser and Stefan Kratsch.
\newblock Robust and adaptive search.
\newblock In Heribert Vollmer and Brigitte Vall{\'{e}}e, editors, {\em 34th
  Symposium on Theoretical Aspects of Computer Science, {STACS} 2017, March
  8-11, 2017, Hannover, Germany}, volume~66 of {\em LIPIcs}, pages 26:1--26:14.
  Schloss Dagstuhl - Leibniz-Zentrum fuer Informatik, 2017.
\newblock URL: \url{https://doi.org/10.4230/LIPIcs.STACS.2017.26}, \href
  {http://dx.doi.org/10.4230/LIPIcs.STACS.2017.26}
  {\path{doi:10.4230/LIPIcs.STACS.2017.26}}.

\bibitem{dragan1997greedy}
Feodor~F. Dragan.
\newblock On greedy matching ordering and greedy matchable graphs.
\newblock In {\em International Workshop on Graph-Theoretic Concepts in
  Computer Science}, pages 184--198. Springer, 1997.

\bibitem{ducoffepopasplitwidth}
Guillaume Ducoffe and Alexandru Popa.
\newblock A quasi linear-time b-matching algorithm on distance-hereditary
  graphs and bounded split-width graphs.
\newblock {\em arXiv preprint arXiv:1804.09393}, 2018.

\bibitem{ducoffepopapruned}
Guillaume Ducoffe and Alexandru Popa.
\newblock {The use of a pruned modular decomposition for Maximum Matching
  algorithms on some graph classes}.
\newblock In {\em {29th International Symposium on Algorithms and Computation
  (ISAAC 2018)}}, 29th International Symposium on Algorithms and Computation
  (ISAAC 2018), Jiaoxi, Yilan County, Taiwan, December 2018.
\newblock URL: \url{https://hal.archives-ouvertes.fr/hal-01955985}, \href
  {http://dx.doi.org/10.4230/LIPIcs.ISAAC.2018.144}
  {\path{doi:10.4230/LIPIcs.ISAAC.2018.144}}.

\bibitem{edmonds1965paths}
Jack Edmonds.
\newblock Paths, trees, and flowers.
\newblock {\em Canadian Journal of Mathematics}, 17:449–467, 1965.
\newblock \href {http://dx.doi.org/10.4153/CJM-1965-045-4}
  {\path{doi:10.4153/CJM-1965-045-4}}.

\bibitem{Estivill-CastroW92}
Vladimir Estivill{-}Castro and Derick Wood.
\newblock A survey of adaptive sorting algorithms.
\newblock {\em {ACM} Comput. Surv.}, 24(4):441--476, 1992.
\newblock URL: \url{https://doi.org/10.1145/146370.146381}, \href
  {http://dx.doi.org/10.1145/146370.146381} {\path{doi:10.1145/146370.146381}}.

\bibitem{fialadistancelabeling}
Jir{\'{\i}} Fiala, Tomas Gavenciak, Dusan Knop, Martin Kouteck{\'{y}}, and Jan
  Kratochv{\'{\i}}l.
\newblock Parameterized complexity of distance labeling and uniform channel
  assignment problems.
\newblock {\em Discrete Applied Mathematics}, 248:46--55, 2018.
\newblock URL: \url{https://doi.org/10.1016/j.dam.2017.02.010}, \href
  {http://dx.doi.org/10.1016/j.dam.2017.02.010}
  {\path{doi:10.1016/j.dam.2017.02.010}}.

\bibitem{fluschnik2017can}
Till Fluschnik, Christian Komusiewicz, George~B Mertzios, Andr{\'e}
  Nichterlein, Rolf Niedermeier, and Nimrod Talmon.
\newblock When can graph hyperbolicity be computed in linear time?
\newblock In {\em Workshop on Algorithms and Data Structures}, pages 397--408.
  Springer, 2017.

\bibitem{Fomin2018}
Fedor~V. Fomin, Mathieu Liedloff, Pedro Montealegre, and Ioan Todinca.
\newblock Algorithms parameterized by vertex cover and modular width, through
  potential maximal cliques.
\newblock {\em Algorithmica}, 80(4):1146--1169, Apr 2018.
\newblock URL: \url{https://doi.org/10.1007/s00453-017-0297-1}, \href
  {http://dx.doi.org/10.1007/s00453-017-0297-1}
  {\path{doi:10.1007/s00453-017-0297-1}}.

\bibitem{fomin2018fully}
Fedor~V. Fomin, Daniel Lokshtanov, Saket Saurabh, Micha{\l} Pilipczuk, and
  Marcin Wrochna.
\newblock Fully polynomial-time parameterized computations for graphs and
  matrices of low treewidth.
\newblock {\em ACM Transactions on Algorithms (TALG)}, 14(3):34, 2018.

\bibitem{fouquet1999bipartite}
Jean-Luc Fouquet, Vassilis Giakoumakis, and Jean-Marie Vanherpe.
\newblock Bipartite graphs totally decomposable by canonical decomposition.
\newblock {\em International Journal of Foundations of Computer Science},
  10(04):513--533, 1999.

\bibitem{fouquet1997n}
Jean-Luc Fouquet, Igor Parfenoff, and Henri Thuillier.
\newblock An {$\mathcal{O}$}(n) time algorithm for maximum matching in
  {$P_4$}-tidy graphs.
\newblock {\em Information Processing Letters}, 62(6):281--287, 1997.

\bibitem{gabow1991faster}
Harold~N. Gabow and Robert~E. Tarjan.
\newblock Faster scaling algorithms for general graph matching problems.
\newblock {\em Journal of the ACM (JACM)}, 38(4):815--853, 1991.

\bibitem{gallai1967transitiv}
Tibor Gallai.
\newblock Transitiv orientierbare graphen.
\newblock {\em Acta Mathematica Hungarica}, 18(1-2):25--66, 1967.

\bibitem{ganian2012using}
Robert Ganian.
\newblock Using neighborhood diversity to solve hard problems.
\newblock {\em arXiv preprint arXiv:1201.3091}, 2012.

\bibitem{ganian2013expanding}
Robert Ganian and Jan Obdr{\v{z}}{\'a}lek.
\newblock Expanding the expressive power of monadic second-order logic on
  restricted graph classes.
\newblock In {\em International Workshop on Combinatorial Algorithms}, pages
  164--177. Springer, 2013.

\bibitem{gardi2003efficient}
Fr{\'e}d{\'e}ric Gardi.
\newblock Efficient algorithms for disjoint matchings among intervals and
  related problems.
\newblock In {\em Discrete Mathematics and Theoretical Computer Science}, pages
  168--180. Springer, 2003.

\bibitem{giannopoulou2017polynomial}
Archontia~C. Giannopoulou, George~B. Mertzios, and Rolf Niedermeier.
\newblock Polynomial fixed-parameter algorithms: A case study for longest path
  on interval graphs.
\newblock {\em Theoretical Computer Science}, 689:67--95, 2017.

\bibitem{glover1967maximum}
Fred Glover.
\newblock Maximum matching in a convex bipartite graph.
\newblock {\em Naval Research Logistics Quarterly}, 14(3):313--316, 1967.

\bibitem{goldbergkarzanov}
Andrew~V. Goldberg and Alexander~V. Karzanov.
\newblock Maximum skew-symmetric flows and matchings.
\newblock {\em Mathematical Programming}, 100(3):537--568, 2004.

\bibitem{guo2009more}
Jiong Guo, Christian Komusiewicz, Rolf Niedermeier, and Johannes Uhlmann.
\newblock A more relaxed model for graph-based data clustering: s-plex editing.
\newblock In {\em International Conference on Algorithmic Applications in
  Management}, pages 226--239. Springer, 2009.

\bibitem{hopcroftkarp}
John~E. Hopcroft and Richard~M. Karp.
\newblock An {$n^{5/2}$} algorithm for maximum matchings in bipartite graphs.
\newblock {\em SIAM Journal on computing}, 2(4):225--231, 1973.

\bibitem{husfeldtgraphdistances}
Thore Husfeldt.
\newblock {Computing Graph Distances Parameterized by Treewidth and Diameter}.
\newblock In Jiong Guo and Danny Hermelin, editors, {\em 11th International
  Symposium on Parameterized and Exact Computation (IPEC 2016)}, volume~63 of
  {\em Leibniz International Proceedings in Informatics (LIPIcs)}, pages
  16:1--16:11, Dagstuhl, Germany, 2017. Schloss Dagstuhl--Leibniz-Zentrum fuer
  Informatik.
\newblock URL: \url{http://drops.dagstuhl.de/opus/volltexte/2017/6947}, \href
  {http://dx.doi.org/10.4230/LIPIcs.IPEC.2016.16}
  {\path{doi:10.4230/LIPIcs.IPEC.2016.16}}.

\bibitem{iwatatreedepth}
Yoichi Iwata, Tomoaki Ogasawara, and Naoto Ohsaka.
\newblock On the power of tree-depth for fully polynomial {FPT} algorithms.
\newblock In {\em LIPIcs-Leibniz International Proceedings in Informatics},
  volume~96. Schloss Dagstuhl-Leibniz-Zentrum fuer Informatik, 2018.

\bibitem{kellerhals2018parameterized}
Leon Kellerhals.
\newblock Parameterized algorithms for network flows.
\newblock Master's thesis, TU Berlin, 2018.
\newblock URL:
  \url{https://fpt.akt.tu-berlin.de/publications/thesis/MA-leon-kellerhals.pdf}.

\bibitem{knop2017simplified}
Du{\v{s}}an Knop, Martin Kouteck{\`y}, Tom{\'a}{\v{s}} Masa{\v{r}}{\'\i}k, and
  Tom{\'a}{\v{s}} Toufar.
\newblock Simplified algorithmic metatheorems beyond {MSO}: Treewidth and
  neighborhood diversity.
\newblock In {\em International Workshop on Graph-Theoretic Concepts in
  Computer Science}, pages 344--357. Springer, 2017.

\bibitem{kratschnelles}
Stefan Kratsch and Florian Nelles.
\newblock {Efficient and Adaptive Parameterized Algorithms on Modular
  Decompositions}.
\newblock In Yossi Azar, Hannah Bast, and Grzegorz Herman, editors, {\em 26th
  Annual European Symposium on Algorithms (ESA 2018)}, volume 112 of {\em
  Leibniz International Proceedings in Informatics (LIPIcs)}, pages
  55:1--55:15, Dagstuhl, Germany, 2018. Schloss Dagstuhl--Leibniz-Zentrum fuer
  Informatik.
\newblock URL: \url{http://drops.dagstuhl.de/opus/volltexte/2018/9518}, \href
  {http://dx.doi.org/10.4230/LIPIcs.ESA.2018.55}
  {\path{doi:10.4230/LIPIcs.ESA.2018.55}}.

\bibitem{lampis2012algorithmic}
Michael Lampis.
\newblock Algorithmic meta-theorems for restrictions of treewidth.
\newblock {\em Algorithmica}, 64(1):19--37, 2012.

\bibitem{liang1993finding}
Y.~Daniel Liang and Chongkye Rhee.
\newblock Finding a maximum matching in a circular-arc graph.
\newblock {\em Information Processing Letters}, 45(4):185--190, 1993.

\bibitem{madry2013navigating}
Aleksander Madry.
\newblock Navigating central path with electrical flows: From flows to
  matchings, and back.
\newblock In {\em Foundations of Computer Science (FOCS), 2013 IEEE 54th Annual
  Symposium on}, pages 253--262. IEEE, 2013.

\bibitem{mcconnell1999modular}
Ross~M. McConnell and Jeremy~P. Spinrad.
\newblock Modular decomposition and transitive orientation.
\newblock {\em Discrete Mathematics}, 201(1-3):189--241, 1999.

\bibitem{mnn}
George~B. Mertzios, Andr{\'e} Nichterlein, and Rolf Niedermeier.
\newblock Fine-grained algorithm design for matching.
\newblock Technical report, Technical Report, 2016.

\bibitem{mnnKernel}
George~B. Mertzios, Andr{\'e} Nichterlein, and Rolf Niedermeier.
\newblock {The Power of Linear-Time Data Reduction for Maximum Matching}.
\newblock In Kim~G. Larsen, Hans~L. Bodlaender, and Jean-Francois Raskin,
  editors, {\em 42nd International Symposium on Mathematical Foundations of
  Computer Science (MFCS 2017)}, volume~83 of {\em Leibniz International
  Proceedings in Informatics (LIPIcs)}, pages 46:1--46:14, Dagstuhl, Germany,
  2017. Schloss Dagstuhl--Leibniz-Zentrum fuer Informatik.
\newblock URL: \url{http://drops.dagstuhl.de/opus/volltexte/2017/8116}, \href
  {http://dx.doi.org/10.4230/LIPIcs.MFCS.2017.46}
  {\path{doi:10.4230/LIPIcs.MFCS.2017.46}}.

\bibitem{mertzios2018linear}
George~B. Mertzios, Andr{\'e} Nichterlein, and Rolf Niedermeier.
\newblock A linear-time algorithm for maximum-cardinality matching on
  cocomparability graphs.
\newblock {\em SIAM Journal on Discrete Mathematics}, 32(4):2820--2835, 2018.

\bibitem{micalivazirani}
Silvio Micali and Vijay~V. Vazirani.
\newblock An {$O(\sqrt{|V|} |E|)$} algorithm for finding maximum matching in
  general graphs.
\newblock In {\em 21st Annual Symposium on Foundations of Computer Science,
  1980}, pages 17--27. IEEE, 1980.

\bibitem{nevsetril2012sparsity}
J.~Ne{\v{s}}et{\v{r}}il and P.~Ossona de~Mendez.
\newblock {\em Sparsity ({Graphs}, {Structures}, and {Algorithms}), volume 28
  of {Algorithms} and {Combinatorics}}.
\newblock Springer, 2012.

\bibitem{PatrascuW10}
Mihai Patrascu and Ryan Williams.
\newblock On the possibility of faster {SAT} algorithms.
\newblock In Moses Charikar, editor, {\em Proceedings of the Twenty-First
  Annual {ACM-SIAM} Symposium on Discrete Algorithms, {SODA} 2010, Austin,
  Texas, USA, January 17-19, 2010}, pages 1065--1075. {SIAM}, 2010.
\newblock URL: \url{https://doi.org/10.1137/1.9781611973075.86}, \href
  {http://dx.doi.org/10.1137/1.9781611973075.86}
  {\path{doi:10.1137/1.9781611973075.86}}.

\bibitem{seidman1978graph}
Stephen~B. Seidman and Brian~L. Foster.
\newblock A graph-theoretic generalization of the clique concept.
\newblock {\em Journal of Mathematical sociology}, 6(1):139--154, 1978.

\bibitem{steiner1996linear}
George Steiner and Julian~S Yeomans.
\newblock A linear time algorithm for maximum matchings in convex, bipartite
  graphs.
\newblock {\em Computers \& Mathematics with Applications}, 31(12):91--96,
  1996.

\bibitem{tedder2008simpler}
Marc Tedder, Derek Corneil, Michel Habib, and Christophe Paul.
\newblock Simpler linear-time modular decomposition via recursive factorizing
  permutations.
\newblock In {\em International Colloquium on Automata, Languages, and
  Programming}, pages 634--645. Springer, 2008.

\bibitem{van2012approximation}
Ren{\'e} Van~Bevern, Hannes Moser, and Rolf Niedermeier.
\newblock Approximation and tidying—a problem kernel for s-plex cluster
  vertex deletion.
\newblock {\em Algorithmica}, 62(3-4):930--950, 2012.

\bibitem{yannakakis1982complexity}
Mihalis Yannakakis.
\newblock The complexity of the partial order dimension problem.
\newblock {\em SIAM Journal on Algebraic Discrete Methods}, 3(3):351--358,
  1982.

\bibitem{yu1993n}
Ming-Shing Yu and Cheng-Hsing Yang.
\newblock An {$\mathcal{O}$}(n) time algorithm for maximum matching on
  cographs.
\newblock {\em Information Processing Letters}, 47(2):89--93, 1993.

\bibitem{yuster2013maximum}
Raphael Yuster.
\newblock Maximum matching in regular and almost regular graphs.
\newblock {\em Algorithmica}, 66(1):87--92, 2013.

\bibitem{yuster2007minors}
Raphael Yuster and Uri Zwick.
\newblock Maximum matching in graphs with an excluded minor.
\newblock In {\em Proceedings of the eighteenth annual ACM-SIAM symposium on
  Discrete algorithms}, pages 108--117. Society for Industrial and Applied
  Mathematics, 2007.

\end{thebibliography}

\end{document}